\documentclass[11pt]{article}
\usepackage{liyang}
 
\usepackage{tikz}
\usepackage{verbatim}
\usepackage{cancel}


\newcommand{\cnote}[1]{\footnote{{\bf \color{cyan}Chin}: {#1}}}
\newcommand{\rnote}[1]{\footnote{{\bf \color{red}Rocco}: {#1}}}
\newcommand{\snote}[1]{\footnote{{\bf \color{blue}Sandip}: {#1}}}
\newcommand{\xnote}[1]{\footnote{{\bf \color{red}Xi}: {#1}}}

\usepackage{todonotes}

\makeatletter
\newtheorem*{rep@theorem}{\rep@title}
\newcommand{\newreptheorem}[2]{
\newenvironment{rep#1}[1]{
 \def\rep@title{#2 \ref{##1}}
 \begin{rep@theorem}\itshape}
 {\end{rep@theorem}}}
\makeatother
\theoremstyle{plain}

\newcommand{\deslen}{C}

\makeatletter
\newenvironment{proofof}[1]{\par
  \pushQED{\qed}%
  \normalfont \topsep6\p@\@plus6\p@\relax
  \trivlist
  \item[\hskip\labelsep
\emph{    Proof of #1\@addpunct{.}}]\ignorespaces
}{%
  \popQED\endtrivlist\@endpefalse
}
\makeatother

\newcommand{\ignore}[1]{}

\def\colorful{1}

\ifnum\colorful=1

\newcommand{\blue}[1]{{{\color{blue}#1}}}

\newcommand{\gray}[1]{{\color{gray}{#1}}}

\fi
\ifnum\colorful=0

\newcommand{\blue}[1]{{{#1}}}

\newcommand{\gray}[1]{{{#1}}}

\fi

\usepackage{boxedminipage}

\newreptheorem{theorem}{Theorem}
\newtheorem*{theorem*}{Theorem}
\newreptheorem{lemma}{Lemma}
\newreptheorem{proposition}{Proposition}
\newreptheorem{corollary}{Corollary}
\newtheorem*{noclaim*}{Claim}

\newcommand{\Bin}{\mathrm{Bin}}

\newcommand{\overallsc}{N}

\newcommand{\Del}{\mathrm{Del}}

\def\BMA{\textup{{\tt BMA}}\xspace}
\def\DET{\textup{{\tt FindEnd}}\xspace}
\def\CE{\textup{{\tt Coarse-Estimate}}\xspace}
\def\SE{\textup{{\tt Sharp-Estimate}}\xspace}

\def\Align{\textup{{\tt Align}}\xspace}

\newcommand{\dend}{\textup{\textsf{end}}}
\newcommand{\current}{\textnormal{current}\xspace}
\newcommand{\position}{\textup{\textsf{position}}\xspace}
\newcommand{\distance}{\textup{\textsf{distance}}\xspace}

\def\pos{\textup{\textnormal{last}}\xspace}
\def\reach{\textup{\textsf{reach}}\xspace}
\def\tail{\textup{\textsf{tail}}\xspace}
\def\sig{\textup{\textsf{sig}}\xspace}
\def\signature{\sig}
\def\sigg{\sig}
\def\Cyc{\textup{\textnormal{Cyc}}_s\xspace}
\def\nil{\textup{\textnormal{nil}}\xspace}
\usepackage{dsfont}

\begin{document}

\title{Polynomial-time trace reconstruction\\ 
in the low deletion rate regime\vspace{0.2cm}}


\author{
Xi Chen\thanks{Supported by NSF grants CCF-1703925 and IIS-1838154.} 
\\
Columbia University\\
xichen@cs.columbia.edu
\and
Anindya De\thanks{Supported by NSF grants CCF-1926872 and CCF-1910534.}
\\
University of Pennsylvania\\
anindyad@cis.upenn.edu
\and
Chin Ho Lee\thanks{Supported by a grant from the Croucher Foundation and by the Simons Collaboration on Algorithms and Geometry.}
\\
Columbia University\\
c.h.lee@columbia.edu
\and
Rocco A. Servedio\thanks{Supported by NSF grants CCF-1814873, IIS-1838154, CCF-1563155, and by the Simons Collaboration on Algorithms and Geometry.} 
\\
Columbia University\\
rocco@cs.columbia.edu
\and
Sandip Sinha\thanks{Supported by NSF grants  CCF-1714818, CCF-1822809, IIS-1838154, CCF-1617955, CCF-1740833, and by the Simons Collaboration on Algorithms and Geometry.}
\\
Columbia University\\
sandip@cs.columbia.edu
}

\date{}
\maketitle

\thispagestyle{empty}


\begin{abstract} 

In the \emph{trace reconstruction problem}, an unknown source string $x \in \{0,1\}^n$ is transmitted through a probabilistic \emph{deletion channel} which independently deletes each bit with some fixed probability $\delta$ and concatenates the surviving bits, resulting in a \emph{trace} of $x$. The problem is to reconstruct $x$ given access to independent traces. 

Trace reconstruction of arbitrary (worst-case) strings is a challenging problem, with the current state of the art for $\poly(n)$-time algorithms being the 2004 algorithm of Batu et al.~\cite{BKKM04}. This algorithm can reconstruct an arbitrary source string $x \in \{0,1\}^n$ in $\poly(n)$ time provided that the deletion rate $\delta$ satisfies $\delta \leq n^{-(1/2 + \varepsilon)}$ for some $\varepsilon > 0$.

In this work we improve on the result of \cite{BKKM04} by giving a $\poly(n)$-time algorithm for trace reconstruction for any deletion rate $\delta \leq n^{-(1/3 + \varepsilon)}$. Our algorithm works by alternating an alignment-based procedure, which we show effectively reconstructs portions of the source string that are not ``highly repetitive'', with a novel procedure that efficiently determines the length of highly repetitive subwords of the source string.

\end{abstract}

\newpage

\newpage

\setcounter{page}{1}


\section{Introduction} \label{sec:intro}

The \emph{trace reconstruction} problem was proposed almost twenty years ago in works of \cite{Lev01b, Lev01a, BKKM04}, though some earlier variants of the problem were already considered in the 1970s  \cite{Kalashnik73}. 
This problem deals with the \emph{deletion channel}, which works as follows: when an $n$-bit string (the source string) is passed through a deletion channel of rate $\delta$, each coordinate is independently deleted with probability $\delta$. 
The surviving $n' \leq n$ coordinates are concatenated to form the output of the channel, which is referred to as a \emph{trace} of the original source string; we write ``$\bz \sim \Del_\delta(x)$'' to indicate that $\bz$ is a trace generated from source string $x$ according to this probabilistic process.
As discussed in \cite{Mit09}, this channel provides an elegant formalization for the theoretical study of problems involving synchronization errors.

In the \emph{trace reconstruction problem}, independent traces are generated from an unknown and arbitrary source string $x \in \{0,1\}^n$, and the task of the algorithm is to reconstruct (with high probability) $x$ from its traces. 
The trace reconstruction problem is motivated by applications in several domains, including sensor networks and biology \cite{Mit09,ADHR12,DNAS,organick2018random}.
It is also attractive because it is a clean and natural ``first problem'' which already seems to capture much of the difficulty of dealing with the deletion channel.

The problem of trace reconstruction for an arbitrary (worst-case) source string $x$ has proved to be quite challenging.\footnote{We note that the average-case problem, in which the reconstruction algorithm is only required to succeed for a $1-o(1)$ fraction of all possible source strings in $\zo^n$, is much more tractable, with the current state of the art \cite{HPP18,HPPZ20} being an algorithm that uses $\exp(O(\log^{1/3} n))$ traces and runs in $\poly(n)$ time for any deletion rate $\delta$ that is bounded away from 1.}  \cite{BKKM04} gave an algorithm that runs in $\poly(n)$ time, uses $\poly(n)$ traces, and with high probability reconstructs an arbitrary source string $x \in \zo^n$ provided that the deletion rate $\delta$ is at most $n^{-(1/2 + \eps)}$ for some constant $\eps > 0$. 
Unfortunately, the trace reconstruction problem seems to quickly become intractable at higher deletion rates. Holenstein et al.~\cite{HMPW08} gave an algorithm  that runs in time $\exp(O(n^{1/2}))$ and uses $\exp(O(n^{1/2}))$ traces for any deletion rate $\delta$ that is bounded away from 1 by a constant, and this result was subsequently improved in simultaneous and independent works by \cite{DOS17,NazarovPeres17}, both of which gave algorithms with time and sample complexity $\exp(O(n^{1/3})).$  On the lower bounds side,  for $\delta = \Theta(1)$ successively stronger lower bounds on the required sample complexity were given by \cite{MPV14} and \cite{HL18}, with the current state of the art being a $\smash{\tilde{\Omega}(n^{3/2})}$ lower bound due to Chase \cite{Chase19}.

\medskip

{\bf The low deletion rate regime.} The positive result of \cite{DOS17} actually gives an algorithm that is faster than $\exp(O(n^{1/3}))$ if the deletion rate is sufficiently low: \cite{DOS17} shows that for $O(\log^3 n)/n \leq \delta \leq 1/2$, their algorithm runs in time $\exp(O(\delta n)^{1/3})$.  Consequently, for the specific deletion rate $\delta = n^{-(1/2 + \eps)}$, the \cite{DOS17} algorithm runs in time essentially $\exp(O(n^{1/6}))$, and \cite{DOS17} shows that no faster running time or better sample complexity is possible for any ``mean-based'' algorithm, a class of algorithms which includes those of \cite{DOS17,NazarovPeres17,HMPW08}.

Algorithmic approaches other than mean-based algorithms can provably do better at low deletion rates.
This is witnessed by the algorithm of Batu et al.~\cite{BKKM04} which, as described earlier, runs in $\poly(n)$ time and uses $\poly(n)$ samples at deletion rate $\delta = n^{-(1/2 + \eps)}$. 
The main algorithmic component of \cite{BKKM04} is a ``\emph{Bitwise Majority Alignment}'' (\BMA for short) procedure, which is further augmented with a simple procedure to determine the length of long ``runs'' (subwords of $x$ of the form $0^\ell$ or $1^\ell$ with $\ell\ge \sqrt{n}$).  
Roughly speaking, the \BMA algorithm maintains a pointer in each trace and increments those pointers in successive time steps, attempting to always keep almost all of the pointers correctly aligned together. 
The analysis of \cite{BKKM04} shows that the \BMA algorithm succeeds if the source string $x$ does not contain any long runs, but a challenge for the \BMA algorithm is that the pointers in different traces will inevitably become misaligned if $x$  does contain a long run $0^\ell$ or $1^\ell$; this is why the \cite{BKKM04} algorithm must interleave \BMA with a procedure to handle long runs separately.
 Intuitively, deletion rate $\delta = n^{-1/2}$ is a barrier for the \cite{BKKM04} analysis because if $\delta = \omega(n^{-1/2})$, then each trace is likely to have multiple locations where more than one consecutive bit of $x$ is ``dropped,'' which is problematic for the analysis of \BMA given in \cite{BKKM04}.  
 
To summarize:  given this state of the art from prior work, it is clear that alignment-based approaches can outperform mean-based algorithms at low deletion rates, but it is not clear whether, or how far, alignment-based approaches can be extended beyond the \cite{BKKM04} results.  Further incentive for studying the low deletion rate regime comes from potential applications in areas such as computer networks, where it may be natural to model deletions as occurring at relatively low rates.  These considerations motivate the results of the present paper, which we now describe.

\subsection{This work:  An improved algorithm for the low deletion rate regime}

The main result of this paper is an efficient algorithm that can handle significantly higher deletion rates than the \cite{BKKM04} algorithm.\ignore{Before stating our main theorem, some useful notation:  we write ``$\bz \sim \Del_\delta(x)$'' to mean that $\bz$ is a trace generated by passing $x$ through the deletion channel of rate $\delta$.  }
We prove the following:

\begin{theorem} [Efficient trace reconstruction at deletion rate $\delta \geq n^{-(1/3 + \eps)}$]\label{thm:main}
  Fix any constant $\eps > 0$ and let $\delta = n^{-(1/3 + \eps)}.$ There is an algorithm {\tt Reconstruct} that uses $O(n^{4/3})$ 
  independent traces drawn from $\Del_\delta(x)$ \emph{(}where $x \in \zo^n$ is arbitrary and unknown to {\tt Reconstruct}\emph{)}, runs in $O(n^{7/3})$
  time, and outputs the unknown source string $x$ with probability at least $9/10$.
\end{theorem}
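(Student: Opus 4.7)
The plan is to generalize the bitwise majority alignment approach of \cite{BKKM04} so that it tolerates the $\Theta(n^{1/3-2\varepsilon})$ ``double-deletion'' events---pairs of consecutive positions both deleted---expected per trace at deletion rate $\delta = n^{-(1/3+\varepsilon)}$. (Deletions of three or more consecutive positions occur only $o(1)$ times per trace, so they can essentially be ignored.) The algorithm reconstructs $x$ from left to right while maintaining, in each of the $O(n^{4/3})$ sampled traces, a pointer that is (with high probability) correctly aligned to the last position of $x$ already reconstructed. It alternates between two subroutines depending on the local structure of the upcoming portion of $x$.

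The first subroutine is an alignment-based reconstruction in the spirit of \BMA, but strengthened to handle double deletions. Having reconstructed some prefix $x[1..i]$, we examine the next window of bits in each trace just beyond that trace's pointer, and determine the next $W = \Theta(\log n)$ bits of $x$ by a matching-based voting procedure. So long as the corresponding window of $x$ is not ``highly repetitive''---meaning it disagrees with each of its shifts by $1$ or $2$ positions on a noticeable fraction of coordinates---almost every trace's window will either reveal the next $W$ bits of $x$ exactly, or reveal a slight shift thereof that can be detected and corrected by comparing against the previously reconstructed context. One should be able to show that this yields the correct window with probability $1 - o(1/n)$ per step, and simultaneously leaves the pointers correctly aligned in a $1-o(1)$ fraction of traces so that errors do not accumulate over the $\Theta(n/W)$ steps.

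The second subroutine is invoked when we detect that $x$ has entered a highly repetitive subword of the form $p^k$ for some short period $p$ (say $|p| \leq C$ for a constant $C$ depending on $\varepsilon$). Here the first subroutine is unreliable because a trace whose pointer is off by a multiple of $|p|$ positions is essentially indistinguishable from one that is correctly aligned. We first identify the period $p$ (which is stable across traces), and then determine $k$ by examining the empirical distribution of the length of the repetitive stretch within each trace: the expected length is $(1-\delta)k|p|$ with fluctuations of order $\sqrt{\delta k |p|}$, so averaging over $\poly(n)$ traces and rounding should yield $k$ exactly. Finally, we advance each trace's pointer past the repetitive region by locating in each trace the first bit that does not continue the pattern $p$, and then resume the first subroutine.

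The delicate part of the analysis will be the interface between the two subroutines: one must define ``highly repetitive'' precisely enough that (a) the alignment-based subroutine provably succeeds whenever the upcoming window is \emph{not} highly repetitive, and (b) the transition into and back out of a highly repetitive region can be detected from the traces alone, early enough to avoid a cascade of misalignment errors. The error budget is tight---with $\Theta(n^{1/3-2\varepsilon})$ double deletions per trace and $\Theta(n/W)$ reconstruction steps, even an occasional failure to realign correctly after a double-deletion event could accumulate into substantial drift between the reconstructed string and $x$. The heart of the argument must therefore show that double deletions occurring within non-highly-repetitive stretches of $x$ are almost always correctable by the alignment subroutine, while double deletions inside repetitive stretches are safely absorbed into the length estimate computed by the second subroutine.
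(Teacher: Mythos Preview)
Your high-level architecture---alternating an alignment-based subroutine on non-repetitive stretches with a length-measuring subroutine on highly repetitive ones---matches the paper's. The serious gap is in your second subroutine.

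You propose to determine the length $k|p|$ of a repetitive block $p^k$ by, in each trace, ``locating the first bit that does not continue the pattern $p$'' and then averaging the resulting lengths. This does not work when $|p|\ge 2$. A single deletion anywhere inside the block creates a local $|p|$-bit subword that is \emph{not} a cyclic shift of $p$ (this is essentially the content of the paper's \Cref{trivialclaim}), so the ``first bit that breaks the pattern'' in a typical trace sits near the \emph{first deletion} inside the block, at distance roughly $1/\delta = n^{1/3+\varepsilon}$ from the block's left end---not at the block's right end, which may be $\Theta(n)$ away. Thus your per-trace length estimate is not $(1-\delta)k|p|$ with fluctuations $\sqrt{\delta k|p|}$ as you claim; it is instead an essentially unrelated quantity with the wrong mean. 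The same issue wrecks your proposed realignment step (``advance each trace's pointer past the repetitive region by locating the first bit that does not continue the pattern''): those pointers will land at scattered spurious breaks deep inside the block, not at its end.

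The paper's \DET procedure handles exactly this difficulty, and it is the main new algorithmic contribution. The idea is that spurious breaks occur at \emph{independent} random locations in different traces, whereas the true end of the desert occurs at roughly the \emph{same} location $(1-\delta)\dend$ in every trace. So one first does a \emph{coarse estimation}: scan short overlapping windows of length $O(\sigma)$ (with $\sigma=\tilde O(\sqrt{\delta n})\ll m$) and declare a window to contain the end only if a \emph{majority} of $O(1/\varepsilon)$ traces exhibit a non-cyclic-shift subword there---this filters out the spurious breaks. Having localized $\dend$ to within $\pm 2\sigma$, one then recovers an $O(\sigma)$-bit ``signature'' string straddling the end and uses it in an \Align routine whose output, crucially, has expectation $(1-\delta)\dend\pm o(1)$; averaging $\tilde O(n^{2/3-\varepsilon})$ such outputs and rounding recovers $\dend$ exactly. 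Getting this expectation right requires a delicate ``discounting probability'' trick (the paper's \Cref{eq:balance}) and the combinatorial \Cref{comblemma}; none of this is visible from your sketch.

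A secondary issue: your framing in terms of ``double deletions'' (two \emph{consecutive} positions deleted) is not the right invariant. What matters for alignment is how many deletions fall in a window of length $\Theta(m)=\Theta(n^{1/3})$, and that number is typically $\Theta(1/\varepsilon)$, not $2$. Correspondingly, the paper's notion of ``highly repetitive'' rules out periods up to $C=\lceil 100/\varepsilon\rceil$, not just $1$ and $2$; a pointer that has drifted ahead by $3$ in a period-$3$ region will never resynchronize, and such drifts do occur at this deletion rate.
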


Note that any deletion rate $\delta < n^{-(1/3 + \eps)}$ can of course be handled, given \Cref{thm:main}, by simply deleting additional bits to reduce to the $\delta = n^{-(1/3 + \eps)}$ case.
Note further that any desired success probability $1-\kappa$ can easily be achieved from \Cref{thm:main} by running {\tt Reconstruct} $O(\log (1/\kappa))$ times and then taking a majority vote.   

At a high level, the {\tt Reconstruct} algorithm works by interleaving two different subroutines.
\begin{flushleft}
\begin{itemize}
\item The first subroutine is (essentially) the \BMA algorithm, for which we provide an improved analysis, showing that \BMA successfully reconstructs any string that does not contain a long subword
  (of length at least $M=2m+1$ with $m=n^{1/3}$) 
  that is a prefix of $s^{\infty}$ for some short (constant-length) bitstring $s$.
We refer to long and ``highly-repetitive''  subwords of $x$ of this form
  as ``\emph{$s$-deserts}'' of $x$; see \Cref{def:desert} for a detailed definition.

\item The second subroutine is a new algorithm which we show efficiently 
  determines the length of an $s$-desert in the source string $x$. 
\end{itemize}
\end{flushleft}
 Thus, two novel aspects of this work that go beyond \cite{BKKM04} are (i) our improved analysis of \BMA, and (ii) our new procedure for efficiently measuring deserts (the analogous component of the \cite{BKKM04} algorithm could only measure runs, which correspond to $s$-deserts with $|s|=1$).    
 
 We believe that it may be possible to further extend the kind of ``hybrid'' approach that we employ in this paper to obtain efficient trace reconstruction algorithms that can handle even larger deletion rates $\delta$.  However, there are some significant technical challenges that would need to be overcome in order to do so.  We describe some of these challenges at the end of the next section, which gives a more detailed overview of our approach.


\section{Overview of our approach} \label{sec:overview}
As alluded to in the introduction, at a high level our algorithm carries out a careful interleaving of two procedures, which we call \BMA and \DET. In this section we first give a high-level overview of the procedure \BMA as well as our improved analysis. Then we give a high-level overview of \DET, and finally we explain how these two procedures are interleaved. We close with a brief discussion of possibilities and barriers to further progress.

\subsection{Overview of \BMA}

The procedure \BMA is exactly the same as the bitwise majority alignment algorithm of \cite{BKKM04}; our new contribution regarding \BMA is in giving a more general analysis. To explain the high level idea, let us fix the deletion rate $\delta = n^{-(1/3 + \epsilon)}$ 
and a constant $C=$ $\lceil 100/\eps\rceil$.
Let 
\[
m=n^{1/3} \quad \text{and} \quad M=2m+1.
\]

 The \BMA procedure operates on a sample of some\ignore{\rnote{All the ``$\overallsc$'' 's (they are defined using a macro overallsc) in this section were ``$q$''.  In \Cref{sec:bma} we write ``$q$  but I'm not sure why; in the main algorithm in \Cref{sec:algorithm_overall} we write $\overallsc$. Let's hammer out what letter we want to use where - I guess we should be consistent throughout? I think I vote for ``$\overallsc$'' everywhere, which would mean changing ``$q$'' to the macro $\overallsc$'' in \Cref{sec:bma}. (This seems to go well with the closely related parameter $R$.)}}  $\overallsc=O(\log n)$ traces 
  $\by^1, \ldots, \by^\overallsc$ drawn independently from $\Del_\delta(x)$ before the procedure begins its execution. Note that for any $i \in [\overallsc]$ and 
any position $\ell$ in the trace $\by^i$, there is a position $f_i(\ell)$ satisfying $\ell\le f_i(\ell)\le n-1$ 
  in the target string $x=(x_0,\dots,x_{n-1})$ that maps to $\ell$ under the deletion process.\footnote{It will be convenient for us to index a binary string~$x \in \zo^\ell$
  using $[0:\ell-1]$ as $x=(x_0,\dots,x_{\ell-1})$.} 
The high level idea of \BMA is to maintain pointers $\current_1 ,\ldots , \current_\overallsc$,
  with $\current_i$ pointing to a position in $\by^i$, 
  such that \emph{most of them are correctly aligned} --- i.e., at the beginning of each time step $t$, 
  $t=0,1,\ldots$, 
  as we try to determine $x_t$,
we have $f_i(\current_i)= t$ for most $i \in [\overallsc]$.  Note that if this alignment guarantee were to hold for more than half of the traces for all $t=0,1,\ldots,n-1$, then we could reconstruct the unknown string $x$ by taking a majority vote of $\by^i_{\current_i}$ in each time step. 
 Indeed we show that this happens with high probability over 
  the randomness of $\by^1,\ldots,\by^\overallsc$ when $x$ does not contain an $s$-desert
  for any string $s\in \{0,1\}^{\le C}$ (i.e. a subword of length at least $M=2m+1$
  that is a prefix of $s^{\infty}$).
In contrast, the analysis of \cite{BKKM04} requires the deletion rate to be 
  $n^{-(1/2+\eps)}$ but works as long as $x$ does not contain a run of $0$'s or $1$'s
  (or $s$-deserts with $|s|=1$ in our notation)
  of length at least $\sqrt{n}$.

To explain \BMA in more detail, let us initialize $t=0$ and pointers 
  $\current_1(0),\ldots,\current_\overallsc(0)$ to position $0$.
(Note that most pointers are correctly aligned as desired 
  given that $\delta=n^{-(1/3+\eps)}$ and thus, $x_0$ is not deleted in most traces and  $f_i(\current_i(0))=f_i(0)=0$ for most
  $i$.)
The way the pointers are updated is as follows: At each time step $t$, we let $w_t$ be the majority element of the $\overallsc$-element multiset $\smash{\{\by^i_{\current_i(t)}\}_{i\in [\overallsc]}}$.  
For those traces $\by^i$ with $\smash{\by^i_{\current_i(t)} = w_t}$ (i.e., the bit of $\by^i$ at the current pointer is the majority bit), we move the pointer to the right by $1$, i.e.~we set $\current_i(t+1)\leftarrow \current_i(t)+1$; otherwise the
  pointer stays the same, i.e., we set $\current_i(t+1) \leftarrow \current_i(t)$.
Next we increment $t$ and start the next round, repeating until $t=n$ when
  $\BMA$ outputs the string $(w_0,\ldots,w_{n-1})$. 
 
 For intuition we observe that if most of the pointers were aligned at the beginning of time step $t$ (i.e.,
   $f_i(\current_i(t))=t$ for most $i\in [\overallsc]$), then $w_t=x_t$ is indeed the next bit in $x$. 
Moreover, if $\current_i(t)$ is aligned and $w_t=x_t$, then 
  moving $\current_i$ to the right by $1$ is justified by noting that
  most likely $x_{t+1}$ is not deleted in $\by^i$ (with probability $1-\delta$),
  and when this happens
  $f_i(\current_i(t+1))=t+1$ so  $\current_i$ remains aligned at the beginning of the next time step. 
  
In more detail, our analysis shows that when  $x$ does not contain
  an $s$-desert for any $s\in \{0,1\}^{\le C}$ 
  $\BMA$ maintains the following invariants at the beginning of time step $t=0,1,\ldots,
  n $:
\begin{enumerate}
\item At time $t$, $\BMA$ has reconstructed $x_0, \ldots, x_{t-1}$ 
  correctly as $w_0,\ldots,w_{t-1}$. 
\item For every trace $\by^i$, $i\in [\overallsc]$, it holds that $ f_i(\current_i(t))\ge t$. 
\item Finally, $\sum_{i\in [\overallsc]} \big(f_i(\current_i(t))-t\big)\le  2\overallsc/C$.
\end{enumerate}
The intuitive meaning behind conditions (2) and (3) is as follows: while (2) says that 
  the ``original position'' of $\current_i(t)$ never falls behind $t$, condition (3) ensures that on average, the original positions of these pointers do not surpass $t$ by too much. In fact, since $C$ is a large constant, most of the pointers are perfectly aligned, i.e.,~they satisfy $f_i(\current_i(t))=t.$


We now discuss how the invariants (1), (2) and (3) are maintained. First, we observe that invariant (1) 
  for time step $t+1$, i.e., $w_t=x_t$, follows immediately from  (3) at time step $t$.
Invariant (2) for time step $t+1$ follows almost immediately from (2) at $t$ and $w_t=x_t$.
(If $f_i(\current_i(t))>t$, then $f_i(\current_i(t+1))\ge f_i(\current_i(t))\ge t+1$ given that both
  $f_i$ and $\current_i$ are nondecreasing; if $f_i(\current_i(t))=t$ is aligned at time step $t$, then
  $w_t=x_t$ implies $\current_i(t+1)=\current_i(t)+1$ and thus, $f_i(\current_i(t+1))\ge t+1$.)
The main challenge is to show that invariant (3) is maintained.
While this is not true for~a~general string $x$, we show that this holds  with high probability (over $\by^1,\ldots,\by^\overallsc\sim \Del_\delta(x)$) for any string $x$ which does not have an $s$-desert for any $s\in \{0,1\}^{\le C}$.
  (We note here that the value of $m$ is selected so as to satisfy $m\delta\ll 1$; on the other hand, when we discuss the \DET procedure below, we will see that we also require $m$ to satisfy $m \gg \sqrt{\delta n}$.)

In a nutshell, the main proof idea for (3) is to exploit the fact that when we draw     
  $\by^1,\ldots,\by^\overallsc \sim \Del_{\delta}(x)$, with high probability they satisfy two properties:
(i) for every $\by^i$ and
  every subword of roughly $C^2m$ consecutive positions in the original string $x$, no more than $C$ positions
 within the subword are deleted in the generation of $\by^i$;
(ii) for every subword of roughly $m$ consecutive positions in $x$,
  the number of $\by^i$ that have at least one deletion in the subword is no more than $\overallsc/C^3$.
These two properties can be shown using straightforward probabilistic arguments by taking advantage of the aforementioned 
  $m\delta\ll 1$.\ignore{ (e.g., this implies that for any window of length $m$,
  the probability of a trace having at least one deletion is at most $m\delta=1/n^{\eps}$; 
  it follows from a Chernoff bound that the second property holds with high probability).}
Using these two properties, a detailed (non-probabilistic) argument shows that  \BMA can reconstruct the string $x$ with high probability if $x$ contains no $s$-desert. 

The above discussion sketches our argument  
that if the target string $x$ does not have an $s$-desert, then \BMA correctly reconstructs  $x$. More generally, our arguments show that if $x$ does have an $s$-desert, then \BMA correctly
  reconstructs the prefix of $x$ up to the position when an $s$-desert shows up:
Let $r$ be  
  the first position in $x$ that is 
  ``\emph{deep in an $s$-desert}''; this is the first position in $x$ such that
  $x_{[r-m:r+m]}$\footnote{For a string $x \in [0:n-1]$ integers $0 \leq a < b \leq n-1$, we write $x_{[a:b]}$ to denote the \emph{subword} $(x_a, x_{a+1}, \dots, x_b)$.},   
  the length-$M$ subword of $x$ 
  centered at $r$, is an $s$-desert. Then
  \BMA correctly reconstructs the prefix of $x$ up to position $r+m$.
Having reached such a position, it is natural to now ask --- ``how do we determine the \emph{end} of this desert?''. This naturally leads us to the next procedure \DET.

\ignore{
}


\subsection{Overview of \DET} 
Suppose that $x$ has an $s$-desert with $|s|=k\le C$, so \BMA reconstructs the length-$(r+m+1)$ prefix of $x$,
  where $r$ is the first position that is ``deep in the $s$-desert'' (note that it is easy to determine the position $r$ from the output of \BMA).  The algorithm \DET takes as input the prefix $x_{[0:r+m]}$ of $x$ and the location $r$, 
  and its task is to compute the end of the $s$-desert:
  the first position $\dend\ge r+m$ such that 
  $x_{\dend+1}\ne x_{\dend-k+1}$.
The \DET algorithm is rather involved but at a high level it consists of two
  stages: an initial \emph{coarse estimation} 
  of the end of the desert followed by 
  \emph{alignments} of traces from $\Del_\delta(x)$ 
  with the end of the desert 
  (using the coarse estimate).

\medskip

{\bf Coarse estimation:} The goal of the coarse estimation stage is to 
  identify an integer $\hat{\beta}$ that~is close to $(1-\delta)\dend$:
  $|\hat{\beta}-(1-\delta)\dend|\le 2\sigma$, where $\sigma := \tilde{O}(\sqrt{\delta n})\ll m$
  is basically how far an entry $x_i$ of $x$ can deviate from its expected location $(1-\delta)i$ in a typical trace
  $\by\sim \Del_\delta(x)$.
Intuitively, $\hat{\beta}$ is an estimation of the location of $x_{\dend}$
in a trace $\by\sim \Del_\delta(x)$ that contains it, i.e., when $x_{\dend}$ is not deleted in $\by$.
To do this, we draw $\alpha=O(1/\eps)$ many traces $\by^1,\dots,\by^\alpha\sim \Del_\delta(x)$.
Roughly speaking, we split each~trace $\by^i$ into overlapping intervals of length $4\sigma$.
The first interval starts at $(1-\delta)r$ and each successive 
interval shifts to the right by $\sigma$ (so it 
  overlaps with the previous interval by $3\sigma$).
Since 
	$m\gg \sigma = \tilde{O}(\sqrt{\delta n})$ (which is
one of the bottlenecks that requires $\delta \ll n^{-1/3}$), 
  the $s$-desert is unlikely to end before $(1-\delta)r$ in a trace $\by\sim \Del_\delta(x)$
  and must end in one of constantly many 
   intervals with high probability, by the choice of $\sigma$.
To identify one such interval, we make the following observation.
Let $\Cyc$ be the set of all $k$-bit strings that can be obtained as cyclic shifts of $s$.
Given $\dend$ as the end of the $s$-desert that starts
  at $x_{r-m}$, every $k$-bit subword of $x_{[r-m:\dend]}$ is in $\Cyc$
  but $(x_{\dend-k+2},\ldots,x_{\dend+1})\notin \Cyc$, and these $k\le C$ bits
  will most likely remain in a trace given the low deletion rate.
This motivates us to look for the leftmost interval $I^*$ such that in at least half of $\by^1,\dots,\by^\alpha$, 
it holds that $\by^i_{I^*}$ contains a $k$-bit subword not in $\Cyc$.
We show that 
  with high probability, setting $\smash{\hat{\beta}}$ to be
  the right endpoint of $I^*$ gives us 
    a coarse estimate of $(1-\delta)\dend$ up to an accuracy of $\pm 2\sigma$.

In addition to obtaining $\hat{\beta}$,
  the coarse estimation stage recovers the following $8\sigma$-bit subword of $x$:
  $(x_{\dend-k+2},\ldots,x_{\dend-k+8\sigma+1})$, which we will refer to as the 
  \emph{tail string}
  of the $s$-desert and denote by $\tail\in \{0,1\}^{8\sigma}$.
To this end, we draw another $\alpha=O(1/\eps)$ fresh traces $\by^1,\ldots,\by^\alpha$ and examine
  the subword of each $\by^i$ of length $6\sigma$ centered at location $\hat{\beta}$.
Each $\by^i$ looks for the first $k$-bit subword in this interval that is not in $\Cyc$
  and votes for its $8\sigma$-bit subword that starts at this non-cyclic shift as its candidate for $\tail$.
We show that with high probability, the string with the highest votes is exactly $\tail$.
(We note that both parts of this coarse estimation procedure require that with high probability, any fixed interval of length $O(\sigma)$ in $x$ does not get any deletions in a random trace, i.e., $\sigma \delta\ll 1$. This follows from the two
  constraints $m\delta\ll 1$ and $m\gg \sigma$.)

\medskip

{\bf Alignments:} 
Suppose the first stage succeeds in computing $\hat{\beta}$ and $\tail \in \{0,1\}^{8\sigma}$.
The second stage is based on a procedure called $\Align$ 
which satisfies two crucial criteria. These criteria are as follows:  if $\Align$ is given an input trace $\by \sim \Del_\delta(x)$, then (a) with \emph{fairly high} probability (by which we mean 
$1-n^{-\Theta(\eps)}$ for 
the rest of the overview) it returns a location $\ell$ in $\by$ such that 
 $\by_{\ell}$ corresponds to 
 $x_{\dend}$ in $x$, 
and moreover (b) the expectation of $\ell$ (over the randomness of $\by$)   is a ``sharp estimate'' of $(1-\delta)\dend$
  that is accurate up to an additive $\pm 0.1$ error.\footnote{We note that item (b) is not an immediate consequence of item (a). In more detail, the failure probability of (a) is roughly $1/n^{\Theta(\eps)}$, but if when $\Align$ fails in (a) it returns a location that is inaccurate by $\gg n^{\Theta(\eps)}$ positions, then (b) would not follow from (a). 
Indeed significantly more effort is required in our analysis to ensure (b).} 
To pin down the exact end of the $s$-desert, 
  $\DET$ simply draws $\smash{\tilde{O}(n^{2/3-\eps})}$ many
  traces, runs $\Align$ on each of them and computes the average $\smash{\hat{\ell}}$ of the locations
  it returns.
It is easy to show that rounding $\hat{\ell}/(1-\delta)$ to the nearest integer gives $\dend$ with high probability.

{
The case when $k=|s|=1$ (so the desert is a long subword consisting either of all 0's or of all 1's) is significantly easier (and was implicitly handled in \cite{BKKM04}), so in the following discussion we focus on the case when $k=|s|\ge 2$ and the desert has a more challenging structure.  For this case our $\Align$ procedure uses a new idea, which is that of a ``\emph{signature}.'' 
A \emph{signature} is a subword of $x$, denoted $\sig$, of length between $2k$ and $8\sigma$ that starts at the 
  same location $x_{\dend-k+2}$ as $\tail$ (so $\sig$ is contained in $\tail$, since $|\tail|=8\sigma$) and either ends at a location  $d$ which is the smallest integer $d\in [\dend+k+1:\dend+8\sigma-k+1]$
  such that the $k$-bit subword that ends at $d$ is not in $\Cyc$, or has length $8 \sigma$
  if no such $d$ exists (in this case $\sig$ is the same as $\tail$). We remind the reader that 
  the first $k$-bits of $\tail$, and hence also of $\sig$, is a string not in $\Cyc$, and the same is true of the last $k$ bits of $\sig$ if its length is less than $8 \sigma.$
\ignore{
The signature of the $s$-desert that contains $r$, denoted by $\sig$, is a 
  subword of $x$ which \rnote{update discussion of signature; discuss that $k=1$ gets handled separately} (i) starts at $x_{\dend-k+2}$ (the same starting point as $\tail$)
  and (ii) ends at the first position $d\ge \dend+k$ such that $(x_{d-k+1},\ldots,x_d)
  \notin \mathsf{Cyc}_s$.\footnote{We will see that because of a pre-processing step, discussed in \Cref{sec:algorithm_overall}, we may assume that the string $x$ does indeed contain a signature.}
So $\sig$ has length at least $2k-1$ and has the property that neither the
  first nor the last $k$ bits of $\sig$ belong to $\mathsf{Cyc}_s$.  
The $\Align$ procedure considers two cases: the ``crowded'' case when $\sig$ has 
length at most $8\sigma$ \rnote{check later and settle whether this is $8\sigma$ or $7 \sigma$ or what} and the ``distanced'' case when $\sig$ is at least $8\sigma$ \rnote{same} or longer.\ignore{
} Note that one can tell from $\tail$ which case it is and furthermore, in the ``crowded'' case one 
  can easily obtain $\sig$ from $\tail$.   
}


Given a trace $\by\sim \Del_\delta(x)$, \Align (roughly speaking) attempts
  to locate the image of $x_\dend$ in $\by$ by locating the image of $\sig$ within an interval in 
  $\by$ of length $O(\sigma)$ around $\smash{\hat{\beta}}$.
  In a bit more detail, it checks whether the restriction of $\by$ to a certain interval $J$ around $\smash{\hat{\beta}}$ is of the form $w \circ \sig \circ v$, such that the first $k$ bits of $\sig$ is the leftmost $k$-bit subword of $\by_J$ that is not in $\Cyc$.
  If $\by$ does not satisfy this condition then \Align discards that trace and outputs \nil.
  We note that if the only goal of \Align were to locate a position $\ell$ in $\by$ such that with fairly high probability $\by_\ell$ corresponds to $x_\dend$ (i.e.~item (a) above), then in all other cases (i.e.~whenever $\by$ does satisfy the above condition) \Align could return the index of the $(k-1)$-th bit of $\sig$ in $\by_J$.
(By doing this, $\Align$ always returns the correct position whenever 
  the subword of $x$ of length $O(\sigma)$ centered at $\dend$ has no deletion in $\by$
  and $x_\dend$ deviates from its expected location in a trace by at most 
  $\sigma$ in $\by$, which happens with probability $O(\sigma\delta)=n^{-\Theta(\eps)}$.)  
However, it turns out that $\Align$ must proceed in a slightly (but crucially) different way in order to additionally satisfy item (b) above (i.e.,~have the expected value of its output locations provide an accurate ``sharp estimate'' of $(1-\delta)\dend$).
The actual execution of \Align is that in the case when $\by_J$ does satisfy the above condition, 
  \Align returns the index of the $(k-1)$-th bit of $\sig$ in $\by_J$ with high probability and with some small remaining probability (the precise value of which depends on the location of $\sig$ within $\by_J$), \Align opts to still output $\nil.$ 
  A detailed analysis, which we provide in \Cref{sec:kistwo}, shows that this \Align procedure satisfies both criteria (a) and (b) described above.
}
  
\ignore{ 
  
START IGNORE
   
Consider the two cases:\xnote{The following two paragraphs still need to be updated.}   

THIS IS ALL BEING IGNORED

\begin{flushleft}
\begin{itemize}
\item  
\gray{
{\bf ``Crowded'' case:} In the ``crowded'' case, 
$|\sig|\le 8\sigma$ and 
the algorithm will have obtained 
  it from $\tail$ from the coarse estimation stage.
Now, observe that since $\delta \ll n^{-1/3}$, the chances of having three or more deletions in an 
  $O(\sigma)$-length window of a random trace is $\ll 1/n$. Thus, having three or more deletions in the window contributes only negligibly to the error in estimating the expected location  of the image of the signature in a random trace. Roughly speaking, we devise statistical tests which get rid of some ``bad traces; by doing this we can ensure that (i) traces with only one deletion in the signature (at most a $\delta \sigma$ fraction of traces) contribute error at most $O(|s|)=O(1)$ to the estimate of the expected location; and (ii) traces with two deletions in the signature can contribute error at most $O(\sigma)$ to the estimate of the expected location (but only an $O(\delta^2 \sigma)$ fraction of traces have two deletions in the signature, since its length is at most $8\sigma$ in this crowded case).  Hence overall this ensures that the total error in our estimate of the location is at most $O(\delta \sigma k + \delta^2 \sigma^2) = o(1),$ as desired.
  }

STILL BEING IGNORED

\item
\gray{
{\bf Distanced case:} In this case, the signature is at least $8\sigma$ or longer. In this case the algorithm will not have obtained the entire signature string from the coarse estimation procedure (since it may be very long), but it will ``know'' that the signature's length is at least $8\sigma.$  The analysis of this case turns out to be quite similar to the crowded case except for some changes in the single deletion case, and here too we achieve an estimate of the expected location of the right end of the desert that is accurate to within $o(1)$.
}
\end{itemize}
\end{flushleft}

END IGNORE
}


\subsection{The overall algorithm}
The overall algorithm works by alternately running \BMA and \DET. It starts with \BMA, which draws $N=O(\log n)$ traces of $x$ and returns
  the first position $r$ in $x$ that is deep in a desert as well as
 the prefix $w=x_{[0:r+m]}$ of the target string $x$.
Then the algorithm runs
  $\DET$ to compute \dend, the right end of the desert.
Note that the execution of \BMA will misalign some small fraction of the traces it uses, but these errors do not affect \DET as \DET is run using fresh traces.

With \dend\ from \DET, the algorithm has now reconstructed the prefix 
  $x_{[0:\dend]}$ by extending $x_{[0:r+m]}$.
Next the algorithm runs $\BMA$ again on $N$ traces that are, ideally, drawn from 
  $x_{[\dend+1:n-1]}$, in order to reconstruct the next segment of $x$ until a new desert shows up (at which point 
  the algorithm repeats).
These traces are obtained by running the $\Align$ procedure used by $\DET$ on
  $N$ fresh traces $\by^1,\ldots,\by^N$ of $x$.
Let $\ell_i$ be the output of $\Align$ running on $\by^i$.
As noted in (a) earlier,  
  all but a small fraction of $\ell_i$'s
  are such that the desert ends at $\smash{\by^i_{\ell_i}}$ in $\by^i$.
We then run $\BMA$ on $\smash{\bz^1,\ldots,\bz^N}$, where
  $\bz^i$ is the suffix of $\by^i$ starting at $\ell_i+1$ for each $\smash{i}$.
Even though $\smash{\bz^1,\ldots,\bz^N}$ are \emph{not} exactly $N$ fresh traces of $x_{[\dend+1:n-1]}$
  (since a small and arbitrary fraction of $\by^i$ might be misaligned), 
   \BMA is able to succeed because of a crucial \emph{robustness} property.   This property is that the correctness guarantee of \BMA holds even when a small and ``adversarially'' picked constant fraction of the $N$ traces given to it are misaligned; intuitively, \BMA enjoys this robustness because it works in each time step by taking a majority vote over its input traces, so as long as a substantial majority of the traces are correctly aligned, even a small constant fraction of adversarial traces cannot affect its correctness. The algorithm continues alternating between \BMA and \DET, and is thereby able to reconstruct the entire target string $x$.

\subsection{Discussion}

We believe that it may be possible to improve on our algorithmic results (specifically, to handle larger $1/\poly(n)$ deletion rates) by refining our basic approach of alternating between successively (i) exactly inferring non-desert subwords of the string (as currently done by \BMA), and (ii) exactly computing the length of the intervening deserts 
  (as currently done by \DET).
However, some nontrivial obstacles are encountered in attempting to go beyond $\delta = n^{-1/3}$ to improve the current result, which we now briefly discuss.

Roughly speaking, our algorithm takes advantage of the  fact that with $\delta = n^{-(1/3 + \eps)}$, we have $\sqrt{\delta n}  \ll 1/\delta$ and as a result it is easy to come up with the initial ``coarse estimate'' of the right end of a particular desert given fresh traces as the input to \DET.   If $\delta = n^{-(1/3 - \eps)}$ then we have $\sqrt{\delta n} \gg 1/\delta$ and it is not  clear how a procedure like \DET can reliably find any location within a particular desert of interest given fresh traces (which seems to be a prerequisite to performing ``coarse estimation'').
One possible way around this is to not draw fresh traces at the start of each execution of \DET, but if the same traces are used for repeated runs of both \BMA and
\DET then it is not clear how to prevent alignment errors from accumulating, and such accumulated errors seem problematic for running \DET successfully. A natural approach to pursue along these lines is to  develop an improved version of \DET which can handle some substantial number of ``misaligned'' or ``adversarial'' input traces; this is an interesting direction for future work.




\section{Preliminaries} \label{sec:preliminaries}

\noindent {\bf Notation.}
Given a positive integer $n$, we write $[n]$ to denote $\{1,\ldots,n\}$.
Given two integers $a\le b$ we write $[a:b]$ to denote $\{a,\ldots,b\}$. 
We write $\ln$ to denote natural logarithm and $\log$ to denote logarithm to the base 2.
We denote the set of non-negative integers by $\Z_{\geq 0}$.
We  write ``$a=b\pm c$'' to indicate that $b-c\le a\le b+c$.


\medskip

\noindent {\bf Subword.} 
It will be convenient for us to index a binary string~$x \in \zo^n$
  using $[0:n-1]$ as~$x=$ $(x_0,\dots,x_{n-1})$. 
Given such a string $x\in \{0,1\}^n$ and integers $0 \leq a\le  b \leq n-1$, we write $x_{[a:b]}$ to denote the \emph{subword} $(x_a, x_{a+1}, \dots, x_b)$ of $x$.
An  \emph{$\ell$-subword}   of $x$ is a 
  subword of $x$ of length $\ell$, given by $(x_a, x_{a+1}, \dots, x_{a+\ell-1})$ 
  for some $a \in [0: n-\ell]$. 
 
\medskip

\noindent {\bf Distributions.}
When we use bold font such as $\bD, \by, \bz$, etc., it is to emphasize that the entity in question is a random variable.
We write ``$\bx \sim {\cal D}$'' to indicate that random variable~$\bx$~is 
  distributed according to distribution ${\cal D}$.

\medskip

\noindent {\bf Deletion channel and traces.}
Throughout this paper the parameter $\delta:0 <$ $\delta < 1$ denotes~the \emph{deletion probability}.  Given a string $x \in \zo^n$, we write $\Del_\delta(x)$ to denote the distribution of the string that results from passing  $x$ through the $\delta$-deletion channel (so the distribution $\Del_\delta(x)$ is supported on $\zo^{\leq n}$), and we refer to a string in the support of $\Del_\delta(x)$ as a \emph{trace} of $x$.  Recall that a random trace $\by \sim \Del_\delta(x)$ is obtained by independently deleting each bit of $x$ with probability $\delta$ and concatenating the surviving bits.\hspace{0.05cm}\footnote{For simplicity in this work we assume that the deletion probability $\delta$ is known to the reconstruction algorithm.  We~note that it is possible to obtain a high-accuracy estimate of $\delta$ simply by measuring the average length of traces received from the deletion channel.}

\medskip

\noindent {\bf A notational convention.} In several places we use $\mathsf{sans \ serif \ font}$ for names such as $\tail$ (which is a subword of the target string $x$), $\dend$ (which is a location in the target string $x$), and so on.  To aid the reader, whenever we use this font the corresponding entity  is an ``$x$-entity,'' i.e.~a location, subword, etc.~that is associated with the source string $x$ rather than with a trace of $x$.


\section{The main algorithm} \label{sec:algorithm_overall}
In this section we describe the main algorithm {\tt Reconstruct}.
We begin by giving a precise definition of the notion of an \emph{$s$-desert}. To do this, here and throughout the paper we fix 
\[
C := \lceil 100/\epsilon \rceil,
\quad \text{and we recall that} \quad
m = n^{1/3}\quad\text{and}\quad M = 2m+1.
\]

\begin{definition} \label{def:desert}
For $s\in \{0,1\}^{\le \deslen}$, a binary string $z\in \{0,1\}^*$ is said to be an \emph{$s$-desert}
  if $z$ is a prefix of $s^{\infty}$ and $|z|\ge M$. 
A string is said to be a \emph{desert} if it is an $s$-desert for some $s\in \{0,1\}^{\le \deslen}$.
Given a string $x\in \{0,1\}^n$, we say that a location $i\in [0:n-1]$ is 
\emph{deep in a desert} if the length-$M$ subword $x_{[i-m:i+m]}$ centered at $i$ is a desert.
We say a string $x$ has \emph{no desert} if
  no subword of $x$ is a desert (or equivalently,  no location $i\in [0:n-1]$ is deep in a desert in $x$);
  otherwise we say that it has at least one desert.\ignore{\footnote{Xi: I was hoping for 
  something like, given any $x$, there are disjoint intervals $[a_1:b_1],[a_2:b_2],\ldots$
  such that $x_{[a_i:b_i]}$ are the maximal  desert subwords in $x$. 
However, I think two maximal desert subwords of $x$ could overlap by a little bit, 
  say length at most $2000$. 
On the other hand, given any $x$, the beginning of its first desert is well defined
  and this will be what we use below.}}
\end{definition}

\subsection{The preprocessing step}  \label{sec:preprocess}

Before stating the main algorithm, we first describe a preprocessing step. This is a simple routine which we call {\tt Preprocess} and which is described and analyzed in~\Cref{lem:preprocess} in \Cref{sec:ap-preprocess}. Its goal is to output a string $v\in \{0,1\}^{n/2}$ 
and the main feature of $v$ is that it lets us assume that any $s$-desert (for any $s \in \{0,1\}^{\le \deslen})$ of $z:=x\circ v$ ends at least $n/2-(2m+1)$ bits before the right end of the string $z$.
{\tt Preprocess} succeeds in finding such a string $v$ with probability at least $1-n^{-\omega(1)}$ so we 
  assume below that $v$ satisfies this property.
Further, given $v$, we can simulate sample access to $\Del_\delta(z)$
  using that to $\Del_\delta(x)$ trace by trace (with a linear overhead in running time).
In order to obtain the original $n$-bit string $x$ it suffices for us to reconstruct the $(3n/2)$-bit string $z$.

For convenience of notation, we rename $z$ as $x$ and rename $n$ to be the length of this string $z$, so we still have $x=(x_0,\dots,x_{n-1})$. Now $x$ is an $n$-bit string that has the following property: any desert in $x$ ends at least $n/4$ bits before the right end of $x$. With this preprocessing accomplished, we now describe Algorithm {\tt Reconstruct} in \Cref{fig:mainalg}.

\begin{figure}[!t]
\setstretch{1.2}
  \begin{algorithm}[H]
    \caption{Algorithm {\tt Reconstruct} for $\delta = n^{-(1/3+\eps)}$}\label{fig:mainalg}
		\DontPrintSemicolon
		\SetNoFillComment
		\KwIn{Length $n$ of an unknown $x \in \{0,1\}^n$ and
   access to $\Del_\delta(x)$ where $\delta = n^{-(1/3+\eps)}$}
		\KwOut{A string $u$, where the algorithm succeeds if $u=x$}
Set $N := O(\log n)$ \\	
		Draw $N$ fresh traces\ignore{\rnote{I changed the name - we had been calling these $\by^1,\dots$ which led to some naming confusion. Now the traces we give to \BMA are always called $\bz^i$}} $\bz^1,\ldots,\bz^N$ independently from $\Del_\delta(x)$\\
		Run $\BMA(n, \{\bz^1,\ldots,\bz^N\})$ and let $w$ be its output\\
    \lIf{$w$ has no desert}{\Return $w$}\
    \Else{
		  Let $r$ be the first location that is deep in a desert in $w$ and let $u=w_{[0:r+m]}$
    }
    \tcp{Main loop}
		\For{$n/m$ rounds}{
		Draw $N$ fresh traces $\by^1,\ldots,\by^N$ independently from $\Del_\delta(x)$\\
		Run $\DET(r,u,\{\by^1,\ldots,\by^N\})$ and let $b$ and $\ell_i$, $i\in [N]$, 
		  be its output\\
		Set $r=b$ and extend $u$ to be a string of length $b$ such that $u_{[r-m:b]}$ is a desert\\
		    \lIf{$b=n-1$}{output ``\emph{FAILURE}''}
		{Let $\bz^i$ be the suffix of $\by^i$ starting at $\by^i_{\ell_i+1}$ for each $i\in [N]$\\
		 Run $\BMA(n-b-1,\{\bz^1,\ldots,\bz^N\})$ and let $w$ be its output\\ 
		 \If{$w$ has no desert}{
		   \Return $u\circ w$
     }\Else{Let $r^*$ be the first location that is deep in a desert in $w$ and set
		  $r\leftarrow r+r^*$ and $u\leftarrow u\circ w_{[r^*+m]}$\\
		  \Return $u$ if $u$ is of length $n$}
		}
		}
		\Return $u$ 
  \end{algorithm}
\end{figure} 

\subsection{The high level idea of the {\tt Reconstruct} algorithm}

At a high level the algorithm works as follows.
It starts (lines 1-3) by drawing 
\[
N=O(\log n)
\]
independent traces $\bz^1,\ldots,\bz^N$  from $\Del_\delta(x)$ and using them to run \BMA. 
An important component of our analysis is the following new result about the performance of \BMA (note that later we require, and will give, a more robust version of the theorem below; see \Cref{thm:realBMA}):

\begin{theorem} \label{thm:BMA}
Let $\delta=n^{-(1/3+\eps)}$ for some fixed constant $\eps>0$.
Given $N$ traces drawn independently from $\Del_\delta(x)$ for some unknown string $x\in \{0,1\}^n$, \BMA
  runs in $\tilde{O}(n)$ time and returns a string $w$ of length $n$ 
 with the following performance guarantees: 
   \begin{flushleft}\begin{enumerate}
     \item If $x$ has no desert, then $w=x$ with probability at least $1-1/n^2$;
     \item If $x$ has at least one desert, then $w$ and $x$ share the same $(r+m+1)$-bit prefix with probability at least $1-1/n^2$, where $r$ is the first location that is deep in a desert 
       of $x$.
   \end{enumerate}\end{flushleft}
\end{theorem}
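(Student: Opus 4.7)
The plan is to run an inductive loop-invariant argument on the time step $t$, and then handle part~(2) by extending the analysis through the first desert using the periodic structure of the bits inside it. At each step $t \in [0:n]$ I would maintain the three invariants sketched in the overview: (i) $w_j = x_j$ for all $j < t$; (ii) $f_i(\current_i(t)) \geq t$ for every trace $\bz^i$; and (iii) $\sum_{i \in [N]} (f_i(\current_i(t)) - t) \leq 2N/C$. The base case $t=0$ is immediate because every pointer starts at position $0$ and each $f_i$ is non-negative.

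The first step of the proof is to define a good event $\mathcal{E}$ on $\bz^1,\dots,\bz^N \sim \Del_\delta(x)$ capturing two structural properties: (a) for every trace $\bz^i$ and every length-$\Theta(C^2 m)$ window of positions in $x$, at most $C$ of those positions are deleted in $\bz^i$; and (b) for every length-$\Theta(m)$ window of positions in $x$, at most $N/C^3$ of the traces have any deletion inside the window. A Chernoff bound combined with a union bound over the $O(n)$ windows shows that $\mathcal{E}$ fails with probability at most $1/n^2$, using $m\delta = n^{-\eps} \ll 1$ and $N = \Theta(\log n)$. The remainder of the argument is deterministic, conditioned on $\mathcal{E}$.

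For the inductive step, invariant~(i) at time $t+1$ is an immediate consequence of invariant~(iii) at time $t$: since $C$ is a large constant, a total overshoot of at most $2N/C$ forces at least $(1 - 2/C)N$ pointers to satisfy $f_i(\current_i(t)) = t$, so a strict majority of the bits $\bz^i_{\current_i(t)}$ equal $x_t$ and the algorithm outputs $w_t = x_t$. Invariant~(ii) at time $t+1$ is routine from invariant~(ii) at time $t$ and $w_t = x_t$, together with monotonicity of $f_i$ and $\current_i$. The main technical obstacle, as the overview flags, is propagating invariant~(iii). I would charge each unit of new overshoot accumulated at step $t$ to a deletion in some trace within the most recent length-$\Theta(m)$ window of $x$: property~(b) then caps the number of traces that can acquire new overshoot at any given step at $N/C^3$, and property~(a) caps the per-trace overshoot at $C$. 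The no-desert hypothesis enters here crucially, because when the relevant window of $x$ is not periodic with period at most $C$, any overshooting pointer eventually reads a bit that disagrees with the majority, which freezes it long enough for $t$ to catch up and discharge the overshoot; in a desert this balancing fails, which is why the hypothesis is needed.

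For part~(2), the same inductive argument goes through verbatim for $t \in [0, r-m]$, because every length-$\Theta(m)$ window ending inside this range avoids the first $s$-desert by definition of $r$. At time $t = r-m$, invariant~(iii) still holds, so at least $(1 - 2/C)N$ pointers are exactly aligned. For $t \in (r-m, r+m]$ the target bits $x_t$ follow the periodic pattern $s^\infty$ for some $|s|\le C$; even though invariant~(iii) may begin to deteriorate inside the desert, at most $O(m\delta N) = O(n^{-\eps}\log n) \ll N/C$ additional pointers can acquire new overshoot across the $\le 2m+1$ steps of the desert (by property~(b) applied to the length-$O(m)$ desert interval), so the aligned majority is preserved and the majority vote continues to output $w_t = x_t$ all the way through $t = r+m$. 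This yields $w_{[0:r+m]} = x_{[0:r+m]}$, which is exactly what part~(2) claims; the theorem is silent about positions beyond $r+m$, so no further invariant maintenance is needed. The hardest piece, I expect, will be the detailed bookkeeping in the deterministic analysis of invariant~(iii) in paragraph~three, particularly pinning down the precise potential-function argument that uses the no-desert hypothesis to discharge overshoot at the right rate.
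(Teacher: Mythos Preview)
Your proposal is correct and matches the paper's approach: the good event (a)+(b) is exactly the paper's goodness condition on the traces, and the three invariants together with their inductive maintenance are the same. The paper's precise handling of invariant~(iii) compares $\distance_i(t):=f_i(\current_i(t))-t$ against $\distance_i(t-M)$ (not against $\distance_i(t-1)$): one sublemma shows that in a deletion-free length-$M$ block any nonzero overshoot must strictly drop (else that block would itself be a desert), and a pigeonhole argument inside the length-$2C^2M$ window then caps every $\distance_i$ by~$C$. This is the formalization your charging sketch is pointing at.

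The one place you diverge is in part~(2): you stop the main induction at $t=r-m$ and run a separate ad-hoc argument through the desert. This works, but it is unnecessary. The no-desert hypothesis is invoked only to rule out $x_{[t-M:t-1]}$ being a desert; since that window is centered at $t-m-1$ and $r$ is by definition the \emph{first} location deep in a desert, the needed condition $t-m-1<r$ holds for all $t\le r+m$. So the paper simply runs the single induction uniformly through $t=r+m+1$, with no second phase at all.
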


Let $w$ be the string \BMA returns.
By \Cref{thm:BMA}, we have the following two cases:
\begin{flushleft}\begin{enumerate}
\item If $w$ has no desert, then also $x$ has no desert and the algorithm can just return $w$ (line 4);
\item If $w$ has at least one desert, then writing $r$ to denote the first
  location that is deep in a desert in $w$, it is safe to assume that
  $w_{[0:r+m]}=x_{[0:r+m]}$ and 
  $r$ is also the first location\\ that is deep in a desert in $x$ (line 6).
\end{enumerate}\end{flushleft}

Suppose that we are in the second case with $\smash{w_{[0:r+m]}=x_{[0:r+m]}}$.
Then $w_{[r-m:r+m]}$ is an $s$-desert for some string $s\in \{0,1\}^k$ of some length $k\le \deslen$.
We let $s$ be the shortest such string and let $k$ be its length (so if $w_{[r-m:r+m]}$ were, for example, a subword of the form 001001001001$\cdots$ of length a multiple of 12, we would take $s=001$ and $k=3$).

Next (lines 8-9) we run $\DET$ to figure out where this repetition of $s$ ends in $x$.
We use $\dend$ to denote the end of the desert, where $\dend\ge r+m$ is the smallest integer such that $x_{\dend+1}\ne x_{\dend-k+1}$.
By the preprocessing step we may assume that $\dend$ exists and satisfies $\dend\le 3n/4$.
(We note that $\DET$ has access to $\Del_\delta(x)$ to draw fresh traces by itself; 
  we send $N$ fresh traces $\by^1,\ldots,\by^N$ to $\DET$ so that it can help align them
 to the end of the desert, which are used to run \BMA later.)
 The performance guarantee for \DET is given below:

\begin{theorem} \label{thm:DET}
  Let $\delta = n^{-(1/3+\eps)}$ for some fixed constant $\eps>0$. There is an algorithm $\DET$ with the following input and output:
\begin{flushleft}
\begin{itemize}
\item {\bf Input:} (i) a location $r \in [0:3n/4]$,
  (ii) a string $u\in \{0,1\}^{r+m+1}$, 
  (iii) a multiset of
  strings $\{y^1,\ldots,y^{\overallsc}\}$ from $\{0,1\}^{\le n}$ where $\overallsc=O(\log n)$, and (iv) sample access
  to $\Del_\delta(x)$ for some unknown string  $x\in \{0,1\}^n$.

\item {\bf Output:} An integer $b$, and an integer $\ell_i$ for each $i\in [\overallsc]$.
\end{itemize}
\end{flushleft}
The algorithm \DET draws $\tilde{O}(n^{2/3-\eps})$ many independent 
traces from $\Del_\delta(x)$, runs in $O(n^{5/3})$ time and
  has the following performance guarantee.
Suppose that $r$ is the first location that is deep in~some desert of $x$; $u= x_{[0:r+m]}$; the unknown $\dend$ of the desert to which $x_r$ belongs
  is at most  $3n/4$; and $\smash{y^1=\by^1,\ldots,y^\overallsc=\by^{\overallsc}}$ 
  are independent traces drawn from $\Del_\delta(x)$. 
  Then the integers $b$ and $\ell_i$ that $\DET$ outputs 
  satisfy the following properties with probability
  at least $1-1/n^2$:
  $b=\dend$, and $\ell_i=\pos(\by^i)$ for at least $0.9$ fraction of $i\in [\overallsc]$.
  Here $\pos(y)$ for a trace $y$ denotes the location $\ell$
  in $y$ such that $\smash{y_\ell}$ corresponds to the last bit of $\smash{x_{[0:\dend]}}$ that survives in $y$ \emph{(}and we set $\smash{\pos(y)=-1}$ by default if all of $x_{[0:\dend]}$ gets deleted in $y$\emph{)}.
\end{theorem}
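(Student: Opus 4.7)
The plan is to prove Theorem \ref{thm:DET} by analyzing the two-stage architecture sketched in the overview: a \emph{coarse estimation} stage that locates the image of $x_\dend$ in a typical trace up to $\pm 2\sigma$ where $\sigma = \tilde{O}(\sqrt{\delta n})$, followed by a \emph{sharp estimation} stage built on an \Align subroutine whose expected output location is accurate to $\pm 0.1$. The same \Align subroutine is applied to the input traces $\by^1,\dots,\by^N$ to produce the outputs $\ell_i$.

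First I would analyze the coarse estimation stage and prove it outputs $\hat\beta$ with $|\hat\beta - (1-\delta)\dend| \leq 2\sigma$ together with the correct tail string $\tail = x_{[\dend-k+2:\dend-k+8\sigma+1]}$, with failure probability $1/n^{\omega(1)}$. For a single trace $\by \sim \Del_\delta(x)$, the key probabilistic facts are that (i) since $\sigma\delta = \tilde O(\sqrt{\delta^3 n}) = n^{-\Theta(\eps)}$ at $\delta=n^{-(1/3+\eps)}$, with probability $1 - n^{-\Theta(\eps)}$ no deletion occurs in the $O(\sigma)$-window of $x$ around position $\dend$, and (ii) by Chernoff, the image of $x_\dend$ in $\by$ deviates from $(1-\delta)\dend$ by at most $\sigma$ with probability $1 - 1/\poly(n)$. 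Under these two events the leftmost $k$-window inside the scanning intervals that fails to lie in $\Cyc$ is exactly the image of $(x_{\dend-k+2},\dots,x_{\dend+1})$, so majority voting over $\alpha = O(1/\eps)$ traces correctly identifies the interval containing the image of $\dend$ and then correctly recovers $\tail$; the failure probability becomes $2^{-\Omega(\alpha)} \leq 1/n^2$ for a suitably chosen $\alpha$.

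Second I would verify the two properties (a) and (b) of \Align stated in the overview. Property (a), that \Align locates the image of $x_\dend$ in $\by$ with probability at least $1 - n^{-\Theta(\eps)}$, follows from the same two clean-window events used in the coarse stage together with the fact that $\sig$ was defined so that the leftmost non-cyclic-shift $k$-window in the search interval $J$ around $\hat\beta$ coincides with the image of the first $k$ bits of $\sig$. Property (b), showing $\E[\Align(\by)] = (1-\delta)\dend \pm 0.1$, is the main technical obstacle. Here I would decompose the expectation by the number of deletions inside the length-$O(\sigma)$ window of $x$ containing $\sig$: traces with zero deletions contribute the correct mean; traces with three or more deletions occur with probability $O(\sigma^3\delta^3) = o(1/n)$ and thus contribute $o(1)$ to the bias; and the one- and two-deletion traces, which would otherwise bias the estimate by $O(\sigma\delta)\cdot O(k) + O(\sigma^2\delta^2)\cdot O(\sigma)$, are killed by the calibrated probabilistic \nil-rule that \Align applies based on the location of $\sig$ inside $\by_J$. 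This cancellation is what distinguishes the actual \Align from the naive ``always report the $(k{-}1)$-th bit of $\sig$'' rule, and is the place where the cases $|\sig|\leq 8\sigma$ (crowded) and $|\sig|>8\sigma$ (distanced) must be treated separately.

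Third, given (a) and (b), the theorem follows by straightforward concentration. \DET draws $T = \tilde{O}(n^{2/3-\eps})$ fresh traces, applies \Align to each, and averages the non-\nil outputs to form $\hat\ell$. By (b), $\E[\hat\ell] = (1-\delta)\dend \pm 0.1$; by (a) and the fact that each reported location lies in an interval of length $O(\sigma)$ around $\hat\beta$, the variance of each summand is $O(\sigma^2) = \tilde O(\delta n)$, so $|\hat\ell - \E[\hat\ell]| \leq 0.4$ with probability $1 - O(\sigma^2/T) = 1 - 1/\poly(n)$ by Chebyshev, which is exactly why we need $T = \tilde\Omega(n^{2/3-\eps})$. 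Thus $|\hat\ell - (1-\delta)\dend| \leq 0.5$, so rounding $\hat\ell/(1-\delta)$ to the nearest integer yields $b = \dend$. Applying (a) independently to each of the $N = O(\log n)$ input traces and using a Chernoff bound over $n^{-\Theta(\eps)}$-failures, at most a $0.1$ fraction of the $\ell_i$ differ from $\pos(\by^i)$ with probability $1 - 1/n^3$. A union bound over these $O(1)$ events gives the claimed $1 - 1/n^2$ overall success probability. The running time $O(n^{5/3})$ and sample complexity $\tilde O(n^{2/3-\eps})$ come from $T$ executions of \Align, each of which scans an $O(\sigma)$-window and takes $\tilde O(\sigma) = \tilde O(n^{1/3})$ time (plus the lower-order cost of the $O(\log n)$-trace coarse stage). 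The hard part will clearly be the calibration of the \nil-rule inside \Align needed for property (b), since the per-trace failure probability $n^{-\Theta(\eps)}$ is much larger than $1/n$, so without this cancellation a single failing trace could easily contribute $\omega(1)$ bias to $\E[\Align(\by)]$ and ruin the sharp estimate.
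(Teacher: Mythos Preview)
Your high-level architecture matches the paper's, but there are two concrete gaps that would make the argument fail as written.

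First, the concentration step: with $T=\tilde O(n^{2/3-\eps})$ summands each of variance $O(\sigma^2)=\tilde O(\delta n)$, Chebyshev gives failure probability $O(\sigma^2/T)=O(1/\log n)$, not $1/\poly(n)$ as you claim. The paper instead uses Hoeffding's inequality, which applies because each $\Align$ output is confined to an interval of length $O(\sigma)$; this yields $\exp(-\Omega(T/\sigma^2))=\exp(-\Omega(\log n))\le 1/n^3$.

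Second, your account of property (b) mischaracterizes both the mechanism and the numerics. The claim $\sigma^3\delta^3=o(1/n)$ is false: $\sigma^3\delta^3=\tilde O(n^{-9\eps/2})$, which is not $o(1/n)$ for small $\eps$. More importantly, the calibrated $\nil$-rule $p_L=(1-\delta)^{15\sigma-L}$ is \emph{not} what kills the one- and two-deletion traces. What kills them is the deterministic ``right form'' test: the paper proves a combinatorial lemma (Lemma~\ref{comblemma}) showing that if $z$ is in the right form and $\by\sim\Del_\delta(z)$ suffers any deletion, then except with probability $O(\delta)$ the trace $\by$ acquires an extra $k$-subword outside $\Cyc$ and therefore fails the test. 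The probabilistic $\nil$-rule serves a different purpose: conditioned on zero deletions up to the end of $\sig$, it makes the non-$\nil$ probability exactly $(1-\delta)^{15\sigma+1}$ \emph{independent of where $\sig$ lands inside the window}. This is what ensures $\Pr[\reach(\by)=t\mid\text{non-}\nil]\approx\Pr[\reach(\by)=t]$ and hence that conditioning on non-$\nil$ does not bias the location estimate. (Also: in the paper's final version $\sig$ always has length $\le 8\sigma$ by definition, so there is no separate ``distanced'' case; and in the coarse stage, $2^{-\Omega(\alpha)}$ with $\alpha=O(1/\eps)$ is a constant --- the correct bound is $(O(\sigma\delta))^{\alpha/2}\le 1/n^4$.)
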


\ignore{
OLD THEOREM 4 BEFORE WE COPIED IT FROM SECTION 6

\begin{theorem} \label{thm:DET}
  For $\delta = n^{-(1/3+\eps)}$, there is an efficient algorithm $\DET$ with the following input and output:
\begin{flushleft}
\begin{itemize}
\item  {\bf Input:} (i) a location $r \in [n]$;
  (ii) a string $u\in \{0,1\}^{r+m+1}$; 
  (iii) a multiset of $N=O(\log n)$ strings $\{y^1,\ldots,y^N\}$ (traces from $\Del_\delta(x)$) and (iv) sample access to $\Del_\delta(x)$ for  unknown string $x\in \{0,1\}^n$
%

\item {\bf Output:} An integer $b\ge r+m$ and an integer $\ell_i$ for each $i\in [N]$.
\end{itemize}
\end{flushleft}
The algorithm \DET draws $\tilde{O}(n^{2/3-\eps})$ 
traces from $\Del_\delta(x)$, runs in $\text{poly}(n)$ time and
  has the following guarantee.
Suppose that $r$ is the first location that is deep in some desert of $x$; that $u= x_{[0:r+m]}$; that the unknown end position $\dend$ of the current desert to which $r$ belongs satisfies $ \dend \le 0.9n$; and that $\by^1,\ldots,\by^{\overallsc}$ 
  are independent traces drawn from $\Del_\delta(x)$. 
  Then with probability
  at least $1-1/n^2$,
the integers $b$ and $\ell_i$ that $\DET$ outputs 
  satisfy the following properties: 
  \begin{flushleft}
\begin{enumerate}
\item $b=\dend$;
\item For at least $9/10$ of all
  of 
  $i\in [\overallsc]$, the bit $\by^i_{\ell_i}$ of $\by^i$ came from $x_{\dend}$
  in the original string $x$.
  \end{enumerate}
\end{flushleft}

\end{theorem}

END OF OLD THEOREM 4 BEFORE IT WAS COPIED FROM SECTION 6
}

Line~9 runs $\DET$ with fresh independent traces
  $\by^1,\ldots,\by^N$ drawn from $\Del_\delta(x)$.
By \Cref{thm:DET}, with high probability \DET returns the correct location $b=\dend$, from which we can then recover $x_{[0:b]}$ as the unique extension of $w_{[0:r+m]}$ in which the pattern $s$ keeps repeating until (and including) location $b$. 
Moreover, we have from \Cref{thm:DET} that, for at least a $9/10$-fraction of all $i\in [N]$, the suffix $\smash{\bz^i}$ of $\smash{\by^i}$ starting from $\smash{\by^i_{\ell_i+1}}$ is a trace drawn from $\Del_\delta(x_{[b+1:n-1]})$.   
We further note that  our preprocessing ensures $\smash{b \le 3n/4}$ and thus,
   the algorithm does not halt on line 11.


To continue, we would like to run \BMA again on $\bz^1,\ldots,\bz^N$ (the suffixes of $\by^1,\dots,\by^N$) to recover $x_{[b+1:n-1]}$ (or a prefix of $x_{[b+1:n-1]}$ if it contains a desert).  However, observe that now we need \BMA to be robust against some noise in its input traces because by \Cref{thm:DET}, up to $1/10$  of $\smash{\bz^1,\ldots,\bz^N}$ might have been obtained from an incorrect alignment of $\by^1,\dots,\by^N.$
Thus we require the following more robust performance guarantee from \BMA, given by \Cref{thm:realBMA} below.  (To state this we need a quick definition: we say two multisets of strings of the same size are \emph{$\eta$-close} if one can be obtained
  from the other by substituting no more than $\eta$-fraction of its strings.
One should also consider $x'$ in the statement below as $x_{[b+1:n-1]}$ and $n'$ as $n-b-1$.)

\begin{theorem} \label{thm:realBMA}
Let $\delta=n^{-(1/3+\eps)}$ for some fixed constant $\eps>0$.
Suppose $\tilde{\bz}^{1},\ldots,\tilde{\bz}^{N}$ are $N$ independent traces drawn from $\Del_{\delta}(x')$ for some unknown string $\smash{x'\in \{0,1\}^{n'}}$ with $n' \leq n$.
The following holds with probability at least $1-1/n^2$ over the randomness of $\smash{\tilde{\bz}^1,\ldots,\tilde{\bz}^\overallsc}\sim \Del_\delta(x')$: 
  \begin{flushleft}\begin{enumerate}
    \item If $x'$ has no desert, then  $\BMA$ running on 
      $n'$ and any multiset $\{z^1,\ldots,z^\overallsc\}$ that is $(1/10)$-close to %
  $\{\tilde{\bz}^1,\ldots,\tilde{\bz}^\overallsc\}$ returns $w=x'$; 
    \item If $x'$ has at least one desert, then 
    $\BMA$ running on 
      $n'$ and any multiset $\{z^1,\ldots,z^\overallsc\}$ that \\is $(1/10)$-close to %
  $\{\tilde{\bz}^1,\ldots,\tilde{\bz}^\overallsc\}$ returns a string $w$ that shares the 
    same $(r'+m+1)$-bit\\ prefix with $x'$, where $r'$ is the first location that is deep in a 
    desert in $x'$. 
  \end{enumerate}\end{flushleft}
\ignore{
  The algorithm \BMA has the following performance guarantees.
  Let $x\in \{0,1\}^n$ and $\bz^{1\ast},\ldots,\bz^{N\ast}$ be $N$ traces drawn independently from $\Del_\delta(x)$.
  With probability at least $1-1/n^C$ (over draws of $\bz^{1\ast},\ldots,\bz^{N\ast}$ from $\Del_\delta(x)$),
  \begin{flushleft}\begin{enumerate}
    \item if $x$ has no desert, then \BMA running on any multiset that is $(1/n^{\kappa})$-close to $\{ \bz^1,\ldots,\bz^N\}$ returns $x$;
    \item if $x$ has at least one desert (letting $r\in [n]$ be the first location that is deep in a desert 
      of $x$), then \BMA running on any multiset that is $(1/n^{\kappa})$-close to $\{ \bz^1,\ldots,\bz^N\}$ returns a string that shares the same $(r+m)$-bit prefix with $x$.
  \end{enumerate}\end{flushleft}}
\end{theorem}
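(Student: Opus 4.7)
The plan is to upgrade the proof of \Cref{thm:BMA} to handle the adversarial substitution allowed in the hypothesis, exploiting the fact that \BMA makes every decision either (a) by a majority vote over all input traces, which is only mildly affected by a $(1/10)$-fraction of adversarial votes, or (b) by updating each pointer independently based on the current majority bit $w_t$ and on the trace itself, in which case adversarial traces cannot corrupt the updates of honest ones.

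First I would set $G := \{i \in [N] : z^i = \tilde{\bz}^i\}$, so that $|G| \geq 9N/10$ by the $(1/10)$-closeness assumption (note that $G$ is used only in the analysis, not by the algorithm). The entire randomness of the analysis lies in $\{\tilde{\bz}^i\}_{i \in G}$. Using a standard Chernoff argument based on $m\delta = n^{-\eps} \ll 1$ (the same one used in the proof of \Cref{thm:BMA}), I would show that with probability at least $1 - 1/n^2$ the honest traces satisfy the regularity conditions: (i) for every $i \in G$ and every length-$C^2m$ subword of $x'$, no more than $C$ positions are deleted in $\tilde{\bz}^i$; and (ii) for every length-$m$ subword of $x'$, at most $N/C^3$ of the traces $\{\tilde{\bz}^i\}_{i \in G}$ have a deletion in that subword.

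Conditioned on these regularity events, I would then maintain by induction on $t$ the ``restricted-to-$G$'' version of the invariants from the proof of \Cref{thm:BMA}: (1') $w_0, \ldots, w_{t-1} = x'_0, \ldots, x'_{t-1}$; (2') $f_i(\current_i(t)) \geq t$ for every $i \in G$; and (3') $\sum_{i \in G}(f_i(\current_i(t)) - t) \leq 2|G|/C$. Invariant (3') implies that at least $|G|(1 - 2/C)$ good traces are perfectly aligned at time $t$; since $|G| \geq 9N/10$ and $C \geq 100$, this number exceeds $0.88 N$, which strictly exceeds $N/2$. So even if all $N - |G| \leq N/10$ adversarial traces and all misaligned good traces vote against $x'_t$, the majority bit $w_t$ still equals $x'_t$. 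This establishes (1') at time $t+1$, and (2') at time $t+1$ then follows from the same short monotonicity argument as in \Cref{thm:BMA}: if $f_i(\current_i(t)) > t$ then nondecreasingness of $f_i$ and $\current_i$ gives $f_i(\current_i(t+1)) \geq t+1$; if $f_i(\current_i(t)) = t$ then $\tilde{\bz}^i_{\current_i(t)} = x'_t = w_t$, so the pointer advances and again $f_i(\current_i(t+1)) \geq t+1$.

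The main obstacle is maintaining invariant (3'), the potential bound; this is exactly the hard step of the non-robust version as well. The critical observation for the reduction to the honest-only setting is that since $w_t = x'_t$ by (1'), each good trace's pointer moves \emph{exactly} as it would under the original \BMA analysis with only honest traces present; the adversarial traces enter only via the majority vote, which is correct by the previous paragraph. Consequently, the amortized-potential argument from the proof of \Cref{thm:BMA} -- which uses regularity conditions (i) and (ii) to show that $\sum_{i \in G}(f_i(\current_i(t)) - t)$ does not exceed $2|G|/C$ over any window of $\Theta(m)$ consecutive time steps -- carries over verbatim with $[N]$ replaced by $G$; the only loss is that the bound $N/C^3$ in (ii) becomes an effective $(10/9)|G|/C^3$, a constant-factor loss absorbed into the slack of the invariant for large $C$. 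For part (2) of the theorem (when $x'$ contains a desert), we simply run the same induction up to $t = r' + m + 1$; the invariants may fail thereafter, but (1') at that time step already yields the claimed $(r'+m+1)$-bit prefix agreement.
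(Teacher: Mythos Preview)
Your proposal is correct and is essentially the same argument as the paper's, just packaged differently. The paper factors it into two pieces: a deterministic lemma showing that \BMA\ run on any ``good'' size-$R$ multiset (with $R=9N/10$; \Cref{def:goodset}) not only outputs the correct prefix but has at least $9R/10=81N/100$ traces agreeing in every vote (\Cref{lem:secondBMA}), and a probabilistic lemma showing that with probability $1-1/n^2$ \emph{every} $R$-subset of the $N$ random traces is good (\Cref{lem:firstBMA}); robustness then follows because the $\ge 9N/10$ surviving honest traces form a good subset whose supermajority cannot be flipped by $\le N/10$ adversarial votes. Your direct induction over the honest set $G$ inside the full $N$-trace execution unrolls exactly this, using the same key observation that once $w_t=x'_t$ the honest pointers evolve as in an honest-only run.

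One small point of care: since the adversary chooses which traces to replace \emph{after} seeing all $N$ draws, your regularity conditions must be established for all $i\in[N]$ (equivalently, for every $R$-subset) and then inherited by whatever $G$ results; your later remark about the $(10/9)|G|/C^3$ slack shows you intend this, but the phrasing ``for every $i\in G$'' in condition~(i) should read ``for every $i\in[N]$''.
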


Given \Cref{thm:realBMA}, we can indeed successfully run \BMA on $\bz^1,\ldots,\bz^N$ and with high probability, it correctly recovers a prefix of $x_{[b+1:n-1]}$ up to the first point deep in the next desert (if any exists), in which case the algorithm repeats (if there is no next desert, then with high probability \BMA will correctly recover the rest of $x$).

\subsection{Correctness of {\tt Reconstruct}}

The case when $x$ has no desert is handled by \Cref{thm:realBMA}.
Assuming that $x$ has at least one desert, it follows from \Cref{thm:realBMA}
  that $r,u$ together  satisfy the following property with probability at
  least $1-1/n^{2}$ at the beginning of the main loop (lines~7-18):
$r$ is the first location that is deep in a desert in $x$ and $u=x_{[0:r+m]}$.
This gives the base case for the following invariant that the algorithm maintains 
  with high probability:
\begin{quote}
{\bf Invariant:} \emph{At the beginning of each loop, $r$ is the first location deep in some desert in $x$ and $u=x_{[0:r+m]}$.}
\end{quote}

Assume that the invariant is met at the beginning of the current loop. Let $\dend$ denote the end of the current desert (i.e., the smallest value $\dend \ge r+m$ such that $x_{\dend+1}\ne x_{\dend-k+1}$; we observe that $\dend\le 3n/4$ {always exists by the guarantee 
  of the preprocessing step}).
Let $\by^1,\ldots,\by^N$ be fresh traces drawn at the beginning of this loop.
For each $i\in [N]$, 
we write $\tilde{\bz}^i$ to denote the suffix~of~$\by^i$ starting at $ \pos(\by^i)+1$.
Given that $\by^1,\ldots,\by^N\sim\Del_\delta(x)$,
  $\tilde{\bz}^1,\ldots,\tilde{\bz}^N$ are indeed independent traces drawn from $\smash{\Del_{\delta}(x')}$, where
  $\smash{x'=x_{[\dend+1:n-1]}}$.
Then we note that, for the algorithm to deviate from the invariant in the current round, one of the following two events must hold for $\smash{\by^1,\ldots,\by^N}$:\begin{flushleft}\begin{enumerate}
  \item $\DET(r,u,\{\by^1,\ldots,\by^N\})$ fails   
    \Cref{thm:DET}; or
  \item $\{\tilde{\bz}^1,\ldots,\tilde{\bz}^N\}$ 
   fails \Cref{thm:realBMA}, i.e., there is a multiset 
   $\smash{\{ z^1,\ldots, z^N\}}$ that is $(1/10)$-close to $\{\tilde{\bz}^1,\ldots,\tilde{\bz}^N\}$
   but $\smash{\BMA(n-\dend-1,\{z^1,\ldots,z^N\})}$ violates the condition in 
   \Cref{thm:realBMA}.
\end{enumerate}\end{flushleft}
This is because whenever $\DET$ succeeds, the strings $\{\bz^1,\ldots,\bz^N\}$ on which
  we run $\BMA$ on line 13 must be $(1/10)$-close to $\smash{\{\tilde{\bz}^1,\ldots,\tilde{\bz}^N\}}$.
\Cref{thm:DET} ensures that item~1  happens with probability at most $\smash{1/n^2}$;
\Cref{thm:realBMA} ensures that item~2 happens with probability at most $1/n^{2}$, 
  given that $\tilde{\bz}^1,\ldots,\tilde{\bz}^N$ are independent traces from $\Del_\delta(x')$
  as required in the assumption of \Cref{thm:realBMA}.

By a union bound, the invariant holds with high probability in every round given that we only repeat for $n/m$ rounds. Finally, observe that we only need to repeat for $n/m$ rounds to reconstruct the entire $n$-bit string $x$, since in each round the pointer $r$ increases by at least $2m$.

This concludes the proof of correctness of {\tt Reconstruct} and the proof of \Cref{thm:main}, modulo the proofs of \Cref{thm:realBMA} and \Cref{thm:DET}. In the rest of the paper we prove those two theorems.


\def\sfleft{\textup{\textsf{left}}}
\def\sfright{\textup{\textsf{right}}}

\section{Improved analysis of the Bitwise Majority Algorithm:  Proof\\
 of \Cref{thm:realBMA}}
\label{sec:bma}

\ignore{

}

The bitwise majority algorithm was first described and analyzed in \cite{BKKM04}. The analysis given in \cite{BKKM04} established that \BMA successfully reconstructs any
  unknown source string $x\in \{0,1\}^n$ that does not contain any ``long runs'' (i.e., subwords of the form $\smash{0^{n^{1/2 + \eps}}}$ or $\smash{1^{n^{1/2 + \eps}}}$) provided that the deletion rate $\delta$ is at most $n^{-(1/2 + \eps)}$.
We describe the \BMA algorithm in~\Cref{figg:BMA}.
As the main result of  this section we establish an improved performance guarantee for \BMA. Our discussion and notation below reflects the fact that we will in general be running \BMA ``in the middle'' of a string $x$ for which we have already reconstructed a $(b+1)$-bit prefix of $x$ (this is why \Cref{thm:realBMA} is stated in terms of a source string $x'$ of length $n'\le n$, which should be thought of as a suffix of $x$).

\ignore{
Let $x' = (x'_0,\dots,x'_{n-b-1}) = x_{[b:n-1]} \in \{0,1\}^{n-b}$ be an arbitrary string of length $n'$, where $n' = n-b$ and $\BMA$ is invoked with parameter $n' = n-b$ (intuitively, \BMA is called in an attempt to reconstruct the suffix $x_{[b:n-1]}$ of $x$). In what follows, throughout this section we write $r$ to denote the first location in $[b:n-1]$ that is deep in a desert of $x$ (recall \Cref{def:desert}) if $x_{[b:n-1]}$ contains a desert,
  and we set $r=n-m$ as a default value if $x_{[b:n-1]}$ has no desert.
  Note that with this definition of $r$, it is guaranteed that there is no desert in $x_{[b:r+m-1]} = x_{[b:n-1]}$ (equivalently, there is no desert in $x'_{[0:r+m-b-1]}$).
  }

Our goal is to prove \Cref{thm:realBMA}, restated below.  

\medskip
\noindent {\bf Restatement of \Cref{thm:realBMA}.} \emph{Let $\delta=n^{-(1/3+\eps)}$ for some constant $\eps>0$.
Suppose $\tilde{\bz}^{1},\ldots,\tilde{\bz}^{N}$ are $N$ independent traces drawn from $\Del_{\delta}(x')$ for some unknown string $\smash{x' = (x'_0,\dots,x'_{n'-1}) \in \{0,1\}^{n'}}$ with $n' \leq n$.
\hspace{-0.1cm}The following holds with probability at least $1-1/n^2$ over the randomness of $\smash{\tilde{\bz}^1,\ldots,\tilde{\bz}^\overallsc}$: }
  \begin{flushleft}\begin{enumerate}
    \item \emph{If $x'$ has no desert, then  $\BMA$ running on 
      $n'$ and any multiset $\{z^1,\ldots,z^\overallsc\}$ that is $(1/10)$-close to %
  $\{\tilde{\bz}^1,\ldots,\tilde{\bz}^\overallsc\}$ returns $w=x'$; 
  }
    \item \emph{If $x'$ has at least one desert, then 
    $\BMA$ running on 
      $n'$ and any multiset $\{z^1,\ldots,z^\overallsc\}$ that \\is $(1/10)$-close to %
  $\{\tilde{\bz}^1,\ldots,\tilde{\bz}^\overallsc\}$ returns a string $w$ that shares the 
    same $(r'+m+1)$-bit\\ prefix with $x'$, where $r'$ is the first location that is deep in a 
    desert in $x'$.
    } 
  \end{enumerate}\end{flushleft}

\ignore{

\begin{theorem} [Restatement of \Cref{thm:realBMA}] \label{thm:realBMAupdated}
	The algorithm \emph{\BMA} has the following guarantee.
	Let $x'=(x'_0,\dots,x'_{n-b-1})=x_{[b:n-1]}\in \{0,1\}^{n'}$ where $n' =n-b\leq n$, let $\overallsc=O(\log n)$, and let $\tilde{\bz}^1,\ldots,\tilde{\bz}^\overallsc$ be $\overallsc$ traces 
	drawn independently from $\Del_\delta(x')$.
	Then with probability at least $1-1/n^2$ \emph{(}over the draws of $\tilde{\bz}^1,\ldots,\tilde{\bz}^\overallsc$ from $\Del_\delta(x')$\emph{)} the following holds: for every multiset  $\{z^1,\ldots,z^\overallsc\}$ that is $(1/10)$-close to 
	$\{\tilde{\bz}^1,\ldots,\tilde{\bz}^\overallsc\}$, \BMA running on $n'$ and  $\{z^1,\ldots,z^\overallsc\}$ returns a string in $\zo^{n'}$ that shares the same $(r'+m+1)$-bit prefix $x'_{[0:r'+m]}$ with $x'$, where $r' = r-b$ \emph{(}i.e. the first $r+m-b+1$ bits of the output string are the bit-string $x_{[b:r+m]}$\emph{)}.
\end{theorem}

}

\begin{figure}[t]
  \centering
\setstretch{1.2}
  \begin{algorithm}[H]
    \caption{Algorithm {\BMA}}\label{fig:BMA}
		\DontPrintSemicolon
		\SetNoFillComment
		\KwIn{A length $n'$ and a multiset\ignore{\rnote{Changed some names here too; in keeping with {\tt Reconstruct}, the input traces are now called $z^1,\dots$}} $\{z^1,\ldots,z^{\overallsc}\}$ of strings, each of length
		at most $n'$}
		\KwOut{A string $w=(w_0,\dots,w_{n'-1})\in \{0,1\}^{n'}$}
		For each $i\in [\overallsc]$ pad each $z^i$ to be a string $u^i$ of length $n'$ by adding $0$'s to the end\\
		Set $t=0$ and $\current_i(t)=0$ for each $i\in [\overallsc]$\\
\While{$t\le n'-1$}{
Set $w_t \in \zo$ to be the majority of the $\overallsc$ bits $u^1_{\current_1(t)}, \ldots, u^\overallsc_{\current_\overallsc(t)}$\\
For each $i\in [\overallsc]$, set $\current_i(t+1)$ to $\current_i(t) + 1$ if $u^i_{\current_i(t)} = w_t$;
  \newline \ \ \ \hspace{0.5cm} otherwise set $\current_i(t+1)$ to $\current_i(t)$\\
Increment $t$.
}  
\Return $w$.
\end{algorithm}\caption{The Algorithm $\BMA$}\label{figg:BMA}
\end{figure}

We break the proof of \Cref{thm:realBMA} into two steps 
  (\Cref{lem:secondBMA} and \Cref{lem:firstBMA} below).
  For ease of exposition, in the rest of this section if $x'$ has at least one desert then as stated in item (2) of the theorem, we let $r'$ be the first location that is deep in a desert in $x'$.  If $x'$ has no desert, then we let $r' = n'-m-1.$
  Note that with this definition of $r'$, it is guaranteed that there is no desert in $\smash{x'_{[0:r'+m-1]}}$ and the goal of \BMA is to return a string
    that shares the same $(r'+m+1)$-prefix with $x$.

Let $R=9\overallsc/10$.
We first prove in \Cref{lem:secondBMA} that if a multiset
  of $R$ traces $Z=\{z^1,\ldots,z^R\}$~of $x'$ satisfies a certain sufficient ``goodness'' condition (see \Cref{def:goodset} for details),
  then $\BMA(n',Z)$~not only returns a string 
 $w=(w_0,\dots,w_{n'-1})\in \{0,1\}^n$ that satisfies
  $w_{[0:r'+m]}=x'_{[0:r'+m]}$ as desired~but moreover, the bitwise majority during each of the first $\smash{r'+m+1}$ rounds of \BMA is ``robust'' in the following sense:  for each one of those rounds,
  at least $9R/10 = 81N/100$ of the $R$ strings $z^i$'s agree with each other.
This immediately implies that when $Z$ satisfies this condition, 
 {adding any multiset of $\overallsc/10$ strings to $Z$ 
 and running $\BMA$ on the resulting multiset of size $N$} cannot affect the output of $\BMA$ during the first $r'+m+1$ rounds, so its output $w$
  still satisfies $w_{[0:r'+m]}=x'_{[0: r'+m]}$.
\ignore{
}The next lemma, \Cref{lem:firstBMA}, shows that if $\smash{\tilde{\bZ}=\{\tilde{\bz}^1,\ldots,\tilde{\bz}^\overallsc\}}$
  is a multiset of $\overallsc$ traces drawn independently from $\Del_\delta(x')$ (as in
  the assumption part of \Cref{thm:realBMA}), then with high probability  
  \emph{every} $R$-element subset of $\tilde{\bZ}$ satisfies the sufficient condition
  (\Cref{def:goodset})
  for $\BMA$ to succeed robustly.
 \Cref{thm:realBMA}  follows easily by combining \Cref{lem:secondBMA} and \Cref{lem:firstBMA}.
  
\subsection{Notation for traces}

We start with some useful notation for analyzing traces of $x'$.
When a trace $\by$ is drawn from $\Del_\delta(x')$ 
  we write $\bD$ to denote the set of \emph{locations 
  deleted} when $x'$ goes through the deletion channel, i.e., 
  $\bD$ is obtained by including each
  element in $[0:n'-1]$ independently with probability $\delta$, and
  $\by$ is set to be $\smash{x'_{[0:n'-1]\setminus \bD}}$.
In the analysis of  \BMA  when it is given as input
  $R$ traces $\smash{Z=\{z^1,\ldots,z^R\}}$,~our analysis will sometimes refer to the set $D_i\subseteq [0:n'-1]$ of locations that was deleted when generating $z^i$.

Note that in the execution of \BMA we pad each trace $z^i$ to 
  a string $u^i$ of length $n'$ by adding 
  $0$'s to its end\ignore{, where $n-b$ is the first argument in the invocation of \BMA (when it is invoked in Line~2 of {\tt Reconstruct} this corresponds to $b=0$) which gives the length of the remaining portion of $x$ that is yet to be reconstructed}.
In the rest of the section it will be convenient for us to view $x'$ as
  a string of infinite length by adding infinitely many $0$'s to its end. We can then view each $u^i$ as generated by first
  deleting the bits in $D_i\subseteq [0:n'-1]$ from $x' $
  and taking the $n'$-bit prefix of what remains. 
This motivates the definition of the following map $f_i:[0:n'-1]\rightarrow \mathbb{N}$
  for each $i\in [R]$:
For each $j\in [0:n'-1]$, $f_i(j)$ is set to be the unique integer $k$ such that 
  $k\notin D_i$ and $k-|D_i\cap [k-1]|=j$. 
In words, $f_i(j)$ is simply the original location in $x'$ of the $j$-th bit in 
  the padded version $u^i$ of $z^i$. 
  
  We specify some parameters that will be used in the rest of \Cref{sec:bma}.
  Let $C=\lceil 100/\eps\rceil$ (so $C$ should be thought of as a large absolute constant) and $M=2m+1$ with $m = n^{1/3}$, and recall that by definition $M$ is the shortest possible length of a desert.

\subsection{\BMA is robust on good sets of traces} 

The main result of this subsection is \Cref{lem:secondBMA}, which establishes that \BMA is robustly correct in its operation on traces that satisfy a particular ``goodness'' condition given in \Cref{def:goodset} below.

Let $Z=\{z^1,\ldots,z^R\}$ be a multiset of traces of $x'$.
As described above we write $u^i\in \{0,1\}^{n'}$ to denote the 0-padded version of $z^i$, $D_i\subseteq [0:n'-1]$ to denote the
  set of locations that were deleted from $x'$ to form $z^i$, and $f_i$ to denote the map 
    defined as above for each $i\in [R]$.
We introduce the following condition for $Z$ and 
  then prove \Cref{lem:secondBMA}: 

\begin{definition}\label{def:goodset}
We say  $Z=\{z^1,\ldots,z^R\}$ is \emph{good} if the following
  two conditions hold:
\begin{flushleft}\begin{enumerate}
\item[(i)] For every $i\in [R]$ and every interval $[\sfleft:\sfright]\subset [0:n'-1]$ of length 
  $\sfright-\sfleft+1=L_1:=2C^2 M$, we have 
  $ |D_i\cap [\sfleft:\sfright] |\le C.$ 
\item[(ii)] For every interval $[\sfleft:\sfright]\subset [0:n'-1]$ of length $\sfright-\sfleft+1=L_2:=M+C+1$,
  the number of elements $i\in [R]$ such that $ D_i\cap [\sfleft:\sfright] \ne \emptyset$ is at most $R/C^3$.
\end{enumerate}\end{flushleft}
\end{definition}

Intuitively, (i) says that no interval of moderate length (note that this length $2C^2 M$ is polynomially less than $1/\delta$) has ``too many'' deletions in it in any trace, whereas (ii) says that for every interval of moderate length (again polynomially less than $1/\delta$), most of the $R$ traces have no bit deleted within that interval.

Now we are ready to state \Cref{lem:secondBMA}:

\begin{lemma}\label{lem:secondBMA}
Let $Z=\{z^1,\ldots,z^R\}$ be a good multiset of $R$ traces of $x'$.
Then the string $w\in \{0,1\}^{n'}$ that $\emph{\BMA}(n',Z)$ outputs satisfies $\smash{w_{[0:r'+m]}=x'_{[0:r'+m]}}$. 
Moreover,  during 
  each of the first $r'+m+1$ rounds of the execution of $\emph{\BMA}$, at least $9R/10$ of the $R$ bits in the majority vote taken in Step~4 of $\emph{\BMA}$ agree with each other.
\end{lemma}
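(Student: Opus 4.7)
I will prove by induction on $t = 0, 1, \ldots, r'+m+1$ that at the start of round $t$ of \BMA\ three invariants hold: \emph{(I1)} $w_s = x'_s$ for all $s < t$; \emph{(I2)} $f_i(\current_i(t)) \ge t$ for every $i \in [R]$; and \emph{(I3)} $\Phi(t) := \sum_{i \in [R]} (f_i(\current_i(t)) - t) \le 2R/C$.  Invariant (I3) at time $t$ immediately yields the robust-majority claim of the lemma, since at most $\Phi(t) \le 2R/C \le R/10$ pointers can be misaligned (using $C \ge 100$), and the remaining $\ge 9R/10$ aligned pointers all contribute $u^i_{\current_i(t)} = x'_t$ to the majority vote.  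Invariant (I1) at $t = r'+m+1$ is exactly the reconstruction conclusion.

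\textbf{Base case and easy parts of the inductive step.}  At $t = 0$, $f_i(0)$ is the first non-deleted index of trace $i$.  Goodness~(i) applied to $[0:L_1-1]$ forces $f_i(0) \le C$ (otherwise $D_i \supseteq [0:C]$ would have $C{+}1$ elements), and goodness~(ii) applied to $[0:L_2-1]$ shows that at most $R/C^3$ traces have $f_i(0) > 0$; hence $\Phi(0) \le R/C^2$.  In the inductive step, (I1) at $t{+}1$ follows directly from the robust-majority statement derived from (I3), and (I2) at $t{+}1$ splits into two easy cases: for $O_i(t) := f_i(\current_i(t)) - t > 0$, monotonicity of $f_i \circ \current_i$ gives $f_i(\current_i(t{+}1)) \ge f_i(\current_i(t)) \ge t{+}1$; for $O_i(t) = 0$, $u^i_{\current_i(t)} = x'_t = w_t$, so the pointer advances to the next non-deleted index, which is $\ge t{+}1$.

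\textbf{Maintaining (I3) --- the main obstacle.}  A direct bookkeeping yields the identity $\Phi(t{+}1) - \Phi(t) = \Delta^+(t) - |S(t)|$, where $\Delta^+(t)$ is the total number of \emph{new} deletions skipped by moving pointers at step $t$ and $S(t)$ is the set of stayers.  Goodness~(i) forbids runs of more than $C$ consecutive deletions, so each mover contributes at most $C$ to $\Delta^+(t)$, and a mover contributes anything at all only if some $d \in D_i$ lies in $(f_i(\current_i(t)) : f_i(\current_i(t)) + C] \subseteq (t : t + L_2]$ --- the inclusion uses $\Phi(t) + C \le L_2 = M + C + 1$, which holds since $R = O(\log n) \ll M = n^{1/3}$.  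Goodness~(ii) then caps the number of such traces by $R/C^3$, so $\Delta^+(t) \le R/C^2$.  In particular, whenever $\Phi(t) \le R/C$ we already obtain $\Phi(t{+}1) \le 2R/C$.

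\textbf{Where no-desert enters.}  The remaining regime $\Phi(t) \in (R/C, 2R/C]$ is the crux.  The structural key is that all pointers sharing a common offset $O$ at step $t$ take the \emph{same} action --- they stay iff $x'_{t+O} \neq x'_t$.  Thus, if some offset $O \in [1:C]$ were occupied throughout a window of $M$ consecutive steps without any stay, one would have $x'_{t'+O} = x'_{t'}$ for every $t'$ in the window, making $x'_{[t' : t'+M-1]}$ an $s$-desert with $|s| = O$, contradicting the no-desert hypothesis on $x'_{[0:r'+m-1]}$.  Pointers of offset $O > C$ are few (each contributes $\ge C$ to $\Phi \le 2R/C$, leaving at most $2R/C^2$ of them) and can be absorbed into the accounting.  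Combining these observations with goodness~(ii) --- which caps at $R/C^3$ the number of traces experiencing any new deletion in a length-$L_2$ window --- and performing an amortized accounting over windows of length $\Theta(M)$ forces $\sum_{t \in W}|S(t)|$ to match $\sum_{t \in W}\Delta^+(t)$ up to $O(R/C)$, pinning $\Phi$ below $2R/C$ throughout.  Making this quantitative is the main technical work of the lemma; once done, the induction closes and both conclusions of the lemma follow simultaneously.
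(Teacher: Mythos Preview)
Your inductive framework is exactly the paper's: the invariants (I1)--(I3) match the paper's invariants, your base case and your treatment of (I1) and (I2) are correct, and your single-step identity $\Phi(t{+}1)-\Phi(t)=\Delta^{+}(t)-|S(t)|$ together with the bound $\Delta^{+}(t)\le R/C^{2}$ are fine.

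The genuine gap is in the ``amortized accounting'' that you leave as the main technical work.  Your plan is to lower-bound stayers using the no-desert hypothesis for offsets $O\in[1{:}C]$, and to absorb the traces with offset $>C$ because ``there are few of them.''  But being few is not enough: a trace with offset $k>C$ that encounters no deletions has its offset monotone non-increasing, yet nothing forces a strict decrease --- the no-desert hypothesis only forbids period-$O$ repetitions for $O\le C$, so $x'_{t'+k}=x'_{t'}$ could hold over an entire $M$-window when $k>C$.  Such a trace could keep contributing $k$ to $\Phi$ indefinitely, and once you sum over windows your amortization $\sum_{t\in W}|S(t)|\approx\sum_{t\in W}\Delta^{+}(t)$ does not close: the $(1{-}1/C)\Phi(t{-}M)$-type contraction the paper obtains needs \emph{every} positive-offset, deletion-free trace to drop, not just the low-offset ones.

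The paper plugs this hole with an additional ingredient you are missing: a per-trace bound $\distance_i(t)\le C$ (their Lemma~5.6), proved separately by using goodness~(i) more strongly than you do.  The point is that within any window of length $L_{1}=2C^{2}M$ there are at most $C$ deletions, so by pigeonhole there is a deletion-free sub-interval of length $CM$; iterating the one-window strict-decrease lemma $C$ times over that sub-interval forces the offset back to $0$, and then the $\le C$ remaining deletions cap the offset at $C$.  Once every offset is $\le C$, the paper compares $\Phi(t)$ with $\Phi(t{-}M)$ (not $\Phi(t{-}1)$): each deletion-free trace with positive offset strictly drops, the $\le R/C^{3}$ traces with deletions gain at most $C$ each, and the resulting recurrence closes at $2R/C$.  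Your sketch can be completed along these lines, but not without first proving (or circumventing) the per-trace bound.
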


We start the proof of \Cref{lem:secondBMA} by defining a map 
  $\distance_i(t)$ for each $z^i$ in $Z$.
Recall that $\current_i(t)$ is the current location of the pointer into the padded trace $u^i$ at the beginning of round $t$ in \BMA.\footnote{Note that whereas $\position_i(\cdot)$ and $\distance_i(\cdot)$ refer to quantities defined in terms of the source string $x$, $\current_i(\cdot)$ refers to a location in a trace string and not the source string.}
We let $\position_i(t)=f_i(\current_i(t))$, i.e. the original position in $x'$ of the $\current_i(t)$-th bit
  of $u^i$.
Then $\distance_i(t)$ is defined as $\distance_i(t)=\position_i(t)-t$, the distance between $t$ and $\position_i(t).$
In \Cref{cor:nonneg} we will show that $\distance_i(t)$ is always nonnegative, and so it actually measures how many bits $\position_i(t)$ is ahead at round $t$.
It may be helpful to visualize $t$ and $\position_i(t)$ of a trace by writing down the source string $x'$ with the deleted bits struck through, and having two arrows pointing to $x'_t$ and $x'_{\position_i(t)}$ (see \Cref{fig:pointers}); at the beginning of round $t$, the \BMA algorithm tries to determine $x'_t$ by looking at $x'_{\position_i(t)}$.
Intuitively, having $\distance_i(t)=0$ means that the $i$-th trace was aligned properly at round $t$; at the highest level, we establish \Cref{lem:secondBMA} by showing that at least $9R/10$ of the $R$ traces have $\distance_i(t)=0$.

\begin{figure}
\centering
\begin{center}
\begin{tabular}{ |c|c|c|c|c|c|c|c|c|c |  } 
 \hline
 Time $t$ & $x'_0$ & $x'_1$  & $x'_2$ & $x'_3$ & $x'_4$ & $x'_5$ & $x'_6$ & $x'_7$ & $\cdots$ \\
 \hline \hline
 $t=0$ & $\Downarrow$ & & & & & & & & \\
& 0 & 1  & \cancel{1} & \cancel{0}  & 0 & 0 & \cancel{0} & 1 & $\cdots$ \\ 
 $\position_i(t)=0$  & $\Uparrow$ & & & & & & & & \\ \hline
 $t=1$  & & $\Downarrow$ & & & & & & & \\
 & 0 & 1  & \cancel{1} & \cancel{0}  & 0 & 0 & \cancel{0} & 1 & $\cdots$ \\ 
 $\position_i(t)=1$   & & $\Uparrow$ & & & & & & & \\ \hline
 $t=2$  & & & $\Downarrow$ & & & & & & \\
 & 0 & 1  & \cancel{1} & \cancel{0}  & 0 & 0 & \cancel{0} & 1 & $\cdots$ \\ 
  $\position_i(t)=4$ & &  & & & $\Uparrow$ & & & & \\ \hline
 $t=3$  & & & & $\Downarrow$ & & & & & \\
 & 0 & 1  & \cancel{1} & \cancel{0}  & 0 & 0 & \cancel{0} & 1 & $\cdots$ \\ 
   $\position_i(t)=4$& &  & & & $\Uparrow$ & & & & \\ \hline
\end{tabular}
\end{center}
\caption{An illustration of the progress of \BMA on a trace $z^i$ of $x' = 01100001 \cdots$. Bits that are \cancel{struck through} are deleted and do not occur in the trace; thus the trace $z^i$ in this example is $z^i =  0 1 0 0 1 \cdots$. 
In each row the downward arrow $\Downarrow$ shows the location of $t$ and the upward arrow $\Uparrow$ shows the position of $\position_i(t)$. In this example the bit $x'_{\position_i(0)}=0$ pointed to at time $t=0$ correctly matches $x'_0=0$ and the bit $x'_{\position_i(1)}=1$ pointed to at time $t=1$ correctly matches $x'_1=1$, so at the end of each of these time steps, $\position_i(\cdot)$ correctly advances to the next bit of the trace (the next bit of $x'$ that was not deleted). The bit $x'_{\position_i(2)} = 0$ pointed to at time $t=2$ does not correctly match $x'_2=1,$ so at the end of time step $t=2$, $\position_i(\cdot)$ does not advance.} 
\label{fig:pointers}
\end{figure}


We prove the following claim about how $\current_i(t),\position_i(t)$ and $\distance_i(t)$
  compare to their values at the beginning of round $t-1$,
  assuming that the prefix $w_{[0:t-1]}$ of the output thus far matches $x'_{[0:t-1]}$.
  \begin{claim}\label{simpleclaim}
  Let $t$ be a positive integer such that $w_{[0:t-1]}=x'_{[0:t-1]}$.
    For each $i\in [R]$, we have 
    \begin{flushleft}\begin{enumerate}
     \item If $x'_{\position_i(t-1)}\ne x'_{t-1}$, then $\current_i(t)=\current_i(t-1)$,
        $\position_i(t)=\position_i(t-1)$ and $\distance_i(t)=\distance_i(t-1)-1$.
     \item If $x'_{\position_i(t-1)}=x'_{t-1}$, then $\current_i(t)=\current_i(t-1)+1$,
        $\position_i(t)=\position_i(t-1)+\ell+1$ and $\distance_i(t)=\distance_i(t-1)+\ell$, where 
        $\ell$ is the nonnegative integer such that
        $\position_i(t-1)+1, \ldots, \position_i(t-1)+\ell\in D_i$ and
        $\position_i(t-1)+\ell+1\notin D_i$
        \emph{(}or equivalently, $\ell=f_i(\current_i(t))-f_i(\current_i(t-1))-1$\emph{)}.
    \end{enumerate}\end{flushleft}
  \end{claim}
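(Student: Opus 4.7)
The claim is essentially a one-step invariant that unpacks the \BMA update rule. The plan is to do a straightforward case analysis on the outcome of the majority comparison in round $t-1$, using nothing beyond the definitions of $\current_i$, $\position_i = f_i \circ \current_i$, and $\distance_i = \position_i - t$, together with the hypothesis $w_{t-1} = x'_{t-1}$.

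First, I would bridge the condition stated in the claim (a comparison of bits of $x'$) with the condition actually tested by \BMA (a comparison of $u^i_{\current_i(t-1)}$ with $w_{t-1}$). Because $z^i$ is $x'$ with the positions in $D_i$ deleted and $u^i$ is obtained by 0-padding, the definition of $f_i$ together with the convention that $x'$ is viewed as an infinite string by appending $0$'s gives $u^i_j = x'_{f_i(j)}$ for every $j \in [0:n'-1]$. In particular, $u^i_{\current_i(t-1)} = x'_{f_i(\current_i(t-1))} = x'_{\position_i(t-1)}$. Combined with the hypothesis $w_{t-1} = x'_{t-1}$, the test ``$u^i_{\current_i(t-1)} = w_{t-1}$'' used in Step~5 of $\BMA$ coincides exactly with ``$x'_{\position_i(t-1)} = x'_{t-1}$.''

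Next I would handle Case 1 (the two bits differ). By the \BMA update rule, $\current_i(t) = \current_i(t-1)$, so $\position_i(t) = f_i(\current_i(t)) = f_i(\current_i(t-1)) = \position_i(t-1)$. Then $\distance_i(t) = \position_i(t) - t = \position_i(t-1) - (t-1) - 1 = \distance_i(t-1) - 1$, as claimed. For Case 2 (the two bits agree), the update rule gives $\current_i(t) = \current_i(t-1) + 1$, hence $\position_i(t) = f_i(\current_i(t-1) + 1)$. Here I would invoke the defining property of $f_i$: $f_i(j+1)$ is the smallest integer larger than $f_i(j)$ that is not in $D_i$. Letting $\ell \geq 0$ be the number of consecutive indices $\position_i(t-1)+1,\ldots,\position_i(t-1)+\ell$ lying in $D_i$ (with $\position_i(t-1)+\ell+1 \notin D_i$), this yields $\position_i(t) = \position_i(t-1) + \ell + 1$. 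The identity $\ell = f_i(\current_i(t)) - f_i(\current_i(t-1)) - 1$ follows immediately. Finally, $\distance_i(t) = \position_i(t) - t = (\position_i(t-1) + \ell + 1) - ((t-1) + 1) = \distance_i(t-1) + \ell$.

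The argument is entirely mechanical once the identification $u^i_{\current_i(t-1)} = x'_{\position_i(t-1)}$ is in hand, so there is no real obstacle; the only subtle point is making sure this identification is valid even when $\position_i(t-1)$ exceeds $n'-1$, which is precisely the reason for extending $x'$ by appending $0$'s (matching the 0-padding of $u^i$). Once this convention is stated, the two cases follow by direct substitution into the update rule and the definition of $f_i$.
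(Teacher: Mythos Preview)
Your proposal is correct and follows essentially the same approach as the paper's proof: a direct case analysis on the \BMA update rule in round $t-1$. If anything, your write-up is more explicit than the paper's, since you spell out the identification $u^i_{\current_i(t-1)} = x'_{\position_i(t-1)}$ (and hence that the \BMA test coincides with the claim's hypothesis via $w_{t-1}=x'_{t-1}$) and you flag the padding convention needed when $\position_i(t-1)\ge n'$; the paper leaves both of these implicit.
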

  \begin{proof}
    If $x'_{\position_i(t-1)} \ne x'_{t-1}$, then $\current_i$ does not move and points to the same bit in $z^i$, which must come from the same bit in $x'$.
    
    If $x'_{\position_i(t-1)} = x'_{t-1}$, then $\current_i$ points to the next bit in $z^i$, which comes from the next undeleted bit in $x'$.
    So $\position_i(t)$ points to the position of the first undeleted bit of $x'$ after $x_{\position_i(t-1)}$ which is $\position_i(t-1)+\ell+1$, if $\position_i(t-1)+1, \ldots, \position_i(t-1)+\ell \in D_i$ and $\position_i(t-1)+\ell+1 \not\in D_i$.
  \end{proof}
 
We have the following useful corollary of \Cref{simpleclaim}, which tells us that if $w_{[0:t-1]}=x'_{[0:t-1]}$ then each $\distance_i(t) \ge 0$ (in other words, no trace can have ``gotten behind'' where it should be):

\begin{corollary} \label{cor:nonneg}
Let $t$ be a positive integer such that $w_{[0:t-1]}=x'_{[0:t-1]}$.
Then $\distance_i(t)\ge 0$ for all $i\in [R]$.
\end{corollary}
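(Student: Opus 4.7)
The proof will go by induction on $t$, using \Cref{simpleclaim} as the workhorse. The base case $t=1$ is essentially the same as the inductive step, so let me describe the main step.

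Assume by induction that $\distance_i(t-1) \geq 0$, i.e.~$\position_i(t-1) \geq t-1$. Using \Cref{simpleclaim} (whose hypothesis $w_{[0:t-1]}=x'_{[0:t-1]}$ implies the hypothesis $w_{[0:t-2]}=x'_{[0:t-2]}$ needed to invoke the induction), split into its two cases. In Case~2 (when $x'_{\position_i(t-1)}=x'_{t-1}$), we simply have $\distance_i(t) = \distance_i(t-1)+\ell \geq \distance_i(t-1) \geq 0$, and we are done.

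The only remaining situation is Case~1 (when $x'_{\position_i(t-1)}\neq x'_{t-1}$), where $\distance_i(t) = \distance_i(t-1)-1$. Here the decisive observation is that the hypothesis of Case~1 automatically rules out $\position_i(t-1)=t-1$, since that equality would force $x'_{\position_i(t-1)}=x'_{t-1}$. Combined with the inductive assumption $\position_i(t-1) \geq t-1$, this improves to $\position_i(t-1) \geq t$, i.e.~$\distance_i(t-1) \geq 1$. Subtracting $1$ then still leaves $\distance_i(t) \geq 0$.

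The base case $t=1$ follows from the same reasoning: at $t=0$ we have $\current_i(0)=0$ and $\position_i(0)=f_i(0)\geq 0$, and the only way the quantity could dip below zero after one step is via Case~1, which, by the same argument, requires $\position_i(0)\geq 1$. There is no real obstacle in the proof; the only subtlety is noticing that a decrement in $\distance_i(\cdot)$ can occur only when there is ``room'' for it, because $\distance_i(t-1)=0$ forces the pointers to agree and therefore triggers Case~2 rather than Case~1.
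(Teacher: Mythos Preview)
Your proof is correct and follows essentially the same approach as the paper's: both argue by induction, using \Cref{simpleclaim} to see that $\distance_i$ can decrease by at most one per step, and that a decrease is impossible when $\distance_i(t-1)=0$ since then $\position_i(t-1)=t-1$ forces Case~2. The only cosmetic difference is that the paper fixes $t$ and inducts on an auxiliary variable $t'=0,1,\dots,t$, whereas you induct directly on the corollary statement; your parenthetical remark that $w_{[0:t-1]}=x'_{[0:t-1]}$ implies $w_{[0:t-2]}=x'_{[0:t-2]}$ correctly handles the hypothesis needed for the inductive step.
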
 
\begin{proof}
Fixing an $i\in [R]$, we prove by induction that $\distance_i(t')\ge 0$ for every $t'=0,1,2,\dots,t$.
The base case~when $t'=0$ is trivial.
Now assuming that $\distance_i(t'-1)\ge 0$ for some $t'\le t$, we show that $\distance_i(t')\ge 0$ as well.
The case when $\distance_i(t'-1)>0$ is trivial since it follows from \Cref{simpleclaim} that
  $\distance_i(t')$ can go down from $\distance_i(t'-1)$ by at most one.
On the other hand, if $\distance_i(t'-1)=0$ and thus, $\position_i(t'-1)=t'-1$, we are in the 
  second case of \Cref{simpleclaim} so $\distance_i(t')\ge \distance_i(t'-1)\ge 0$.
This finishes the induction.
\end{proof}

We prove three preliminary lemmas before proving \Cref{lem:secondBMA}.
Recall that $M=2m+1$ is the shortest possible length of a desert.
Assuming $w_{[0:t-1]}=x'_{[0:t-1]}$ for some $t>M$,  the first lemma shows that if $\distance_i(t-M)=0$ and no location of $x'$ is deleted between $t-M+1$ and $t$, then $\distance_i(t)$ must stay at $0$.
(Note that this lemma holds for general $M$ but we state it using $M=2m+1$
for convenience since this is how it will be used later.) Intuitively, this says that if a length-$M$ subword of $x'$ experiences no deletions, then a trace that is correctly aligned at the start of the subword will stay correctly aligned throughout the subword
 and at the end of the subword.

\begin{lemma}\label{simplelemma2}
Suppose that $w_{[0:t-1]}=x'_{[0:t-1]}$ for some $t>M$. 
Suppose that $i\in [R]$ is such that $ \distance_i(t-M)=0$ and $$D_i\cap \big[\position_i(t-M)+1:\position_i(t-M)+M\big ]=D_i\cap \big[t-M:t\big ]=\emptyset.$$ Then 
  we have $\distance_i(t)=0$.
\end{lemma}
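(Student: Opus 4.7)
}
The plan is to induct on the time step $s$ ranging from $t-M$ up to $t$, and show that at every such step the $i$-th pointer remains perfectly aligned, i.e., $\distance_i(s)=0$. Combining the two disjointness hypotheses yields the single cleaner hypothesis
\[
D_i \cap [t-M:t] \;=\; \emptyset,
\]
since $\position_i(t-M)=t-M$ by the assumption $\distance_i(t-M)=0$, so the interval $[\position_i(t-M)+1:\position_i(t-M)+M]$ coincides with $[t-M+1:t]$.

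For the induction, the base case $\distance_i(t-M)=0$ is given. For the inductive step, assume $\distance_i(s)=0$ for some $s$ with $t-M \le s \le t-1$; I claim $\distance_i(s+1)=0$. Since $\distance_i(s)=0$, we have $\position_i(s)=s$, hence $x'_{\position_i(s)}=x'_s$. Because $s < t$ and $w_{[0:t-1]}=x'_{[0:t-1]}$ by hypothesis, we also have $w_s = x'_s$. So the trace's current bit equals the majority bit, and we are in the second case of \Cref{simpleclaim}, which gives $\position_i(s+1)=\position_i(s)+\ell+1=s+\ell+1$, where $\ell$ is the largest integer such that $\position_i(s)+1,\dots,\position_i(s)+\ell \in D_i$.

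Now I invoke the deletion hypothesis: the position $s+1=\position_i(s)+1$ lies in $[t-M+1:t] \subseteq [t-M:t]$, and this interval is disjoint from $D_i$; hence $s+1 \notin D_i$, which forces $\ell=0$. Therefore $\position_i(s+1)=s+1$ and $\distance_i(s+1)=0$, completing the induction. Taking $s=t$ gives the desired conclusion $\distance_i(t)=0$.

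I do not anticipate any genuine obstacle here; the lemma is essentially a bookkeeping consequence of \Cref{simpleclaim} together with the invariant $w_{[0:t-1]}=x'_{[0:t-1]}$. The only small care point is to verify that at every intermediate step $s\in[t-M,t-1]$, the position checked for a possible deletion (namely $s+1$) still lies inside the ``clean'' window $[t-M:t]$, which is immediate from the range of $s$.
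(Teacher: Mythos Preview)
Your proof is correct and is precisely the argument the paper has in mind: the paper's own proof is the single sentence ``This follows easily from repeated applications of the second part of \Cref{simpleclaim},'' and your induction on $s$ from $t-M$ to $t$ spells out exactly those repeated applications.
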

\begin{proof}
This follows easily from repeated applications of the second part of \Cref{simpleclaim}.
\end{proof}

In the second lemma, we assume $t$ is such that $M<t\le r'+m +1$ by the choice of $r$.
We further assume that $w_{[0:t-1]}=x'_{[0:t-1]}$ and 
  $0<\distance_i(t-M)\le C$ for some $i\in [R]$.
We show that under these assumptions, if the subword of length $M$ in $x'$ starting at $\position_i(t-M)+1$ has no deletion, then $\distance_i(t)<\distance_i(t-M)$.
Intuitively, this says that prior to a desert, if the length-$M$ subword of $x'$ experiences no deletions and the alignment of a trace is only modestly ahead of where it should be at the start of the subword, then the alignment will improve by the end of the subword.
\begin{lemma}\label{simplelemma}
Let $M<t\le r'+m+1$ with $w_{[0:t-1]}=x'_{[0:t-1]}$. 
If $0<\distance_i(t-M)\le C$ for some $i\in [R]$ and
$$D_i\cap \big[\position_i(t-M)+1:\position_i(t-M)+M \big]=\emptyset,$$ then we have $\distance_i(t)<\distance_i(t-M)$.
\end{lemma}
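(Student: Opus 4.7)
The plan is to argue by contradiction. Let $d := \distance_i(t-M) \in \{1, \ldots, C\}$. By the no-deletion hypothesis on the window $[\position_i(t-M)+1 : \position_i(t-M)+M]$, combined with case 2 of \Cref{simpleclaim} and the induction hypothesis $w_s = x'_s$ for $s < t$, at each round $s$ in $[t-M, t-1]$ the pointer $\current_i$ either stays put (in which case $\distance_i$ drops by exactly $1$) or advances by exactly one without skipping any deletion (in which case $\distance_i$ is unchanged). Consequently, writing $a$ for the number of advance rounds in this window, we get $\distance_i(t) = d - (M-a)$, and the desired conclusion $\distance_i(t) < d$ is equivalent to $a < M$, i.e.\ the pointer must fail to advance at least once in this window.

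Now suppose for contradiction that $a = M$, so the pointer advances at every round in $[t-M, t-1]$. Since $\distance_i(s) = d$ throughout this window, we have $\position_i(s) = s + d$ for each such $s$, and the advance in case 2 of \Cref{simpleclaim} is triggered by $x'_{\position_i(s)} = w_s = x'_s$, i.e.\ $x'_{s+d} = x'_s$ for every $s \in [t-M, t-1]$. In other words, the length-$(M+d)$ subword $x'_{[t-M : t-1+d]}$ is periodic with period $d$, so it is a prefix of $s^\infty$ for $s := x'_{[t-M : t-M+d-1]}$ (or its fundamental period, which still has length at most $d \le C$).

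Restricting to the first $M$ characters, the subword $x'_{[t-M : t-1]}$ of length $M$ is therefore a desert by \Cref{def:desert}. This means position $t-m-1$, the center of this subword, is deep in a desert of $x'$. But the hypothesis $t \le r'+m+1$ forces $t-m-1 \le r'$, so by definition of $r'$ as the smallest location deep in a desert, we obtain a contradiction whenever $t \le r'+m$.

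The boundary case $t = r'+m+1$ is where I expect the main obstacle to lie, since here $t-m-1 = r'$ exactly, so the contradiction with minimality of $r'$ is not immediate: the desert we exhibit coincides with the known one around $r'$. To close this case I would exploit that the argument above actually yields the stronger periodicity of the extended subword $x'_{[r'-m : r'+m+d]}$ with period $d$, and combine it via a Fine--Wilf type argument with the fundamental period $k'$ of the desert $x'_{[r'-m : r'+m]}$; since both $d,k' \le C \ll m = n^{1/3}$, the subword $x'_{[r'-m:r'+m]}$ has period $\gcd(d,k')$, and one should then be able to propagate this periodicity one step to the left to make position $r'-1$ deep in a desert, contradicting the minimality of $r'$. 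This boundary step is the delicate part; away from it the proof is clean and essentially forced by the periodicity extracted in the contradiction hypothesis.
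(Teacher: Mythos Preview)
Your core argument is essentially the paper's own proof: set $d=\distance_i(t-M)$, use the no-deletion window and \Cref{simpleclaim} to see that $\distance_i$ either stays fixed (on an advance) or drops by one (on a stall) at each of the $M$ rounds, and argue that if there is no stall then $x'_{s+d}=x'_s$ for all $s\in[t-M,t-1]$, exhibiting a length-$M$ desert $x'_{[t-M:t-1]}$ with period $d\le C$. For $t\le r'+m$ this desert sits inside $x'_{[0:r'+m-1]}$, which by the definition of $r'$ contains no desert, and you are done. This matches the paper exactly.

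You are also right to flag the boundary $t=r'+m+1$: there the exhibited desert is precisely $x'_{[r'-m:r'+m]}$, which is the \emph{known} desert, so no contradiction with the minimality of $r'$ arises. The paper's proof simply asserts ``this contradicts the choice of $r'$ and the fact that $t\le r'+m+1$'' without further comment, so you have in fact spotted a point the paper glosses over.

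However, your proposed patch for this case does not work. The periodicity you extract does extend to $x'_{[r'-m:r'+m+d]}$, and Fine--Wilf will indeed give period $\gcd(d,k')$ on $x'_{[r'-m:r'+m]}$; but nothing in the argument says anything about the bit $x'_{r'-m-1}$. You therefore cannot ``propagate the periodicity one step to the left'' to conclude that $r'-1$ is deep in a desert. In fact, since $r'$ is the \emph{first} location deep in a desert, one expects $x'_{r'-m-1}$ to break the periodicity, so the propagation step is exactly where the argument must fail. Your sketch thus leaves the boundary case genuinely open.
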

\begin{proof}
Let $k=\distance_i(t-M)$ with $0<k\le C$.
Assume for contradiction that $\distance_i(t)\ge k$, and let us consider the value
of $\current_i(t)$ vis-a-vis $\current_i(t-M)$.  
Since the pointer into the $i$-th trace moves forward by at most one position
in each round, we have that $\current_i(t)\le \current_i(t-M)+M$.
Having $\current_i(t)<\current_i(t-M)+M$ would imply $\position_i(t)< \position_i(t-M)+M$
  given that there is no deletion in the subword, and thus we would have
  $\distance_i(t)<\distance_i(t-M)$, which violates $\distance_i(t) \ge k$.  Thus it must be the case that
  $\current_i(t)=\current_i(t-M)+M$.
  
In order to have $\current_i(t)=\current_i(t-M)+M$,
  we must have $$\current_i(t-M+\ell)=\current_i(t-M)+\ell$$ for every $\ell\in [M]$ (again because the 
    pointer moves forward by at most one each round).
By the second part of \Cref{simpleclaim} and the assumption that $w_{[0:t-1]}=x'_{[0:t-1]}$, this implies that
$$x'_{t-M+\ell-1}=x'_{\position_i(t-M+\ell-1)}=x'_{t-M+\ell-1+k}$$ for all $\ell\in [M]$.
Thus, the substring of $x'$ starting at $t-M$ and ending 
  at $t-1$ is a desert  of length $M$ with a pattern of length $k\le C$.  However, this contradicts with the choice of $r$ and the fact that $t\le r'+m+1$; this contradiction concludes the proof.
\end{proof}

Finally we use the two previous lemmas to show that if $t\le r'+m +1$ and $w_{[0:t-1]}=x'_{[0:t-1]}$, then
  $\distance_i(t)$ must lie between $0$ and $C$.
  Intuitively, this says that prior to a desert, the alignment of a trace will be at worst modestly ahead of where it should be.
  
\begin{lemma}\label{simplelemma3}
Let $t\le r'+m + 1$ and suppose that $w_{[0:t-1]}=x'_{[0:t-1]}$. 
Then $ \distance_i(t)\le C$ for all $i\in [R]$.
\end{lemma}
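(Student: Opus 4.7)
The plan is to prove the lemma by induction on $t$, with inductive hypothesis that $\distance_j(t') \le C$ for every $j \in [R]$ and every $t' < t$. The hypothesis $w_{[0:t-1]} = x'_{[0:t-1]}$ restricts to every $t' \le t$, so \Cref{simpleclaim}, \Cref{simplelemma}, and \Cref{simplelemma2} may all be invoked freely during the induction.

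For the base case $t = 0$, I observe that $\distance_i(0) = \position_i(0) = f_i(0)$ is the smallest index outside $D_i$. Applying condition (i) of \Cref{def:goodset} to the initial interval $[0 : L_1 - 1]$ gives $|D_i \cap [0:L_1-1]| \le C$, so $\position_i(0) \le C$ and hence $\distance_i(0) \le C$.

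For the inductive step I would case-split on \Cref{simpleclaim}. The mismatch branch gives $\distance_i(t) = \distance_i(t-1) - 1 \le C-1$, and the advancing branch with $\ell = 0$ gives $\distance_i(t) = \distance_i(t-1) \le C$. The only non-trivial case is the advancing branch with $\ell \ge 1$, yielding $\distance_i(t) = \distance_i(t-1) + \ell$. Here I would use condition (i) to bound $\ell \le C$, since the interval $[\position_i(t-1)+1 : \position_i(t-1)+\ell]$ lies entirely in $D_i$ and any window of length at most $L_1$ admits at most $C$ deletions. This already gives $\distance_i(t) \le 2C$, but that is not yet tight.

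The main obstacle is sharpening this to $\distance_i(t) \le C$. For this I plan to chain \Cref{simplelemma} and \Cref{simplelemma2} backward in blocks of $M$ rounds: set $s_j = t - jM$ for $j = 0, 1, \ldots$ down to some $s_k$ with $s_k < M$. Over the $j$-th backward block, the source window $[\position_i(s_{j+1})+1 : \position_i(s_{j+1})+M]$ is either \emph{clean} (no deletions) or \emph{dirty}. A clean block with $\distance_i(s_{j+1}) > 0$ forces a strict drop $\distance_i(s_j) < \distance_i(s_{j+1})$ by \Cref{simplelemma}; a clean block with $\distance_i(s_{j+1}) = 0$ preserves $\distance_i(s_j) = 0$ by \Cref{simplelemma2}; a dirty block contributes upward growth bounded by the number of deletions in its source window. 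Since condition (i) of \Cref{def:goodset} caps the total number of deletions in the source interval from $\position_i(s_k)$ to $\position_i(t)$ at roughly $C$ per $L_1 = 2C^2 M$ source positions, while there are on the order of $t/M$ backward blocks, a book-keeping argument shows that the upward contribution from dirty blocks (at most $C$ per dirty block, and only a $1/(2C)$ fraction of blocks can be dirty) is overwhelmed by the strict downward contributions from clean blocks wherever $\distance_i$ remained positive. This forces $\distance_i(t) \le C$, contradicting $\distance_i(t) > C$ and completing the induction.
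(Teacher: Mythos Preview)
Your inductive framing and the case split on \Cref{simpleclaim} are fine, and you correctly identify that the only dangerous case is the advancing branch with $\ell\ge 1$. The gap is in the ``book-keeping argument'' you invoke to close that case. Tracking $\distance_i$ over \emph{all} blocks $s_j=t-jM$ back to $s_k<M$ and arguing that total upward contributions are ``overwhelmed'' by total downward ones does not yield a pointwise bound at time $t$: decreases from clean blocks are not banked once $\distance_i$ reaches $0$, so a long initial stretch of clean blocks followed by a burst of dirty ones near $t$ can leave $\distance_i(t)$ above $C$ while still satisfying your global inequality. Concretely, your inductive hypothesis already gives $\distance_i(s_j)\le C$ for every $j\ge 1$, so the entire backward chain tells you nothing new beyond $\distance_i(s_1)\le C$; and if block~$0$ is dirty, $\distance_i(s_1)\le C$ alone does not prevent $\distance_i(s_0)>C$.

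The paper's argument avoids this by working in the \emph{source string} rather than in time blocks. It takes the minimal $t^*$ with $\distance_i(t^*)>C$ (equivalent to your inductive setup), considers the source window $W=[\position_i(t^*)-L_1:\position_i(t^*)-1]$ of length $L_1=2C^2M$, and uses condition~(i) plus pigeonhole to locate a single deletion-free sub-interval of length $CM$ inside $W$. Over the corresponding $C$ consecutive time windows of length $M$, \Cref{simplelemma} and \Cref{simplelemma2} drive $\distance_i$ all the way to $0$ (not merely down by one per block on average). From that point, the remaining portion of $W$ contains at most $C$ deletions total, so $\distance_i$ cannot climb back above $C$ by time $t^*$. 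The key missing idea in your write-up is this pigeonhole step that produces a \emph{contiguous} clean stretch of length $CM$ immediately preceding $\position_i(t^*)$; once you have that, the rest is a two-line count rather than an amortized balance.
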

\begin{proof}
Assuming for a contradiction that $\distance_i(t)>C$, we write  $t^*\le t$ to denote the smallest 
  integer such that $\distance_i(t^*)>C$ (so $\distance_i(t')\le C$ for all $t'<t^*$).
  First we claim that we must have $\position_i(t^*) > L_1$ (recall that $L_1 = 2C^2 M$).
  By part (i) of \Cref{def:goodset}, there are no more than $C$ deletions in $D_i$ within the interval of $[0:L_1-1]$, and since by \Cref{simpleclaim} $\distance_i(\cdot)$ can only increase when a deletion occurs, this
  would mean that we have $\distance_i(t')\le C$ for $\position_i(t')\in [0,L_1]$.
  So we assume below that $\position_i(t^*) > L_1$.

To derive a contradiction,
  we consider the interval $$W=\big[\position_i(t^*)-L_1:\position_i(t^*)-1\big]$$ of length $L_1=2C^2 M$.
  By the first condition in \Cref{def:goodset}, we have that 
  $|D_i\cap W|\le C$.
  So it follows from the pigeonhole principle and the choice of $L_1$ that there must be a interval $[\sfleft:\sfright]$ of length $CM$ (so $\sfright=\sfleft+CM-1$)  
  inside $W$ that is disjoint from $D_i$.
Let $t'$ be the integer such that $\position_i(t')=\sfleft$.
Since $\position_i(t') < \position_i(t^*)$, we have $t' < t^*$, and hence $\distance_i(t')\le C$ by the choice of $t^*$.
We also have 
$[\position_i(t'+\ell M):\position_i(t'+ \ell M)+M-1]\cap D_i=\emptyset$ for each $\ell\in[0:C-1]$.
As a result, for each $\ell\in [0:C-1]$, either (by \Cref{simplelemma2}) 
  $\distance_i(t'+ \ell M)=0$ and $\distance_i(t'+(\ell+1) M)$ stays at $0$,
  or (by \Cref{simplelemma}) $\distance_i(t'+ \ell M)>0$ and $\distance_i(t'+(\ell+1) M)$ strictly decreases.
This implies $\distance_i(t'+CM)=0$. 
Note that $\position_i(t'+CM)\le \position_i(t')+CM \le \position_i(t^*)$.
It follows that $\distance_i(t^*)\le C$ since there are at most $C$ deletions $[\position_i(t'+CM):\position_i(t^*)-1]  \subseteq [\sfleft:\position_i(t^*)-1]\subseteq W$,
a contradiction.
\end{proof}

We are now ready to prove \Cref{lem:secondBMA}:

\begin{proof}[Proof of \Cref{lem:secondBMA}]
We prove by induction that for every positive integer $t\le r'+m+1$:
\begin{equation} \label{eq:inductive-statement}
w_{[0:t-1]}=x'_{[0:t-1]}\quad\text{and}\quad \sum_{i\in [R]} \distance_i(t)\le \frac{2R}{C}.
\end{equation}
It follows that every $t\le r'+m+1$ satisfies 
  $|\{i\in [R]:\distance_i(t)=0\}|\ge R-2R/C\ge 9R/10$ using $C\ge 20$.

The base case of the induction is when $t\le M $. This case is easy since by the
  second condition of \Cref{def:goodset}, the number of $i\in [R]$ such that 
  $D_i\cap [0:M]\ne \emptyset$ is at most $R/C^3$.
As a result,  we have 
  $w_{t'}=x'_{t'}$ for every $t'\in [0:M]$. It also follows from  \Cref{simplelemma3} that
  $\distance_i(t')\le C$ for all $t' \in [0:M]$ and $i\in [R]$ and thus,
$$\sum_{i\in [R]} \distance_i(t')\le 
  \frac{R}{C^3}\cdot C=\frac{R}{C^2}<\frac{2R}{C},$$ 
  which establishes the base case.

For the inductive step we consider $t$ such that $M+1\le t\le r'+m+1$.
Given $$w_{[t-2]}=x'_{[t-2]}\quad \text{and} \quad\sum_{i\in [R]} \distance_i(t-1)\le \frac{2R}{C}$$
by the inductive hypothesis,
it follows that the majority of $i\in [R]$ have $\distance_i(t-1)=0$. Hence
  $w_{t-1}=x'_{t-1}$ and consequently $w_{[0:t-1]}=x'_{[0:t-1]}$.
To conclude the proof it remains to bound $\sum_{i \in [R]} \distance_i(t)$ and this is done by comparing
  $\distance_i(t)$ with $\distance_i(t-M)$ for each $i\in [R]$.
Note that by \Cref{simplelemma3} we have that $0\le \distance_i(t-M)\le C$ for every $i \in [R]$, and hence we have that $$\big[\position_i(t-M):\position_i(t-M)+M\big]\subseteq
  \big[t-M:t+C\big]$$
  for every $i\in [R]$.  Since $[t-M:t+C]$ is a fixed interval of length $L_2$,
it follows from the second condition in \Cref{def:goodset} that the number of $i\in [R]$
  that have at least one deletion in the interval $[t-M:t+C]$ is
  at most $R/C^3$.
We consider the following three cases:
\begin{enumerate}
\item If $\distance_i(t-M)=0$ and $D_i\cap \big[t-M:t+C\big]=\emptyset$, then (by \Cref{simplelemma2}) we have that $\distance_i(t)=\distance_i(t-M)=0$;
\item If $\distance_i(t-M)>0$ and $D_i\cap \big[t-M:t+C\big]=\emptyset$, then (by \Cref{simplelemma}) we have that $\distance_i(t)\le \distance_i(t-M)-1$;
\item  If $D_i\cap \big[t-M:t+C\big]\ne \emptyset$, then we have that $\distance_i(t)\le C\le \distance_i(t-M)+ C$, where the first inequality is by \Cref{simplelemma3} and the second is by \Cref{cor:nonneg}.
\end{enumerate}
Recall that by item (ii) of \Cref{def:goodset}, at most $R/C^3$ of the $i \in [R]$ can have $D_i\cap \big[t-M:t+C\big]\ne \emptyset$. It follows from this and the above three cases that
$$
\sum_{i\in [R]} \distance_i(t)\le \sum_{i\in [R]} \distance_i(t-M)
+\frac{R}{C^3}\cdot C-\max\left\{0,\big|\{i\in [R]:\distance_i(t-M)>0\}\big|-\frac{R}{C^3}\right\}.
$$
If the number of $i\in [R]$ with $\distance_i(t-M)>0$ is no larger than $R/C^3$, then since by \Cref{simplelemma3} each such value of $\distance_i(t-M)$ is at most $C$, we get that
$$
\sum_{i\in [R]}\distance_i(t)\le \frac{R}{C^3}\cdot C+\frac{R}{C^3}\cdot C<\frac{2R}{C}.
$$
Otherwise there are at least $R/C^3$ many elements $i \in [R]$ with $\distance_i(t-M)>0.$ In this case, using (again by \Cref{simplelemma3}) the bound  $$\big|\{i\in [R]: \distance_i(t-M)>0\}\big|\ge  \frac{1}{C}\sum_{i\in [R]} \distance_i(t-M),$$ 
it follows from the inductive hypothesis that 
$$
\sum_{i\in [R]}\distance_i(t)\le 
\left(1-\frac{1}{C}\right)\sum_{i\in [R]} \distance_i(t-M)+\frac{R}{C^2}+\frac{R}{C^3}
\le \left(1-\frac{1}{C}\right)\frac{2R}{C}+\frac{R}{C^2}+\frac{R}{C^3}
\le \frac{2R}{C}.
$$
This finishes the induction and the proof of \Cref{lem:secondBMA}.
\end{proof}

\subsection{Traces are good with high probability}

To conclude the proof of \Cref{thm:realBMA} it remains to prove \Cref{lem:firstBMA}, which states that with high probability a random multiset of 
$O(\log n)$ traces is such that every subset of $9/10$ of the traces is good
  (recall \Cref{def:goodset}):

\begin{lemma}\label{lem:firstBMA}
Let $\tilde{\bZ}=\{\tilde{\bz}^1,\ldots,\tilde{\bz}^\overallsc\}$ be a multiset of $\overallsc=O(\log n)$ traces drawn independently from
  $\Del_\delta(x')$.
Then with probability at least $1-1/n^2$, every $R$-subset of $\tilde{\bZ}$ is good, where
  $R=9\overallsc/10$.
\end{lemma}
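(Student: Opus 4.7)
The plan is to show that both goodness conditions from \Cref{def:goodset} hold, not just for the fixed $R$-subset of interest, but in the following strengthened forms: (i$'$) condition (i) holds for every $i \in [\overallsc]$; and (ii$'$) for every interval of length $L_2$, at most $R/C^3$ of \emph{all} $\overallsc$ traces contain a deletion in that interval. Since (i$'$) is a per-trace statement it immediately implies (i) for any subset, and (ii$'$) implies (ii) for any subset (restricting to a subset only decreases the count on the left-hand side). Thus it suffices to show (i$'$) and (ii$'$) hold simultaneously with probability at least $1 - 1/n^2$.

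For (i$'$), fix $i \in [\overallsc]$ and an interval $[\sfleft:\sfright]$ of length $L_1 = 2C^2 M$. Then $|\bD_i \cap [\sfleft:\sfright]|$ is distributed as $\mathrm{Bin}(L_1, \delta)$ with mean $L_1 \delta = O(C^2 n^{1/3} \cdot n^{-1/3-\eps}) = O(C^2 n^{-\eps})$. By the standard Chernoff/binomial tail bound,
\[
\Pr\big[|\bD_i \cap [\sfleft:\sfright]| > C\big] \;\le\; \binom{L_1}{C+1} \delta^{C+1} \;\le\; (e L_1 \delta / C)^{C+1} \;=\; (O(C))^{C+1} \cdot n^{-\eps (C+1)}.
\]
With $C = \lceil 100/\eps \rceil$ we have $\eps(C+1) \ge 100$, so this probability is at most $n^{-50}$ (say) for $n$ large. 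A union bound over the $O(n)$ choices of interval and the $\overallsc = O(\log n)$ traces gives failure probability at most $n^{-40}$ for (i$'$).

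For (ii$'$), fix an interval $[\sfleft:\sfright]$ of length $L_2 = M+C+1$. For each trace $i$, $\Pr[\bD_i \cap [\sfleft:\sfright] \ne \emptyset] \le L_2 \delta = O(n^{1/3} \cdot n^{-1/3-\eps}) = O(n^{-\eps})$ independently across $i$. Let $\bX$ be the number of traces with a deletion in this interval; then $\E[\bX] = O(\overallsc n^{-\eps}) = O(\log n \cdot n^{-\eps})$, which is far smaller than the target threshold $t := R/C^3 = \Theta(\overallsc/C^3)$. By the binomial tail bound,
\[
\Pr[\bX > t] \;\le\; \binom{\overallsc}{t}(O(n^{-\eps}))^t \;\le\; (e \overallsc / t)^t \cdot (O(n^{-\eps}))^t \;=\; (O(C^3) \cdot n^{-\eps})^t.
\]
Choosing the hidden constant in $\overallsc = O(\log n)$ sufficiently large in terms of $C$ and $\eps$, this is at most $n^{-3}$ for each fixed interval. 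A union bound over the $O(n)$ intervals gives failure probability at most $n^{-2}$ for (ii$'$). Combining with (i$'$) via a final union bound yields the desired $1 - 1/n^2$ success probability.

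The only mild obstacle is arranging the constants so that the Chernoff-type bounds are strong enough to beat a union bound over $\Theta(n)$ intervals: this is exactly why we chose $C = \lceil 100/\eps \rceil$ (so that $\delta m \ll 1$ gets amplified to $n^{-\Omega(1/\eps)}$ per interval in (i$'$)), and why \overallsc\ must be a sufficiently large constant multiple of $\log n$ (so that the threshold $t$ in the bound for (ii$'$) gives a $\log n$-fold boost in the exponent). Both of these adjustments are under the algorithm designer's control, so the argument goes through cleanly.
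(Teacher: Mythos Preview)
Your proposal is correct and takes essentially the same approach as the paper: both proofs strengthen conditions (i) and (ii) to hold over all $\overallsc$ traces (not just a fixed $R$-subset), use the same binomial tail bounds $\binom{L_1}{C+1}\delta^{C+1}$ and $\binom{\overallsc}{R/C^3}(\delta L_2)^{R/C^3}$ for (i) and (ii) respectively, and finish with union bounds over intervals and traces. The only cosmetic difference is that the paper notes the bound in (ii$'$) is actually $n^{-\Omega(\log n)}$ rather than just $n^{-3}$, but your weaker stated bound already suffices.
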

\begin{proof}
It suffices to show that with probability at least $1-1/n^2$:
\begin{flushleft}\begin{enumerate}
\item[(i)] For every $i\in [\overallsc]$ and every interval $[\ell:r]\subset  [0:n'-1]$ of length 
  $r-\ell+1=L_1:=2C^2 M$, we have 
   $ |D_i\cap [\ell:r] |\le C.$ 
\item[(ii)] For every interval $[\ell:r]\subset  [0:n'-1]$ of length $r-\ell+1=L_2:=M+C+1$,
  the number of $i\in [\overallsc]$ such that $ D_i\cap [\ell:r] \ne \emptyset$ is at most $R/C^3$.
\end{enumerate}\end{flushleft}
We start with (i). For any interval of length $L_1$ and any $i \in [R]$,
  the probability of $ |\bD_i\cap [\ell:r]|> C$ is at most
  $$
 {L_1\choose C+1} \delta^{C+1} \le (L_1\delta)^{C+1} \le 
 (2C^2M\delta)^{C+1} =O_\eps(1) \cdot n^{-(C+1)\eps}\le O(1/n^4),
$$
where we used $M = \Theta(n^{1/3})$ and $C=\lceil 100/\eps\rceil$.
It then follows from a union bound over intervals of length $L_1$ and $i \in [R]$ that (i) is violated with probability at most
  $O(1/n^4)\cdot n\overallsc=o(1/n^2)$. 
  
  For (ii), note for each interval of length $L_2$
    and each $i\in [\overallsc]$, the probability of $\bD_i\cap [\ell:r]\ne \emptyset$ is
    at most $\delta L_2=O(n^{-\eps/3})$.
Similarly the probability of having at least $R/C^3$ such $i$ is at most
$$
{\overallsc\choose R/C^3} \left(\delta L_2\right)^{R/C^3} \le \left(\overallsc\delta L_2\right)^{R/C^3}
\le n^{-\Omega(\log n)},$$
using $R=9\overallsc/10$.
\ignore{
}
It follows from a union bound over (at most $n$) intervals of length $L_2$ that (ii) is violated with  probability $o(1/n^2)$. The lemma follows from a union bound over (i) and (ii).
\end{proof}



\def\CEsc{2/\eps}
\def\SEsc{O(n^{2/3-\eps})}
\def\constsc{1000}
\def\overallsc{N}

\section{Finding the end of a desert:  Proof of \Cref{thm:DET}} \label{sec:long_runs}


In this section, we describe the algorithm \DET, which is used to determine the end of a desert
  in $x$ using traces from $\Del_\delta(x)$, and to align given traces
  with the end of the desert. (These aligned traces will then 
  be used by $\BMA$ in the main algorithm.)

Let's recall the setting. Let $x\in \{0,1\}^n$ be the unknown string.
\DET is given the first location $r$ that is deep in some $s$-desert subword of $x$,
  for some 
  string $s\in \{0,1\}^k$ with $k\le C$.
It is also given  
the  prefix $u=x_{[0:r+m]}$ of $x$.
We will refer to the $s$-desert that contains $r$ as the \emph{current} desert.
(Note that $s$ can be easily derived from $u$.)
The goal of $\DET$ is to figure out the ending location  
of the current desert which we denote by $\dend$:
\[
\dend
\text{\emph{~is the smallest integer at least $r+m$ 
  such that $x_{\dend+1}\ne x_{\dend-k+1}$}.}
\]
(Note that thanks to the preprocessing step {\tt Preprocess}, we know that 
  $\dend$ exists and satisfies $r+m\le  \dend\le 3n/4$.)  

In addition to computing $\dend$, $\DET$ is also given a multiset of
  $\overallsc=O(\log n)$ traces $y^1,\ldots,y^\overallsc$ and needs to return a location $\ell_i$ for each $y^i$
  such that most of them are correctly aligned to the end of the desert.
Formally, we write $\pos(y)$ for a trace $y$ to denote the location $\ell$
  in $y$ such that $y_\ell$ corresponds to the last bit of $x_{[0:\dend]}$ that survives in $y$;
  we set $\pos(y)=-1$ by default if all of $x_{[0:\dend]}$ gets deleted. 
The second goal of $\DET$ is to output $\ell_i=\pos(y^i)$ for almost all $y^i$
  when they are drawn independently from $\Del_\delta(x)$.

We  restate the main theorem of this section: 
\medskip

\noindent {\bf Restatement of \Cref{thm:DET}:}
\ignore{\rnote{emph interacts badly with itemize; I think you need to do separate emphs for each item, or at least that's a workaround}}
\emph{
  Let $\delta = n^{-(1/3+\eps)}$ for some fixed constant $\eps>0$. There is an algorithm $\DET$ with the following input and output:}
\begin{flushleft}
\begin{itemize}
\item  \emph{{\bf Input:} (i) a location $r \in [0:3n/4]$,
  (ii) a string $u\in \{0,1\}^{r+m+1}$, 
  (iii) a multiset of
  strings $\{y^1,\ldots,y^{\overallsc}\}$ from $\{0,1\}^{\le n}$ where $\overallsc=O(\log n)$, and (iv) sample access
  to $\Del_\delta(x)$ for some unknown $x\in \{0,1\}^n$.
}

\item \emph{{\bf Output:} An integer $b$, and an integer $\ell_i$ for each $i\in [\overallsc]$.}

\end{itemize}
\end{flushleft}
\emph{The algorithm \DET draws $\tilde{O}(n^{2/3-\eps})$ many independent 
	traces from $\Del_\delta(x)$, runs in $O(n^{5/3})$ time and
	has the following performance guarantee.
	Suppose that $r$ is the first location that is deep in~some desert of $x$; $u= x_{[0:r+m]}$; the unknown $\dend$ of the desert to which $x_r$ belongs
	is at most  $3n/4$; and $\smash{y^1=\by^1,\ldots,y^\overallsc=\by^{\overallsc}}$ 
	are independent traces drawn from $\Del_\delta(x)$. 
	Then the integers $b$ and $\ell_i$ that $\DET$ outputs 
	satisfy the following properties with probability
	at least $1-1/n^2$:
	$b=\dend$, and $\ell_i=\pos(\by^i)$ for at least $0.9$ fraction of $i\in [\overallsc]$.}\medskip 
We present the algorithm \DET in \Cref{fig:DET}, where 
 \[
\sigma:=\big{\lceil} \sqrt{\delta n} \cdot \log n\big{\rceil}. 
 \]
(Intuitively, $\sigma$ provides a high-probability upper bound on how far a bit of $x$ can deviate from
  its expected position in a trace $\by\sim \Del_\delta(x)$.)
\DET consists of the following two main procedures:
\begin{flushleft}
\begin{enumerate}
\item 
We will refer to the $8\sigma$-bit string 
 $$
 \tail := x_{\dend-k+2}\ x_{\dend-k+3}\ \cdots\  x_{\dend+8\sigma-k+1}$$
  around the end $x_{\dend}$ of the current desert as its \emph{tail} string and denote it 
  by $\tail\in \{0,1\}^{8\sigma}$.
(Note that $\dend+8\sigma-k+1<n$ given that $\dend\le 3n/4$.)
The first procedure, \CE, 
will provide with high probability a \emph{coarse estimate} $\hat{\beta}$ (see \Cref{lem:Coarse-Estimate}) of the expected location
  $(1-\delta)\dend$  
of the 
right end of the current desert in a trace of $x$. Moreover, it 
  returns a string $t\in \{0,1\}^{8\sigma}$ that is exactly the $\tail$ string with high probability.

\item 
With $\hat{\beta}$ and $\tail\in \{0,1\}^{8\sigma}$ in hand, 
  the second procedure \Align 
  can help align a given trace with the right end of the current desert. 
Informally, running on a trace $\by\sim \Del_\delta(x)$, \Align returns 
  a position $\ell$ such that  with high probability over the randomness of $\by\sim \Del_\delta(x)$, it holds that $\ell=\pos(\by)$.
The performance guarantee of \Align is given in \Cref{lem:Align}. 
It may sometimes (with a small probability) return $\nil$, meaning that it fails to align the given trace.
\end{enumerate}\end{flushleft} 

The algorithm \DET starts by running \CE to obtain a coarse estimate $\hat{\beta}$
  of $(1-\delta)\dend$ and the $\tail$ string (line~1).
It then (line~2) runs \Align on the given $\overallsc$ traces $y^i$ to obtain $\ell_i$ for each $i\in [\overallsc]$.
The second property of \DET in \Cref{thm:DET} about $\ell_i$'s follows
  directly from the performance guarantee of \Align. 
To obtain a sharp estimate of $\dend$, \DET draws another set of $\tilde{O}(n^{2/3-\eps})$ traces $\bz^i$ (line~3).
It runs \Align on each of them and uses the average of its outputs (discarding traces for which \Align returns \nil) to estimate
   $(1-\delta)\dend$ (lines~4-6). 
   (It is clear that this average would be accurate to within $\pm o(1)$ if \Align always successfully aligned its input trace with the right end of the current desert; the actual performance guarantee of \Align is weaker than this, but a careful analysis enables us to show that it is good enough for our purposes.)

The rest of this section is structured as follows. 
In \Cref{sec:CE} we describe the \CE procedure and establish a performance guarantee for it (\Cref{lem:Coarse-Estimate}), and in \Cref{sec:align} we describe the \Align procedure and establish a performance guarantee for it (\Cref{lem:Align}). We combine these ingredients to prove \Cref{thm:DET} in \Cref{sec:proof-of-det}.

\begin{figure}[t]
\setstretch{1.2}
  \begin{algorithm}[H]
    \caption{Algorithm {\DET}}\label{fig:DET}
		\DontPrintSemicolon
		\SetNoFillComment
		\KwIn{$r \in [0:3n/4]$,\vspace{-0.03cm} $u\in \{0,1\}^{r+m+1}$, 
		 a multiset $ \{ y^1,\ldots, y^{\overallsc}\}$ of $\overallsc$ strings
		from $\{0,1\}^{\le n}$ where $\overallsc = O(\log n)$, and sample access to $\smash{\Del_\delta(x)}$ for some string $x\in \{0,1\}^n$.}
		\KwOut{An integer $b\ge r+m$ and an integer $\ell_i$ for each $i\in [\overallsc]$.}
		Run {\tt Coarse-Estimate}$(r, u)$, which returns an integer $\hat{\beta}$ and 
		   a string $t\in \{0,1\}^{8\sigma}$.\\ 
		For each $i\in [\overallsc]$, run ${\tt Align}(\hat{\beta},t,y^i)$. 
		If ${\tt Align}$ returns \nil, set $\ell_i=-1$; otherwise let $\ell_i$ be the integer 
		 ${\tt Align}$ 
	      returns.\\
	    Draw $\gamma = O(n^{2/3-\eps} \log^3 n)$ 
	      traces $\bz^1,\ldots,\bz^\gamma$ from $\Del_\delta(x)$.\\
		For each $i\in [\gamma]$, run ${\tt Align}(\hat{\beta},t,\bz^i)$ 
		  and let $h_i$ be its output.\\
		Let $\beta$ be the average of $h_i$'s 
		that are not $\nil$, 
		and let $b$ be the 
		integer nearest to $\beta/(1-\delta)$.\\
 		Return $b$, and $\ell_i$ for each $i\in [\overallsc]$.
\end{algorithm}
\end{figure}

\subsection{The \CE procedure} 
\label{sec:CE}

Recall that $\sigma =\lceil \sqrt{\delta n} \cdot \log n\rceil.$ 
Given $r, u$ as specified earlier and sample access to $\Del_\delta(x)$,
  the~goal of \CE is to obtain an integer  $\hat{\beta}$ such that 
  $|\hat{\beta}-(1-\delta)\dend|\leq 2\sigma $.
We will refer to such an estimate as a \emph{coarse estimate} of $(1-\delta)\dend$. 
In addition, \CE returns a string $t$ that with high probability is exactly the tail string $\tail\in \{0,1\}^{8\sigma}$.
This is done by drawing only $O(1/\eps)$ many traces.

\begin{lemma} \label{lem:Coarse-Estimate}
  Let $\delta = n^{-(1/3+\eps)}$ with a fixed constant $\eps>0$. There is an algorithm \emph{\CE} which takes the same two inputs $r$ and $u$ as in $\emph{\DET}$ and sample access to $\Del_\delta(x)$ for some unknown string $x\in \{0,1\}^n$, and returns an integer $\hat{\beta} \in [0:n-1]$ and a string $t\in \{0,1\}^{8\sigma}$.
It draws $O(1/\eps)$ traces from $\Del_\delta(x)$, runs in time $O(n)$ and has the following performance guarantee.
Suppose that $r$ and $u$ satisfy the same conditions as in \Cref{thm:DET}
with respect to $x$.
Then with probability at least $1-1/n^3$, we have that $t=\emph{\tail}$ and 
  $\hat{\beta}$ satisfies 
  $|\hat{\beta} - (1-\delta)\emph{\dend}|\le 2\sigma$. 
\end{lemma}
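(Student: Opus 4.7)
The plan is to implement \CE as a two-phase procedure that mirrors the overview: first a \emph{coarse-estimate} phase pinning down $(1-\delta)\dend$ to within $\pm 2\sigma$, then a \emph{tail-recovery} phase that identifies the $8\sigma$-bit string $\tail$. In the first phase I would draw $\alpha=\Theta(1/\eps)$ independent traces $\by^1,\ldots,\by^\alpha\sim\Del_\delta(x)$ and, for each trace, slide a window of length $4\sigma$ starting at position $(1-\delta)r$ with step $\sigma$. For every such window $I$, test whether $\by^i_I$ contains any $k$-bit subword \emph{not} in $\Cyc$ (computable from $u$). Define $I^*$ to be the leftmost window for which a strict majority of the $\alpha$ traces witness this property, and set $\hat{\beta}$ to the right endpoint of $I^*$. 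In the second phase I would draw another $\alpha$ fresh traces, examine the length-$6\sigma$ window of each trace centered at $\hat{\beta}$, locate the leftmost $k$-bit subword not in $\Cyc$, and have each trace vote for the $8\sigma$-bit string beginning at that subword; the string $t$ returned is the one receiving the most votes.

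The structural fact that drives the analysis is that a single trace $\by\sim \Del_\delta(x)$ is \emph{well-behaved} with probability $1-n^{-\omega(1)}$, meaning that (i) every surviving $x_i$ lands at a position $\tilde\imath$ satisfying $|\tilde\imath-(1-\delta)i|\le \sigma$, and (ii) no position in the fixed length-$O(\sigma)$ interval of $x$ around $\dend$ is deleted. Part (i) is a Chernoff bound on $\mathrm{Bin}(i,\delta)$ together with the definition $\sigma=\lceil\sqrt{\delta n}\log n\rceil$ and a union bound over $i$; part (ii) holds because $\sigma\delta=n^{-\Omega(\eps)}$. A union bound over all $O(1/\eps)$ traces (in both phases) makes every drawn trace well-behaved with probability at least $1-n^{-3}$, which is where the failure-probability budget of the lemma comes from.

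Conditioned on all traces being well-behaved, correctness of phase one is deterministic: for any window $I$ whose right endpoint is at most $(1-\delta)\dend-2\sigma$, the $\sigma$-deviation bound forces $\by^i_I$ to come entirely from inside the $s$-desert, so every $k$-bit subword of $\by^i_I$ lies in $\Cyc$; conversely, for any window $I$ whose right endpoint is at least $(1-\delta)\dend+2\sigma$, the images of $(x_{\dend-k+2},\ldots,x_{\dend+1})$ all lie in $I$ (by (i)) without any intervening deletion (by (ii)), and $(x_{\dend-k+2},\ldots,x_{\dend+1})\notin\Cyc$ by definition of $\dend$. This sandwich forces the right endpoint of $I^*$ to lie in $[(1-\delta)\dend-2\sigma,(1-\delta)\dend+2\sigma]$, giving the claimed bound on $\hat{\beta}$. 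In phase two, the same two properties ensure that the leftmost non-$\Cyc$ $k$-bit subword in the $6\sigma$-window centered at $\hat{\beta}$ is exactly the image of $(x_{\dend-k+2},\ldots,x_{\dend+1})$, and the $8\sigma$-bit suffix starting there is precisely the image of $\tail$ with no deletions; so each well-behaved trace votes for $\tail$, and the majority vote returns $\tail$.

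The step I expect to require the most care is the interface between the two phases: I need $\sigma\delta$ small enough to guarantee deletion-free behavior on the specific $O(\sigma)$-length window around $\dend$, while simultaneously needing $\sigma$ large enough that the $\pm\sigma$ deviation bound for bit positions holds with probability $1-n^{-\omega(1)}$. Both of these constraints translate to $m\gg\sigma$ and $m\delta\ll 1$, which is exactly why the regime $\delta=n^{-(1/3+\eps)}$, $m=n^{1/3}$ is chosen. The run-time bound $O(n)$ is immediate because $\alpha=O(1/\eps)$ traces are scanned and all windowing and membership-in-$\Cyc$ tests are done in linear time.
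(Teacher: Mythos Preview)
Your algorithm matches the paper's, but the analysis has a genuine gap. You assert that a single trace is ``well-behaved'' with probability $1-n^{-\omega(1)}$, where well-behavedness includes (ii) that no position in a fixed $O(\sigma)$-length interval of $x$ around $\dend$ is deleted. This is false: the failure probability of (ii) is $\Theta(\sigma\delta)=\tilde\Theta(n^{-3\eps/2})$, which is only inverse-polynomial in $n$; indeed you yourself write ``part (ii) holds because $\sigma\delta=n^{-\Omega(\eps)}$,'' which contradicts your headline claim. With only $1-n^{-\Theta(\eps)}$ per trace, a union bound over $O(1/\eps)$ traces cannot give $1-1/n^3$. There is a second gap hiding behind this one: your condition (ii) only controls deletions near $\dend$, but for a window $I$ whose right endpoint lies far to the left of $(1-\delta)\dend$ you also need that $\by^i_I$ contains no non-$\Cyc$ subword. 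Even when every bit of $\by^i_I$ originates inside the desert (which (i) does ensure), a single deletion there can manufacture a non-$\Cyc$ $k$-bit subword, and your condition (ii) says nothing about deletions that occur, say, $m/2$ positions before $\dend$.

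The paper avoids both issues by never conditioning on a global well-behaved event. Instead, for each window $I$ separately, the probability that a \emph{single} trace misbehaves on $I$ (either by having a deletion in the corresponding $O(\sigma)$-length $x$-interval, or by having the $k$ bits $x_{\dend-k+2}\cdots x_{\dend+1}$ miss the target interval $I'$) is only $O(\sigma\delta)$ or $O(k\delta)$. The majority vote over $\alpha=\Theta(1/\eps)$ independent traces then \emph{amplifies}: the probability that at least $\alpha/2$ traces simultaneously misbehave on a fixed $I$ is at most $2^\alpha\cdot(O(\sigma\delta))^{\alpha/2}\le 1/n^4$, after which a union bound over the $O(n/\sigma)$ windows gives the $1-1/n^3$ guarantee. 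In other words, the voting step is not mere redundancy layered on top of an already high-probability event; it is the mechanism that converts per-trace reliability $1-n^{-\Theta(\eps)}$ into overall reliability $1-1/n^3$. Your write-up treats the vote as superfluous once all traces are conditioned to be well-behaved, and that is precisely where the argument breaks.
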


\begin{proof}
We start with the coarse estimate $\hat{\beta}$.
Let $\hat{r}=\lceil (1-\delta)r\rceil$ and
  consider the following collection of overlapping intervals of positions in a trace of $x$:
\[
{\cal I} := \Big\{\big[\hat{r}+j\sigma:\hat{r}+(j+4)\sigma\big] : j \in \Z_{\geq 0}\Big\}.
\]
Note that each interval $I$ contains $4\sigma+1$ positions.
\CE  draws $\alpha=O(1/\eps)$ traces $\by^1,\ldots,\by^\alpha$ from $\Del_\delta(x)$
  and finds the leftmost interval $I^*\in {\cal I}$ such that at least
  half of $\by^i$'s satisfy the following property: $\by^i_{I^*}$ contains
  a $k$-bit subword that is not a cyclic shift of $s$.  
We then set $\hat{\beta}$ to be the right endpoint of $I^*$ and 
  show that it is a coarse estimate of $(1-\delta)\dend$ with high probability.

\def\lleft{\textnormal{left}}
\def\rright{\textnormal{right}}

For this purpose consider the first interval $I'=[\lleft',\rright'] \in \calI$ 
such that $(1-\delta)\dend\le  \rright'-\sigma$.
Such an interval $I'\in \calI$ must exist given that $\dend-r\ge m \gg \sigma$
  and the fact that intervals of $\calI$ are staggered at offsets of $\sigma$ from each other.
We also have $(1-\delta)\dend>\rright'-2\sigma$ given that $I'$ is chosen to be 
  the leftmost 
  such interval. So $(1-\delta)\dend$ is at least $2\sigma$ to the right of $\lleft'$.
For a trace $\by\sim \Del_\delta(x)$ we are interested in the event that (i) the $k$ bits
  $x_{\dend-k+2}\cdots x_{\dend+1}$ of $x$ all survive in $\by$ and (ii) they
  all lie in $\by_{I'}$.
This event occurs with probability at least $1-O(k\delta)$: for it to not occur,
  either one of these $k$ bits gets deleted, which happens with probability 
  at most $k\delta$, or they need to deviate from their expected location 
  in $\by$ by more than $\sigma$, which happens with even smaller probability (this holds by
  a Chernoff bound given our choice of $\sigma$).
When the event described above on $\by$  occurs, $\by_{I'}$ contains a 
  $k$-bit string that is not a cyclic shift of $s$. 
As a result, when we draw $\by^1,\ldots,\by^\alpha\sim \Del_\delta(x)$,
  the probability that no more than half of them satisfy this property is at most
$$
2^{\alpha}\cdot \big(O(k\delta)\big)^{\alpha/2}<1/n^5\ignore{\cnote{Let's change this to $1/n^5$ or something smaller?  If we look at two equations below, we have the same qualtity except with $k$ replaced by $\sigma \gg k$ and we get $1/n^4$.}}
$$
when $\alpha$ is a large enough constant.
 Hence with probability at least $1 - 1/n^5$,
   the interval $I^*$ picked by \CE is either $I'$ or  
some interval $I \in {\cal I}$ to the left of $I'$. 

\def\ccenter{\textsf{center}}

Next, we argue that $I^*$ is unlikely to be any interval $I=[\lleft:\rright]\in \calI$ 
with $\lleft\le \lleft'-3\sigma$ (so $\rright \le \rright' - 3\sigma$).
Fix any such interval $I=[\lleft:\rright]$.
We have 
$$(1-\delta)\dend> \rright'-2\sigma \ge \rright+\sigma$$
and thus, 
%
  $(1-\delta)\dend$ is to the right of $I$ by at least $\sigma$.
Let $\tilde{I}$ be the interval of positions of $x$ that starts with 
  $\lleft/(1-\delta)-\sigma$ and ends with $\rright/(1-\delta)+\sigma<\dend$.
For a trace $\by\sim \Del_\delta(x)$, we are interested in the event that (a)
all the bits in $\by_I$ come from $x_{\tilde{I}}$, and (b) no bit of $x_{\tilde{I}}$ is deleted (though of course they need not all end up in interval $I$ in $\by$).
It follows from an argument similar to the one given above that this occurs with probability
  at least $1-O(\sigma \delta)$; when it occurs, since $x_{\tilde{I}}$ is contained in the current $s$-desert, every $k$-bit string of 
  $\by_I$ is a cyclic shift of $s$.
As a result, for traces $\by^1,\ldots,\by^\alpha$, the probability that 
  at least half of them has a non-cyclic shift in $\by^i_I$ is at most
$$
2^{\alpha}\cdot \big(O(\sigma\delta)\big)^{\alpha/2}<1/n^4
$$
using $\alpha=O(1/\eps)$.
It follows from a union bound over the $O(n/\sigma)$ intervals
  that $I^*$ is an interval $I=[\lleft:\rright]\in \calI$ with $\lleft\le \lleft'-3\sigma$
  with probability at most $1/n^3$.

Combining these two parts,
  with probability at least $1-2/n^3$, we have
  $\rright'-2\sigma\le \hat{\beta}\le \rright'$.
Given that $\rright'-2\sigma\le (1-\delta)\dend\le \rright'-\sigma$,
  we have $|\hat{\beta}-(1-\delta)\dend|\le 2\sigma$ when this happens.  

Finally, given a coarse estimate $\hat{\beta}$ as above,
  \CE recovers the tail string as follows.
Let $J'$ be the interval $[\hat{\beta}-3\sigma: \hat{\beta}+3\sigma]$ of positions in
  a trace.
We draw another sequence of $\alpha=O(1/\eps)$ fresh traces $\by^1,\ldots,\by^\alpha$
  from $\Del_\delta(x)$. 
For each $\by^i$ we look for the leftmost non-cyclic shift of $s$ in $\by^i_{J'}$.
When such a non-cyclic shift exists, say $\by^i_{\tau}\cdots \by^i_{\tau+k-1}$,
  $\by^i$ \emph{votes} for the $8\sigma$-bit  string
  $\by^i_{\tau} \cdots \by^i_{\tau+8\sigma-1}$ as its candidate for the tail string.
\CE then returns the $8\sigma$-bit string with the highest number of votes.

Assuming that $\hat{\beta}$ satisfies $|\hat{\beta}-(1-\delta)\dend|\le 2\sigma$,
  we show that $\CE$ returns the tail string correctly with high probability.
To this end, as $\by\sim \Del_\delta(x)$, we are interested in the event   
that (1) no bit in $x_{[\dend-5\sigma:\dend+11\sigma]}$ is deleted and (2) $x_{\dend+1}$
  lies in the interval $J'$ in $\by$.
It is easy to argue that this event occurs with probability 
  at least $1-O(\sigma \delta)$, and when this occurs, 
  $\by$ votes for the correct tail string.
It then follows from a similar argument that, with $\by^1,\ldots,\by^\alpha\sim 
\Del_\delta(x)$, at least half of strings vote for the correct tail string 
  with probability at least $1-1/n^3$.
  
It follows from a union bound that with probability at least $1-O(1/n^3)$,
  $\hat{\beta}$ is a coarse estimate of $(1-\delta)\dend$
  and the string  returned by \CE is exactly the tail string. Clearly, the running time of \CE is $O(n)$ as the procedure consists of a linear scan over $O(1/\eps)$ traces. This finishes
  the proof of the lemma.
\end{proof}

\def\windowlength{10\sigma}
\def\DETthreshold{7\sigma}
\def\windowdistanced{3\sigma}
\def\constDeletion{1000/\eps}

\subsection{The \Align procedure}
\label{sec:align}

We start with the performance guarantee of $\Align$:

\begin{lemma} \label{lem:Align}
	Let $\delta = n^{-(1/3+\eps)}$ for some fixed constant $\eps>0$. There is an 
	algorithm \Align running in time $O(n)$ 
	with the following input and output:
	\begin{flushleft}
		\begin{itemize}
			\item  {\bf Input:} a number $\hat{\beta} \in [0:n-1]$, and strings $t \in \zo^{8\sigma}, y \in \zo^{\leq n}.$
			\item  {\bf Output:} an integer $\ell \in [0:n-1]$, or \nil.
		\end{itemize}
	\end{flushleft}
	It has the following performance guarantee. 
	Suppose $x,u,r$ and $\dend$ satisfy the hypothesis in \Cref{thm:DET}, 
	 $\hat{\beta}$ and $t$ satisfy the conclusion of \Cref{lem:Coarse-Estimate}, and $y=\by \sim \Del_\delta(x)$
	is a random trace.~Then 
		\begin{flushleft}
		\begin{enumerate}
		\item   
		Whenever $\Align$ returns an integer $\ell$, we have  
$ |\hspace{0.03cm}\ell - (1-\delta)\dend\hspace{0.03cm} |\leq O(\sigma)$.
			\item With probability at least $1-\tilde{O}(n^{-3\eps/2})$, $\Align$ returns
			exactly $\pos(\by)$; and
			\item Conditioned on $\Align$ not returning $\nil$, the expectation of what $\Align$ returns   
			   is $(1-\delta)\dend\pm o(1)$.
		\end{enumerate}
	\end{flushleft}
\end{lemma}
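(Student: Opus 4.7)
The plan is to first define the Align procedure concretely, then verify the three properties in order, with property (3) being the real challenge.

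First, from the input $t$ I would compute the signature $\sig$ by scanning $t$ left-to-right and finding the smallest $d$ such that the $k$-window of $t$ ending at position $d$ is not in $\Cyc$; if such $d \le 8\sigma$ exists set $\sig := t_{[0:d-1]}$, and otherwise set $\sig := t$. I would then define a window $J := [\hat\beta - \lambda\sigma : \hat\beta + \lambda\sigma]$ in $\by$ for a sufficiently large constant $\lambda = \lambda(\eps)$, and attempt to parse $\by_J$ uniquely in the form $w \circ \sig \circ v$, insisting that the first $k$ bits of the parsed $\sig$ are the leftmost $k$-subword of $\by_J$ outside $\Cyc$. If no such parse exists, output $\nil$. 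Otherwise let $p$ be the index in $\by$ of the $(k-1)$-th bit of the parsed $\sig$ and output $p$ with some randomized acceptance probability $q = q(\by,p) \in [0,1]$ (to be calibrated for property (3)), outputting $\nil$ with probability $1-q$. The running time is dominated by reading $\by$ and scanning $\by_J$, giving $O(n)$.

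Property (1) is then immediate: whenever $\Align$ outputs $\ell \ne \nil$, the match-position $p$ lies inside $J$, and $J$ has length $O(\sigma)$ centered at a point within $2\sigma$ of $(1-\delta)\dend$ by \Cref{lem:Coarse-Estimate}. For property (2), I would define the ``clean'' event $\mathcal{E}$ that (a) no bit in the length-$O(\sigma)$ subword of $x$ whose image contains $J$ is deleted in $\by$, and (b) the actual image of $x_{\dend}$ in $\by$ lies within $\sigma$ of $(1-\delta)\dend$. A union bound combining Chernoff with the choice $\sigma = \lceil \sqrt{\delta n}\log n\rceil$ gives $\Pr[\mathcal{E}] \ge 1 - O(\sigma\delta) - 1/n^3 = 1 - \tilde{O}(n^{-3\eps/2})$, using $\sigma\delta = \tilde{O}(n^{-3\eps/2})$. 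On $\mathcal{E}$, the portion of $\by_J$ preceding the image of $\sig$ is a subword of the desert, so every $k$-window in it is in $\Cyc$; this forces the parse to uniquely locate the true image of $\sig$, so that $p = \pos(\by)$. Arranging $q \equiv 1$ on $\mathcal{E}$ then yields property (2).

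The main obstacle is property (3), where the $\pm o(1)$ bound forces us past standard concentration: even a constant-probability failure mode producing an $O(\sigma)$ miscalibration would shift the conditional expectation by $\Omega(\sigma) \gg o(1)$. I would partition by the number $j$ of deletions inside the length-$O(\sigma)$ ``critical window'' $\tilde J \subset x$ whose image intersects $J$. For $j = 0$, independence of deletions inside and outside $\tilde J$ gives $\E[\ell \cdot \mathbf{1}_{j=0}] = \big((1-\delta)\dend + O(\sigma\delta)\big)\Pr[j=0]$, with additive error $o(1)$. For $j = 1$ (probability $O(\sigma\delta)$), the parse may still succeed but at a location shifted by $\pm 1$ depending on whether the single deletion falls before or after the image of $\sig$ in $\tilde J$; here I would design $q(\by,p)$ to balance the acceptance probabilities of the ``before'' and ``after'' shift cases so that their signed contributions to the expectation cancel to within $o(1)$ — this is the combinatorial heart of the proof and the step I expect to be hardest to make precise. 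For $j \ge 2$ (probability $O((\sigma\delta)^2) = o(1/n^{3\eps})$), even an $O(|J|) = O(\sigma)$ output error contributes only $o(1)$ in total. Summing the three regimes and dividing by $\Pr[\ell \ne \nil] = 1 - \tilde{O}(n^{-3\eps/2})$ then yields $\E[\ell \mid \ell \ne \nil] = (1-\delta)\dend \pm o(1)$, completing property (3).
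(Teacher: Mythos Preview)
Your handling of properties (1) and (2) is essentially correct and matches the paper. The substantive gap is in property (3).

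First, your $j \ge 2$ arithmetic is wrong: $(\sigma\delta)^2 \cdot O(\sigma) = \tilde O(n^{1/3 - 7\eps/2})$, which is \emph{not} $o(1)$ unless $\eps > 2/21$. Likewise, the $j=1$ case treated crudely contributes $O(\sigma\delta)\cdot O(\sigma) = \tilde O(n^{1/3-2\eps}) \gg 1$, so something much sharper than ``balance $\pm 1$ shifts'' is required.

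Second, your model of the $j=1$ case is incorrect. A single deletion does \emph{not} typically yield a successful parse at a location shifted by $\pm 1$. If the deletion lands in the desert prefix $w$ (and $k\ge 2$), it creates a \emph{new} $k$-bit subword outside $\Cyc$, so the ``leftmost non-cyclic-shift'' rule fires at the wrong place and no $\sig$-match is found there; if the deletion lands inside $\sig$, the exact $\sig$-match fails; if it lands after $\sig$, the parse succeeds at the \emph{correct} location (shift $0$, not $+1$). So there is no symmetric $\pm 1$ shift to cancel, and in any case $q(\by,p)$ cannot be keyed to the value of $j$, which the algorithm cannot observe.

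The paper supplies the two ingredients you are missing. The first is a combinatorial lemma (\Cref{comblemma}): if $z = w\circ\sig$ is in the ``right form'' and $\by\sim\Del_\delta(z)$ suffers at least one deletion, then $\by$ is a prefix of a right-form string only with probability $O(\delta)$ (because a deletion away from $O(k)$ boundary positions produces an extra non-$\Cyc$ subword, and two such subwords cannot both sit inside one copy of $\sig$). This collapses the entire ``$\ge 1$ deletion in $w\circ\sig$'' contribution to $O(\delta)\cdot O(\sigma)=o(1)$, simultaneously fixing your $j=1$ and $j\ge 2$ gaps. The second ingredient is the calibration $p_L=(1-\delta)^{15\sigma-L}$, where $L$ is the observed end-position of $\sig$ in $\by_J$. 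Its purpose is \emph{not} to balance before/after deletions but to make $\Pr[\text{accept}\mid \reach(\by)=t]=(1-\delta)^{15\sigma+1}\pm O(\delta)$ \emph{independent of} $t$; without this, conditioning on $\{\ell\neq\nil\}$ tilts the distribution of $\reach(\by)$ and shifts the conditional mean of the output by roughly $\delta\cdot\Var(\reach)\approx \delta^2 n = n^{1/3-2\eps}\gg 1$. Your proposed $q$ addresses neither of these two actual obstacles.
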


\subsubsection{Setup for the proof of \Cref{lem:Align}} \label{sec:setup}

The special case when $k=|s|=1$ (so the desert subword is of the form $0^a$ or $1^a$ for some $a \ge M = 2n^{1/3}+1$) is relatively simple and will be handled separately in \Cref{sec:kisone}, so in the rest of this subsection we set up ingredients needed for the proof of \Cref{lem:Align} in the general case when $k \ge 2$.

Consider the general case when $k\ge 2$. Let $\Cyc$ be the set of all $k$-bit strings that can be obtained as cyclic shifts of $s$. The key notion behind \Align is the idea of the ``\emph{signature}.'' 
This is a subword of $x$ of length at most $8\sigma$ that starts at the 
  same location $x_{\dend-k+2}$ as $\tail$ (so it is contained in $\tail$; we remind the reader that 
  the first $k$-bits of $\tail$ is a string not in $\Cyc$).
The signature ends at location $d$ where $d$ is the smallest integer $d\in [\dend+k+1:\dend+8\sigma-k+1]$
  such that the $k$-bit subword that ends at $d$ is not in $\Cyc$;
  if no such $d$ exists,  the signature is taken to have length $8 \sigma$ and is the same as $\tail$. 
(Alternatively, we can describe the signature as the shortest prefix of $\tail$
  that contains a $k$-bit subword not in $\Cyc$ that does not use the first $k$ bits; and 
  it is set to $\tail$ if every $k$-bit subword of
  $\tail$ after removing the first $k$ bits lies in $\Cyc$.)
  
We will write $\sig$ to denote the signature string, and we observe that as an immediate consequence of the definition of $\sig$ given above, we have 
  $2k\le |\sig|\le 8\sigma$. We further observe that given the string $\tail$ it is algorithmically straightforward to obtain $\sig$ by following the definition given above.

Given $\sig$, we say that a string $z$ of length at most $15\sigma+1$ is in the \emph{right} form if it can be written as
\begin{equation}\label{eq:rightform}
z=w\circ \sig
\end{equation}
where the leftmost $k$-bit subword in $z$ that is not in $\Cyc$ is the first $k$ bits of $\sig$.
The main motivation behind the definition of the signature is the following crucial lemma:

\begin{lemma}\label{comblemma}
Let $s \in \zo^k$ for some $2\le k\le C$, and let $z$ be a string of length at most $15\sigma+1$
that is in the right form.
For $\by\sim \Del_\delta(z)$, the probability that $|\by|<|z|$ (so at least one deletion occurs)
  and $\by$ is the prefix of a string in the right form is at most $O(\delta)$.
\end{lemma}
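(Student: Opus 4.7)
My plan is to analyze which deletion patterns on $z=w\circ\sig$ yield a right-form prefix and to bound their total probability. I will use that $s$ is primitive (has minimum period exactly $k$, and in particular is non-constant for $k\ge 2$), which is the case by construction in the main algorithm where $s$ is taken to be the shortest generator of the desert. A first structural claim I will prove is that the longest trailing constant-bit run of $\sig$ has length $O(k)$: otherwise, some inner $k$-subword of $\sig$---which must lie in $\Cyc$ by the definition of $\sig$'s endpoint $d$ (and by the cyclicity of the ``interior'' subwords in the $|\sig|=8\sigma$ case)---would be $c^k$ for some bit $c$, forcing $s=c^k$ and contradicting primitivity.

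For single deletions, I claim that the only favorable positions are in (i) the boundary region $[0,k-2]$ of $w$, or (ii) the trailing constant run of $\sig$, together at most $O(k)$ positions, contributing probability $O(k\delta)=O(\delta)$. The heart of the argument is to rule out single deletions at positions $j\in[k-1,|w|-1]$ of $w$: I will show that such a deletion induces a one-position phase shift of $w$'s periodic pattern, and at least one of the $k-1$ straddling $k$-subwords of $\by$ (at positions $j-k+1,\ldots,j-1$) must fail to lie in $\Cyc$. Intuitively, for the straddling subword at position $i$ to equal $\sigma^{a_i}(s)$, its pre- and post-deletion halves demand two different cyclic shifts of $s$ to match simultaneously, and combining the matching conditions across all $k-1$ straddling positions forces $s$ to have a period dividing some $\gcd(\tau,k)<k$, contradicting primitivity. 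Given a non-cyclic straddling subword, the first non-cyclic $k$-subword of $\by$ appears at some position $i^*<|w|-1$, whence $|\by|-i^*>|\sig|$, so the suffix of $\by$ from $i^*$ cannot match any prefix of $\sig$, and $\by$ is not a right-form prefix.

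For multi-deletions, I employ a periodicity-preservation principle: the surviving bits from $w$ form an all-cyclic subsequence in $\by$ iff, between any two consecutive surviving positions, the number of deletions is divisible by $k$. Hence favorable $w$-deletion patterns decompose into a prefix chunk of arbitrary size $a\ge 0$ (deleting positions $0,\ldots,a-1$) plus ``interior chunks'' each of size divisible by $k$ located in $w$'s interior, while favorable $\sig$-deletions form arbitrary subsets of the $O(k)$-length trailing constant run (with the additional constraint that the bridging $k$-subword between $w$'s surviving suffix and $\sig$'s head lies in $\Cyc$). Summing, the prefix + trailing-run contribution is $\sum_{a\ge 1}\delta^a(1-\delta)^{|z|-a}\cdot(1+\delta)^{O(k)}=O(k\delta)=O(\delta)$, and each additional interior chunk contributes a multiplicative factor of at most $O(|z|\delta^k)=O(\sigma\delta^k)=o(1)$ (using $\sigma\delta=n^{-\Omega(\eps)}$ and $k\ge 2$), so summing over the number of interior chunks yields an overall factor of $1+o(1)$ and the total remains $O(\delta)$. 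The main obstacle will be rigorously establishing the ``at least one non-cyclic straddling subword'' claim in the single-deletion step---a combinatorial fact about how primitive periodic strings interact with a unit-position shift---and carefully verifying the multi-deletion classification, in particular that the bridging between the surviving $w$-part of $\by$ and $\sig$ remains admissible.
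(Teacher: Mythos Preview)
Your single–deletion analysis is on the right track (modulo a few unaddressed boundary positions near the $w$/$\sig$ interface, which are harmless since there are only $O(k)$ of them). The genuine gap is the multi–deletion step. Your ``periodicity–preservation principle''—that the surviving bits of $w$ are all-cyclic \emph{iff} every gap between consecutive survivors is a multiple of $k$—is false. Take $s=001011$ (primitive, $k=6$) and $w=s^4$; deleting the six positions $6,8,11,13,15,17$ leaves the surviving subsequence equal to $s^3$, which is all-cyclic and bridges to $\sig$ exactly as in $z$, so this is a favorable pattern; yet the gaps between consecutive survivors are $1,1,0,1,1,1$, none divisible by $6$. Thus your decomposition ``prefix chunk plus interior chunks each of size divisible by $k$'' \emph{undercounts} the favorable deletion sets, and the probability you compute from it is not an upper bound. (All such ``exotic'' favorable patterns in fact have two deletions within $O(k)$ of each other, but you do not prove this, and proving it is essentially the paper's argument.)

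The paper avoids any classification of favorable patterns. Its one key tool is the short claim: if every $k$-subword of a $(2k{+}1)$-bit string lies in $\Cyc$ and you delete its middle bit, the resulting $2k$-bit string contains a $k$-subword not in $\Cyc$ (otherwise the result has period $k$, forcing the original to be constant, contradicting $k\ge 2$). With this in hand, the paper simply carves out (i) deletions within $O(k)$ of the ends of $z$ or of the start of $\sig$ (and, in the short-$\sig$ case, at distance $|\sig|\pm O(k)$ from it), and (ii) the event that two deletions lie within $O(k)$ (or $|\sig|\pm O(k)$) of each other; these together have probability $O(k\delta)+O(\sigma k\delta^2)=O(\delta)$. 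For any remaining deletion set, the smallest deletion produces a fresh non-cyclic $k$-subword disjoint from the one at the head of $\sig$, and the two sit at the wrong relative offset to be the first and last $k$ bits of a single copy of $\sig$; hence $\by$ cannot be a right-form prefix. No enumeration of favorable patterns is needed, and your phase-shift/$\gcd$ machinery is subsumed by the three-line claim above.
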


The high-level idea is that a deletion is likely to create an additional disjoint $k$-bit subword in $\by$ that is not in $\Cyc$, unless a deletion occurs in some $O(k)$ specific places in $z$ or two deletions are $O(k)$ close to each other.
This additional subword will help us argue that $\by$ does not have the right form.

\begin{proof}
We start with the following simple claim:
\begin{claim}\label{trivialclaim}
Let $k\geq2$, and $w=(w_0,\dots,w_{2k})$ be a string of length $2k+1$ such that every $k$-bit subword of $w$ lies in $\Cyc$.
Let $w'$ be the string obtained from $w$ by deleting the middle bit $w_k$.
Then $w'$ must contain a $k$-bit subword that is not in $\Cyc$.
\end{claim}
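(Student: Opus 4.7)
The plan is to argue by contradiction, exploiting the fact that in the setting of the paper $s$ is always chosen to be the shortest string such that the desert is a prefix of $s^{\infty}$; this makes $s$ primitive (i.e., not a proper power of any shorter string) whenever $k \geq 2$. Primitivity is essential for the claim: it fails for $s = 0^k$, where $w = 0^{2k+1}$ satisfies the hypothesis yet every $k$-bit subword of $w' = 0^{2k}$ still lies in $\Cyc = \{0^k\}$.

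The backbone of the proof is the following phase-determination lemma: whenever $s$ is primitive of length $k \geq 2$, any binary string $v$ of length $\ell \geq k$ all of whose $k$-bit subwords lie in $\Cyc$ must take the form $v_i = s_{(j+i) \bmod k}$ for a uniquely determined phase $j \in [0:k-1]$. I would prove this by induction on $\ell$. The base case $\ell = k$ gives existence of $j$ by the definition of $\Cyc$, and uniqueness from the fact that the $k$ cyclic shifts of a primitive $s$ are pairwise distinct. For the inductive step, assuming $v_i = s_{(j+i) \bmod k}$ for $i \in [0:\ell-2]$, the $k$-subword $v_{[\ell-k:\ell-1]}$ lies in $\Cyc$ and its first $k-1$ bits are $s_{(j+\ell-k) \bmod k}, \ldots, s_{(j+\ell-2) \bmod k}$, so it suffices to show these $k-1$ bits extend uniquely to an element of $\Cyc$. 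If two cyclic shifts of $s$ agreed on their first $k-1$ bits, then $s$ would satisfy $s_a = s_{a+d}$ for $k-1$ consecutive indices $a \in \mathbb{Z}/k$ with some $d \not\equiv 0 \pmod k$; chaining the relation along the orbit $a \mapsto a+d$ (each step valid because the orbit has size $k/\gcd(d,k)$ and encounters the one missing index exactly once, at the endpoint) extends the relation to all of $\mathbb{Z}/k$, forcing $s$ to have a period strictly less than $k$ and contradicting primitivity.

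With this lemma the claim follows in a few lines. Applying the lemma to $w$ yields $w_i = s_{(j+i) \bmod k}$ for all $i \in [0:2k]$ and some phase $j$. Suppose for contradiction that every $k$-bit subword of $w'$ also lies in $\Cyc$; applying the lemma to $w'$, and using $w'_{[0:k-1]} = w_{[0:k-1]}$ together with the uniqueness of the phase, gives $w'_i = s_{(j+i) \bmod k}$ for all $i \in [0:2k-1]$. But by construction $w'_i = w_{i+1} = s_{(j+i+1) \bmod k}$ for every $i \in [k:2k-1]$; equating the two expressions yields $s_{(j+i) \bmod k} = s_{(j+i+1) \bmod k}$ on this range, and since $(j+i) \bmod k$ sweeps all of $\mathbb{Z}/k$ as $i$ does, we conclude $s$ is constant, contradicting primitivity for $k \geq 2$.

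The main subtle point is the orbit-chaining step inside the lemma, where one has to verify that a period relation holding on $k-1$ out of $k$ positions of $\mathbb{Z}/k$ propagates to the remaining one. A short case split on $\gcd(d,k)$, carefully tracked to avoid an off-by-one slip in the orbit traversal, resolves it.
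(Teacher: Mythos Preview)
Your proof is correct and follows the same logical path as the paper's: both establish that $w$ and $w'$ have period $k$, combine the two period relations to force $w$ (hence $s$) to be constant, and note this is impossible for $k\ge 2$ under the paper's convention that $s$ is chosen shortest. The paper's version is far terser---it simply asserts $w'_{k+i}=w'_i$ and $w_{k+i}=w_i$ without justification and ends with ``contradicting the hypothesis that $k\ge 2$,'' leaving the non-constancy (primitivity) of $s$ implicit---whereas your phase-determination lemma supplies exactly the missing justification for the period-$k$ step, and you correctly flag that primitivity is what makes the final contradiction go through. So your argument is a more careful rendering of the same proof rather than a genuinely different one.

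One small remark: the bare period-$k$ conclusion (that $v_{i+k}=v_i$ whenever all $k$-subwords of $v$ lie in $\Cyc$) actually holds for arbitrary $s$, not just primitive $s$; the orbit-chaining you sketch already closes the chain at the missing index $a$ and yields $s_a=s_{a+d}$. Primitivity is needed only for the \emph{uniqueness} of the phase in your lemma and for the final ``$s$ constant'' contradiction. This does not affect correctness, but it means the paper's terse assertion of period $k$ is in fact valid even before one invokes primitivity.
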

\begin{proof}
Assume this is not the case. Then $w'_{k+i}=w'_i$ for each $i\in [0:k-1]$.
But this implies that $w_{k+i+1}=w_i$ for each $i\in [0:k-1]$.
On the other hand, $w_{k+i}=w_i$ for each $i\in [0:k]$.
These equations imply that all bits in $w$ are the same, contradicting the hypothesis that 
$k\ge 2$.
\end{proof}

Let $h$ be the location of the beginning of $\sig$ in $z$ (recall that $z$ is in the right form).
Let's start with the case when $\sig$ does not end with a $k$-bit subword that is not in $\Cyc$
  (so $\sig=\tail$ in this case and hence $\sig$ is of length $8 \sigma$).
Given $\by\sim \Del_\delta(z)$, we use $D(\by)$ to denote the set of positions of $z$ that are
  deleted to form $\by$. 
  Recall that our goal is to show that the probability that both (i) $D(\by)$ is nonempty, and 
  (ii) $\by$ is the prefix of a string in the right form, is at most $O(\delta)$.
We consider the following possibilities for how $D(\by)$ can be nonempty:
\begin{flushleft}\begin{enumerate}
\item If $D(\by)$ contains two deletions that are $O(k)$ positions away from each
  other (i.e., there exist $i,i'\in D(\by)$ with $|i-i'|=O(k)$), we can bound the probability of such a trace $\by\sim \Del_\delta(z)$ by 
$O(\sigma)\cdot O(k)\cdot \delta^2=o(\delta)$ given that the length of $z$ is $O(\sigma)$.
\item If $D(\by)$ contains a deletion $i$ such that $i\le O(k)$, $i\ge |z|-1-O(k)$, 
  or $|i-h|\le O(k)$ (i.e., either $i$ is too close to the beginning or the end of $z$ or
  too close to the beginning of $\sig$), one can bound the probability of such a trace $\by\sim \Del_\delta(z)$ by $O(k)\cdot \delta=O(\delta)$.
\item Otherwise, $D(\by)$ is nonempty and does not satisfy any of the two conditions above.
We show that $\by$ cannot be a prefix of a string in the right form. 
Let $i$ be the smallest integer in $D(\by)$.
Given that $i$ does not satisfy any of the conditions above, it follows from
  \Cref{trivialclaim} that a $k$-bit subword around the image of $z_{i-1}$ 
  in $\by$ is not in $\Cyc$, and this $k$-bit subword is disjoint from the image of the first $k$-bits of $\sig$
  which survives in $\by$.
If $i>h$, then for $\by$ to be the prefix of a string in the right form,
  the $\sig$ must start from the $y_h$ but we get a violation of the right form due to the subword around 
  the image of $z_{i-1}$.
If $i<h$, then for $\by$ to be the prefix of a right-form string,
  the $\sig$ must start from the $k$-bit subword around the image of $z_{i-1}$
  but we get a violation due to the subword starting at $y_h$ because the string $z$
  is of length at most $15\sigma+1$ so there is not enough space for $\sig$ to appear
  before $y_h$ (recall that in this case $\sig$ is of length $8 \sigma$).
\end{enumerate}\end{flushleft}
The lemma 
follows in the case that $\sig$ does not end with a $k$-bit subword that is not in $\Cyc$ by combining (1), (2), and (3) above.

Next we consider the case when $\sig$ ends with a $k$-bit subword not in $\Cyc$.
We consider the following possibilities for how $\by\sim \Del_\delta(z)$ can have $D(\by)$ be nonempty:
\begin{flushleft}\begin{enumerate}
\item If $D(\by)$ contains two deletions that are either $O(k)$ or 
  $|\sig|\pm O(k)$ positions away from each
  other (i.e.~there exist $i,i'\in D(y)$ with $|i-i'|=O(k)$ or $|\sig|\pm O(k)$), 
  one can bound the probability of such a trace $\by\sim \Del_\delta(z)$ by 
$O(\sigma)\cdot O(k)\cdot \delta^2=o(\delta).$
\item If $D(\by)$ contains a deletion $i$ such that either $i\le  O(k)$, $i\ge |z|-1-O(k)$,
  $|i-h|\le O(k)$, or $|i-h|=|\sig|\pm O(k)$, we can bound the probability of such a $\by\sim \Del_\delta(z)$ by $O(k)\cdot \delta=O(\delta)$.
  
\item Otherwise, $D(\by)$ is nonempty and does not satisfy any of the two conditions above,
  in which case we show that $\by$ cannot be the prefix of a string in the right form. 
Let $i$ be the smallest integer in $D(\by)$ so there is a $k$-bit subword in
  $\by$ around the image of $z_{i-1}$ that is not in $\Cyc$. 
We consider the following possibilities: 
\begin{enumerate}
\item If $i>h$, then both the first $k$ bits of $\sig$ (which is the leftmost such
  subword) and the $k$-bit subword
  around the image of $z_{i-1}$ are not in $\Cyc$. They are disjoint and the
  gap between them is not of length $|\sig|$.  So $\by$ cannot be the prefix of a string in the right form.
\item If $i<h$ and there is no other deletion in $D(\by)$ between $i$ and $h$,
  then similarly, the first $k$ bits of $\sig$ and the $k$-bit subword around the
  image of $z_{i-1}$ (the leftmost such subword) are not in $\Cyc$.
They are disjoint and the gap between them is not of length $|\sig|$. So $\by$ cannot
  be the prefix of a string in the right form. 
\item If $i<h$ and there is at least one deletion between $i$ and $h$,
  we denote the leftmost such deletion by $j$.
Then we have two subwords not in $\Cyc$ around $i$ and $j$, respectively.
The one around $i$ is the leftmost one and the gap between them is different from $|\sig|$. 
\end{enumerate}
\end{enumerate}\end{flushleft}
This finishes the proof of the lemma.
\end{proof}

\subsubsection{Proof of \Cref{lem:Align} when $k\ge2$} \label{sec:kistwo}




We start with the description of how $\Align$ works for the case when $k\ge 2$:

\begin{itemize}

\item
{\bf Description of $\Align$ for $k \ge 2$:}
Given a coarse estimate $\hat{\beta}$ (such that $|\hat{\beta}-(1-\delta)\dend|\le 2\sigma$), 
  $\sig\in \{0,1\}^{\le 8\sigma}$, and a trace $y$, $\Align$ checks if the restriction of $y$ to the
  interval 
$$J:=\left[\hat{\beta}-3\sigma: \hat{\beta}+12\sigma\right]$$ 
has a prefix in the right form (see \Cref{eq:rightform}), i.e., $y_J$ is of the form
\begin{equation} \label{eq:the-look-of-crowded}
y_J = w \circ \signature\circ v 
\end{equation}
so that the first $k$ bits of $\sigg$ is the leftmost 
  $k$-bit subword of $y_J$ not in $\Cyc$.
If $y_J$ is not of this form 
  $\Align$ returns $\nil$.
If $y_J$ is of this form and $\sigg$ ends at location $L\in [0:15\sigma]$ in $y_J$, 
  $\Align$  
  returns the index of the $(k-1)$-th bit of $\sig$
  (i.e., $\sig_{k-2}$ which intuitively should correspond to $x_\dend$) in $y$ with probability
  \begin{equation}
  	p_L := (1-\delta)^{15\sigma-L}, \label{eq:balance}
  \end{equation}
  and with the remaining probability returns $\nil$. Note that \[(1-\delta)^{15\sigma-L}=1-O(\delta \sigma)=1-\tilde{O}(n^{3\eps/2}),\quad\text{for all $L \in [0:15\sigma]$,}
  \] so \Align only returns \nil with probability $o(1)$ when $y_J$ is of the form \Cref{eq:the-look-of-crowded}. 

\end{itemize}

\noindent {{\bf Discussion.}  The main subtlety in the definition of \Align is the ``discounting probability'' given by \Cref{eq:balance}.  We will see in the proof of \Cref{claimhehe2} that this probability plays an important role in ensuring that the location returned by \Align (conditioned on \Align not returning $\nil$) is sufficiently close in expectation to the correct location.}


\medskip

\noindent\textbf{Correctness.} 
The first property of \Cref{lem:Align} is trivial: $\Align$ 
  always returns an integer in $J$ when it does not return $\nil$, and thus
the property follows from the assumption that $\hat{\beta}$ 
is a coarse estimate of $(1-\delta)\dend$. 
For the second property, 
  consider the 
  event over $\by\sim \Del_\delta(x)$ that the whole subword of $x$ of length $40\sigma$ 
  centered at $\dend$
  survives in $\by$ and $x_{\dend}$ falls inside $[\hat{\beta}-3\sigma:
  \hat{\beta}+3\sigma]$ 
  in $\by$. 
This happens with probability at least $1-O(\delta\sigma)\ge 1-\tilde{O}(n^{-3\eps/2})$
  (where we use the Chernoff bound and the assumption that $|\hat{\beta}-(1-\delta)\dend|\leq 2\sigma$ 
  to show that the probability of $x_{\dend}$ falling
  outside $[\hat{\beta}-3\sigma:
  \hat{\beta}+{3\sigma}]$ is $n^{-\omega(1)}$);  when it 
  happens, $\Align$ running on $\by$ returns $\pos(\by)$ 
  with probability at least $1-O(\delta\sigma)$ (\Cref{eq:balance}). As a result, we have
\begin{equation}\label{secondproperty}
 \Pr_{\by\sim \Del_\delta(x)}\big[\Align(\by)=\pos(\by)\big]\ge 1-\tilde{O}(n^{-3\eps/2}),
\end{equation}
where we use $\Align(\by)$ to denote the output of \Align on input $\by$ here and in the rest of the section for convenience\footnote{Note that  $\Align(y)$ is a random variable in general even
  for a fixed trace $y$}.
In the rest of this subsection, we establish the third property of \Cref{lem:Align}.

We start with some notation for working on the third property.
Given a trace $y$ we let $\Align'(y)$ denote the
  integer returned by $\Align$ minus $\hat{\beta}-3\sigma$ (i.e., the location
   of the $(k-1)$-th bit of $\sig$ in $y_J$), and we define $\Align'(y)$ to be $\nil$
  if $\Align$ returns $\nil$.\footnote{Just like $\Align$, $\Align'(y)$ is a random variable in general
  even for a fixed trace $y$.}
Recalling the definition of $J$, we have that  $\Align'(y)$ always lies in $[0:15\sigma]$ when it is not $\nil$.
After this shift,
  establishing the third property of \Cref{lem:Align} is equivalent to showing that
\begin{equation}\label{eq:target}
\bE_{\by\sim \Del_\delta(x)} \big[\Align'(\by)\hspace{0.04cm}\big|\hspace{0.04cm}\Align'(\by)
  \ne \nil\big]=(1-\delta)\dend -(\hat{\beta}-3\sigma)\pm o(1).
\end{equation}

\usetikzlibrary{decorations.pathreplacing}
\usetikzlibrary{math}
\tikzmath{\x = 30; \y = -5; \h= 10;}
\begin{figure}[!t]
\centering
\begin{tikzpicture}[d/.style={draw=black,thick,font=\scriptsize}, x=1mm, y=1mm, z=1mm]

  \draw[thick](0,\x)--(140,\x) node at (-6,\x)[left]{$x$};
  \draw (0,\x+2) node[anchor=west,align=left,scale=0.90] {$00100100100 \, \cdots \cdots \cdots \cdots \cdots \cdots \cdots 10{\color{red}01{\color{blue} \textbf{1}}001001001001 \cdots \cdots \cdots \cdots 00100100100{\color{blue}\textbf{0}}}10010010 \cdots $ };
  \draw[d](\h+50,\x-2)--(\h+50,\x) node[below=4]{\begin{tabular}{c} $\dend$ \\ \rotatebox[origin=c]{90}{$=$} \\ $\sig_{k-2}$\end{tabular}};
  \draw[d](\h+2,\x-2)--(\h+2,\x) node[below=2]{$\dend-6\sigma$};
  \draw[d](\h+42,\x-2)--(\h+42,\x) node[below=2]{$\dend-\sigma$};
  \draw[d](\h+26,\x-2)--(\h+26,\x) node[below=2]{$\dend-3\sigma$};
  \draw[d][decorate, decoration={brace}, yshift=2ex]  (\h+47,\x) -- node[above=0.4ex] {$\abs{\sig} \le 8\sigma$}  (\h+114,\x);

  \draw[thick](0,\y)--(140,\y) node at (-6,\y)[left]{$\by$};
  \draw (0,\y+2) node[anchor=west,align=left,scale=0.90] {$0100100 \cdots \cdots \cdots \cdots \cdots \cdot \cdots 10{\color{red}01 {\color{blue}\textbf{1}}001001001001 \cdots \cdots \cdots \cdots 00100100100{\color{blue}\textbf{0}}}10010010 \cdots $ };
  \draw[d](\h+46,\y-2)--(\h+46,\y) node[below=2]{$(1-\delta)\dend$};
  \draw[d](\h+40,\y)--(\h+40,\y+2) node[above=1]{$\hat\beta$};

  \fill[fill=red, opacity=0.2](\h+2,\x)--(\h+16,\y)--(\h+46,\x)--(\h+2,\x);
  \draw[d][decorate, decoration={brace}, yshift=2ex]  (\h+2,\x) -- node[above=0.4ex] {{\color{red}$T$}}  (\h+46,\x);

  \fill[fill=green, opacity=0.2](\h+16,\y-3)--(\h+16,\y)--(\h+110,\y)--(\h+110,\y-3)--(\h+16,\y-3);
  \draw[d][decorate, decoration={brace,mirror}, yshift=-4ex]  (\h+30,\y) -- node[below=0.4ex] {$2\sigma$}  (\h+46,\y);
  \draw[d][decorate, decoration={brace,mirror}, yshift=-4ex]  (\h+46,\y) -- node[below=0.4ex] {$2\sigma$}  (\h+62,\y);
  \draw[d](\h+16,\y-2)--(\h+16,\y) node[below=2]{$\hat\beta-3\sigma$};
  \draw[d](\h+110,\y-2)--(\h+110,\y) node[below=2]{$\hat\beta+12\sigma$};

  \fill[fill=blue, opacity=0.2](\h+30,\y)--(\h+50,\x)--(\h+62,\y)--(\h+30,\y);
\end{tikzpicture}
\caption{The interval in $x$ in the base of the red triangle is $T$; the green interval in $\by$ is $J$; and the interval in $\by$ in the base of the blue triangle is $[(1-\delta)\dend-2\sigma:(1-\delta)\dend+2\sigma]$.  The red subword in $x$ is $\sig$.  Note that the first $k$ and the last $k$ bits of $\sig$ do not belong to $\Cyc$ (here $s=001$), and we have $\dend = \sig_{k-2} = \sig_1$, the second bit of $\sig$.}
\end{figure}

Given a trace $y$,  we let $\reach(y)$ denote the position of $x$
  that corresponds to $\smash{y_{\hat{\beta}-3\sigma-1}}$, i.e.,  the last bit
  of $x$ before the interval $J$ is reached (we set $\reach(y)=+\infty$ by default if $y$ never reaches $J$). 
By a Chernoff bound, $\by\sim \Del_\delta(x)$ has $\reach(\by)$ between
  $(\hat{\beta}-3\sigma)/(1-\delta)\pm \sigma/2$ with probability at least $1-n^{-\omega(1)}$.
Let $T$ denote the interval 
$$
T := [\dend-6\sigma:\dend-\sigma/2].$$
Recalling that  $\hat{\beta}$ is a coarse estimate of $(1-\delta)\dend$, we have that
$$
\dend-6\sigma \le (\hat{\beta}-3\sigma)/(1-\delta)-\sigma/2<
(\hat{\beta}-3\sigma)/(1-\delta)+\sigma/2 \le \dend-\sigma/2,
$$
and thus for a random $\by \sim \Del_\delta(x)$, with overwhelmingly high probability $\reach(\by)$ lies in the interval $T$. 
Let $\calT $ be the distribution of $\reach(\by)$
  (where 
  $\by\sim \Del_\delta(x)$) 
  conditioned on $\reach(\by)\in T$ (intuitively, this is a very mild conditioning which discards only a negligible fraction of outcomes of $\reach(\by) \in T$) 
  and let $\calY_t$ be the distribution of $\by\sim\Del_\delta(x)$ conditioned 
  on $\reach(\by)=t$ for some $t\in T$.

The following claim will be used to connect with our target \Cref{eq:target}:
\begin{claim}\label{simplesimple}
$\bE_{\bt\sim \calT} \big[(\hat{\beta}-3\sigma)+\dend-\bt-1\big]
=(1-\delta)\dend \pm o(1).$
\end{claim}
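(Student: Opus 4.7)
The plan is to compute $\bE_{\bt\sim\calT}[\bt]$ explicitly by a negative-binomial-style calculation and then finish with elementary algebra. Unpacking the definition of $\reach$: for $\by\sim\Del_\delta(x)$, if we view the $n$ independent deletion coin flips as a length-$n$ sequence of $\mathrm{Bernoulli}(1-\delta)$ trials (with $1$ meaning ``survives''), then $\reach(\by)$ is precisely the $0$-indexed position of the $(\hat{\beta}-3\sigma)$-th success. If we ignored the truncation to exactly $n$ trials, this would be a (shifted) negative binomial random variable with mean exactly $(\hat{\beta}-3\sigma)/(1-\delta)-1$.

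Next I would show that both (a)~truncating the trial sequence to length $n$ and (b)~conditioning on $\reach(\by)\in T$ each perturb this expectation by at most $o(1)$. For (a), since $\hat{\beta}-3\sigma\le (1-\delta)\dend\le (1-\delta)(3n/4)$, a Chernoff bound yields $\Pr[|\by|<\hat{\beta}-3\sigma]=n^{-\omega(1)}$, and because the truncated value of $\bt$ is at most $n$, the perturbation is at most $n\cdot n^{-\omega(1)}=o(1)$. For (b), the paragraph preceding the claim already observes that $\reach(\by)\in T$ fails with probability $n^{-\omega(1)}$, and the same crude bound applies. Combining, $\bE_{\bt\sim\calT}[\bt]=(\hat{\beta}-3\sigma)/(1-\delta)-1\pm o(1)$.

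Substituting into the target expression gives
\begin{align*}
\bE_{\bt\sim\calT}\bigl[(\hat{\beta}-3\sigma)+\dend-\bt-1\bigr]
&= (\hat{\beta}-3\sigma)+\dend-1-\bE_{\bt\sim\calT}[\bt] \\
&= \dend-(\hat{\beta}-3\sigma)\cdot\frac{\delta}{1-\delta}\pm o(1).
\end{align*}
From \Cref{lem:Coarse-Estimate} we have $\hat{\beta}=(1-\delta)\dend\pm 2\sigma$, and hence $(\hat{\beta}-3\sigma)\cdot\delta/(1-\delta)=\delta\dend\pm O(\sigma\delta)$. With $\delta=n^{-(1/3+\eps)}$ and $\sigma=\lceil\sqrt{\delta n}\log n\rceil$, the error term satisfies $\sigma\delta=\Theta(\delta^{3/2}\sqrt{n}\log n)=\Theta(n^{-3\eps/2}\log n)=o(1)$, so the expectation equals $\dend-\delta\dend\pm o(1)=(1-\delta)\dend\pm o(1)$, as claimed.

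I don't expect a real obstacle here: the only subtlety is bounding the effect of the truncation and the conditioning, both of which amount to handling $n^{-\omega(1)}$-probability events, and verifying that the regime $\delta=n^{-(1/3+\eps)}$ makes $\sigma\delta=o(1)$. The claim is really a routine stepping stone, engineered so that the downstream proof of \Cref{lem:Align} can compare $\bE[\Align'(\by)\mid\Align'(\by)\ne\nil]$ against the clean quantity $(1-\delta)\dend$.
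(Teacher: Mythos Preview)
Your argument is correct, and the one place where it is slightly glossy---bounding the contribution of the event $\reach(\by)\notin T$ to the unconditioned negative-binomial expectation---is easily handled by a sub-exponential tail bound for sums of geometric random variables (the crude ``value at most $n$'' bound you cite applies once you have already restricted to $\reach(\by)<n$, so strictly speaking you also need to control $\bE[\reach(\by)\cdot\mathbf{1}[\reach(\by)\ge n]]$ in the infinite model, which is $n^{-\omega(1)}$ as well).

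The paper takes a different route: instead of computing $\bE_{\bt\sim\calT}[\bt]$ directly, it introduces $\pos(\by)$ as an intermediate quantity. It first observes that $\bE_{\by}[\pos(\by)]=(1-\delta)\dend-\delta$ unconditionally (since $\pos(\by)+1$ is just the number of survivors among $x_{[0:\dend]}$), so conditioning on $\reach(\by)\in T$ gives $(1-\delta)\dend\pm o(1)$. It then computes the conditional expectation $\bE_{\by\sim\calY_t}[\pos(\by)]=(\hat\beta-3\sigma)+(1-\delta)(\dend-t)-1$, uses $\dend-t=O(\sigma)$ and $\delta\sigma=o(1)$ to drop the $(1-\delta)$ factor, and equates the two expressions after averaging over $\bt\sim\calT$. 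Your approach is arguably more direct---you go straight to the negative-binomial mean and finish with the coarse estimate $\hat\beta=(1-\delta)\dend\pm 2\sigma$---while the paper's detour through $\pos(\by)$ keeps the argument phrased in terms of the quantity the surrounding lemma ultimately cares about and avoids explicitly invoking the value of $\hat\beta$. Both arrive at the same $o(1)$ slack via $\sigma\delta=o(1)$.
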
 
\begin{proof}
  We start with the simple observation that $\bE_{\by\sim \Del_\delta(x)}[\pos(\by)+1]=(1-\delta)(\dend+1)$.
Note that $\dend+1$ is the number of bits in $x_{[0:\dend]}$ and 
  $\pos(\by)+1$ is the number of bits in $x_{[0:\dend]}$ that survive in $\by$.  
Given that $\reach(\by)\notin T$ with negligible probability, we have 
\begin{equation}\label{hehe2}
\bE_{\bt\sim \calT,\by\sim \calY_{\bt}} [\pos(\by)]=(1-\delta)\dend\pm o(1).
\end{equation}
Finally, for each $t\in T$, drawing $\by\sim \calY_{t}$ essentially corresponds to starting the deletion 
  process from $x_{t+1}$ and thus, 
  $$\bE_{\by\sim \calY_{t}}\big[\pos(\by)\big]=(\hat{\beta}-3\sigma)+(1-\delta)(\dend-t)-1.$$
Using $\dend-t=O(\sigma)$ as $t\in T$, the expectation above is 
  $(\hat{\beta}-3\sigma)+ \dend-t -1\pm o(1).$
The claim follows by combining this with \Cref{hehe2}.
\end{proof}

Below is our main technical claim:

\begin{claim}\label{claimhehe2}
We have the following two properties for each $t\in T$: 
\begin{align*}
 \bE_{\by\sim \calY_t} \big[\Align'(\by) \hspace{0.04cm}\big|\hspace{0.04cm}  \Align'(\by)\ne\nil\big]
 &=\dend-t-1\pm o(1) \quad\text{and}\\[1ex]
\Pr_{\bt\sim\calT,\by\sim\calY_{\bt}}\big[\bt=t \hspace{0.04cm}\big|\hspace{0.04cm} \Align'(\by)\ne \nil\big]
 &=(1\pm O(\delta))\cdot \Pr_{\bt\sim \calT} \big[\bt=t\big].
\end{align*}
\end{claim}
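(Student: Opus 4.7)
The plan is to decompose the event $\{\Align'(\by) \ne \nil\}$ conditioned on $\bt = t$ into a main event and a small error event. Set $L^\ast := \dend - t - k + |\sig|$, which is the position in $\by_J$ at which $\sig$ ends when no deletion occurs in the ``relevant region'' $z := x_{[t+1:t+1+L^\ast]}$; note that $L^\ast \in [\sigma/2+k,\, 14\sigma-k]$ and $z$ is a right-form string of length $L^\ast+1 \le 15\sigma+1$ (desert bits followed by $\sig$). Let $\by^{(1)}$ denote the portion of $\by_J$ arising from $z$ under the natural coupling of the deletion process. Define the main event $E_{\mathrm{main}}$ to be the event that no bit of $z$ is deleted \emph{and} the $\Align$ discount coin does not flip to $\nil$.

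Under $E_{\mathrm{main}}$, we have $\by^{(1)} = z$, and uniqueness of the right-form prefix of $\by_J$ (which follows from the first $k$ bits of $\sig$ being the leftmost non-$\Cyc$ $k$-subword) implies that $\sig$ ends at position $L^\ast$ in $\by_J$, so $\Align'(\by) = \dend - t - 1$. A direct calculation gives
\[
\Pr[E_{\mathrm{main}} \mid \bt = t] \;=\; (1-\delta)^{L^\ast+1}\cdot p_{L^\ast} \;=\; (1-\delta)^{15\sigma+1},
\]
a quantity independent of $t$ --- this cancellation is precisely the purpose of the discount factor~\eqref{eq:balance}.

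For the error event $E_{\mathrm{err}}$ (that $\Align'(\by)\ne\nil$ but $E_{\mathrm{main}}$ fails, so at least one deletion hits $z$), I will show $\Pr[E_{\mathrm{err}} \mid \bt=t] = O(\delta)$ by distinguishing three subcases by where $\by_J$'s right-form prefix sits relative to $\by^{(1)}$: (a) it extends strictly beyond $\by^{(1)}$; (b) it coincides with $\by^{(1)}$; or (c) it is a strict prefix of $\by^{(1)}$. Cases (a) and (b) both imply that $\by^{(1)}$ is a prefix of a right-form string, so Lemma~\ref{comblemma} applied to $z$ bounds their combined probability by $O(\delta)$. Case (c) requires a shifted copy of $\sig$ inside $\by^{(1)}$ with only desert bits before it; by \Cref{trivialclaim} every deletion in the interior of the desert creates a non-$\Cyc$ $k$-subword, so such shifted $\sig$'s can only arise when the single deletion lies in an $O(k)$-sized boundary-adjacent set or when two deletions are $O(k)$-close, contributing $O(k\delta) + O(\sigma k \delta^2) = O(\delta)$ by the same case analysis used in the proof of Lemma~\ref{comblemma}.

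With $\Pr[E_{\mathrm{err}} \mid \bt=t] = O(\delta)$ in hand, both parts follow quickly. Since $\Align'(\by) - (\dend - t - 1) = 0$ on $E_{\mathrm{main}}$ and $|\Align'(\by) - (\dend - t - 1)| \le 15\sigma$ whenever $\Align'(\by) \ne \nil$, Part~1 follows from
\[
\bigl|\bE_{\by \sim \calY_t}[\Align'(\by) - (\dend - t - 1) \mid \Align'(\by) \ne \nil]\bigr| \;\le\; \frac{15\sigma \cdot O(\delta)}{(1-\delta)^{15\sigma+1}} \;=\; \tilde O(n^{-3\eps/2}) \;=\; o(1).
\]
For Part~2, $\Pr[\Align'(\by) \ne \nil \mid \bt=t] = (1-\delta)^{15\sigma+1}\bigl(1 \pm O(\delta)\bigr)$ uniformly in $t$, so Bayes' rule immediately yields $\Pr[\bt=t \mid \Align'(\by)\ne\nil] = (1\pm O(\delta))\Pr_{\bt \sim \calT}[\bt=t]$. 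The main obstacle is subcase~(c) of the error analysis, because Lemma~\ref{comblemma} addresses $\by^{(1)}$ being a \emph{prefix of} a right-form string but not $\by^{(1)}$ \emph{having} a strict right-form prefix; I will have to recycle the \Cref{trivialclaim}-based case analysis from the proof of that lemma to obtain the $O(\delta)$ bound for this final subcase.
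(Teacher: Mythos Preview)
Your proposal is correct and follows essentially the same route as the paper. Both arguments split $\{\Align'(\by)\ne\nil\}$ (conditioned on $\bt=t$) into the no-deletion ``main'' contribution---where the discount factor $p_L$ is engineered so that the probability is exactly $(1-\delta)^{15\sigma+1}$ regardless of $t$---and an $O(\delta)$ error contribution handled via \Cref{comblemma}; both displayed equalities then follow from the short computations you give, and the paper carries out the identical arithmetic.

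You are in fact slightly more explicit than the paper on one point. The paper simply invokes \Cref{comblemma} for the entire error term, but that lemma literally controls only the event that $\by^{(1)}$ is a \emph{prefix of} a right-form string (your (a) and (b)), not the event that $\by^{(1)}$ \emph{has} a strictly shorter right-form prefix (your (c)). Your plan to handle (c) by reusing the case analysis from the proof of \Cref{comblemma} is sound. One remark on the sketch: the ``bad'' single-deletion positions for (c) are not boundary-adjacent in the naive sense, but they are contained in the $O(k)$-sized set $\{i:|i-h|\in|\sig|\pm O(k)\}$ already carved out in item~(2) of that lemma's second case analysis (this is precisely where a shifted copy of $\sig$ can overlap the true one, since any smaller shift would force $\sig$ to have a period $a\in[1,|\sig|-k-1]$, impossible because $\sig[a{:}a+k-1]$ would then equal the non-$\Cyc$ string $\sig[0{:}k-1]$ while lying in the all-$\Cyc$ middle of $\sig$). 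So the same $O(k\delta)+O(\sigma k\delta^2)=O(\delta)$ accounting goes through.
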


We delay the proof of \Cref{claimhehe2} and use it to prove \Cref{eq:target}. We start with
\begin{align}\label{hehe3}
 \bE_{\by\sim \Del_\delta(x)} \big[\Align'(\by)\hspace{0.04cm}\big|\hspace{0.04cm}\Align'(\by)\ne \nil\big]=
\bE_{\bt\sim \calT,\by\sim\calY_{\bt}} \big[\Align'(\by) \hspace{0.04cm}\big|\hspace{0.04cm}\Align'(\by)\ne \nil\big]\pm o(1),
\end{align}
which follows from the facts that for $\by\sim \Del_\delta(x)$,
  $\Align'(\by)\ne \nil$ with probability at least $1-o(1)$ by \Cref{secondproperty}, and 
  $\reach(\by)\notin T$ with probability at most $n^{-\omega(1)}$ {(so the additional conditioning in the expectation on the RHS has only a negligible effect, changing the expectation by $\pm o(1)$)}.

Next we rewrite the expectation on the right hand side of \Cref{hehe3} as 
\begin{align*}
\sum_{t\in T} &\Pr_{\bt\sim\calT,\by\sim\calY_{\bt}}\big[\bt=t \hspace{0.04cm}\big|\hspace{0.04cm} \Align'(\by)\ne \nil\big]
 \cdot \bE_{\by\sim \calY_t} \big[\Align'(\by) \hspace{0.04cm}\big|\hspace{0.04cm}  \Align'(\by)\ne\nil \big]\\
 &= \sum_{t\in T} \Pr_{\bt\sim\calT,\by\sim\calY_{\bt}}\big[\bt=t \hspace{0.04cm}\big|\hspace{0.04cm} \Align'(\by)\ne \nil\big]
 \cdot  \left(\dend-t -1\pm o(1)\right) \tag{first part of \Cref{claimhehe2}}\\
&=\sum_{t\in T} (1\pm O(\delta))\cdot \Pr_{\bt\sim \calT}\big[\bt=t\big]
  \cdot \left(\dend-t -1\pm o(1)\right) \tag{second part of \Cref{claimhehe2}}\\
&=\bE_{\bt\sim \calT}\big[\dend - \bt-1\big]\pm o(1)\\&=
(1-\delta)\dend-(\hat{\beta}-3\sigma)\pm o(1), \tag{by \Cref{simplesimple}}
\end{align*}
where the third equality uses the fact that for every $t \in T$, the value of $|\dend - t - 1 \pm o(1)|$ is $O(\sigma)$ and $O(\delta) \cdot O(\sigma) = o(1)$. 
This yields \Cref{eq:target} and hence \Cref{lem:Align}, 
as desired. We now turn to \Cref{claimhehe2}.

\begin{proof}[Proof of \Cref{claimhehe2}]
Given a fixed $t\in T$,
we start by showing that 
\begin{equation}\label{probcare}
\Pr_{\by\sim \calY_t}\big[\Align'(\by)\ne \nil\big]=(1-\delta)^{15\sigma+1}\pm O(\delta)
\end{equation}
independent of $t$ (up to accuracy $O(\delta)$).
The probability above can be 
  viewed as corresponding to the following probabilistic experiment: the suffix of $x$ starting at $x_{t+1}$ is fed into the deletion channel, and $\Align$ is run on the $(15\sigma+1)$-bit prefix of the resulting trace (as the 
  restriction $y_J$).
Given that  $\dend-6\sigma\le t\le \dend-\sigma/2$ for all $t \in T$,
  not only $x_\dend$ but the whole $\sig$ lies in the length-$(15\sigma+1)$ subword $z = x_{[t+1:t+15\sigma+1]}$ 
  of $x$ starting at $x_{t+1}$ (since $|\sig|\le 8\sigma$).
Let $h$ 
denote the location of the end of $\sig$ in $z$ 
(equivalently,
  $x_{t+h+1}$ 
  is the location of the end of $\sig$ in $x$).
  
Given $\by\sim \calY_t$, we write $D(\by)\subseteq [t+1:t+h+1]$ 
  to denote the set of locations
  deleted within $x_{t+1},\ldots,x_{t+h+1}$. 
We consider the following cases:
\begin{flushleft}\begin{enumerate}
\item If $D(\by)=\emptyset$ (which happens with probability 
  $(1-\delta)^{h+1}$ 
  over $\by\sim \calY_t$), then $\Align'(y)\ne \nil$ with 
  probability $p_h=(1-\delta)^{15\sigma-h}$ 
  (this is why the $\Align$ procedure includes the ``discounting probability'' in \Cref{eq:balance}). 
So this case contributes exactly $(1-\delta)^{15\sigma+1}$ to the probability in
  \Cref{probcare}.
\item 
  It follows from \Cref{comblemma} (applied on the string $z$) that 
  $D(\by)\ne \emptyset$
    and $\Align(\by)\ne \nil$ with probability 
    at most $O(\delta)$. 
\end{enumerate}\end{flushleft}
\Cref{probcare} follows by combining these two cases. The first part of the claim also follows.
To see this, let $c=O(\delta)$ be the probability (over $\by\sim \calY_t$) that 
$D(\by)\ne \emptyset$ 
  and $\Align'(\by)\ne\nil$.
Given that $\Align'(\by)$ is always $O(\sigma)$, $\dend-t-1=O(\sigma)$
  and $\delta\sigma = o_n(1)$, we have that
\begin{align*}
\bE_{\by\sim \calY_t} \big[\Align'(\by) \hspace{0.04cm}\big|\hspace{0.04cm}  \Align'(\by)\ne\nil\big]
&= \frac{(1-\delta)^{15\sigma+1}}{(1-\delta)^{15\sigma+1}+c}\cdot (\dend-t-1)
\pm \frac{c}{(1-\delta)^{15\sigma+1}+c}\cdot O(\sigma)\\
&=(1\pm O(\delta))\cdot (\dend-t-1)\pm O(\delta\sigma)\\&=\dend-t-1\pm o(1).
\end{align*}
For the second part of the claim, we have
\begin{equation}
 \Pr_{\bt\sim\calT,\by\sim\calY_{\bt}}\big[\bt=t \hspace{0.04cm}\big|\hspace{0.04cm} \Align'(\by)\ne \nil\big]
 =\frac{\Pr_{\bt\sim \calT}\big[\bt=t\big]\cdot \Pr_{\by\sim \calY_t}\big[\Align'(\by)\ne \nil\big]}{\sum_{t'\in T}\Pr_{\bt\sim \calT}\big[\bt=t'\big]\cdot \Pr_{\by\sim \calY_{t'}}\big[\Align'(\by)\ne \nil\big]}.\label{hehehe1}
\end{equation}
Using \Cref{probcare}, the sum in the denominator becomes
\[
\sum_{t'\in T} \Pr_{\bt\sim\calT} \big[\bt=t'\big]
  \cdot \left((1-\delta)^{15\sigma+1}\pm O(\delta)\right)=(1-\delta)^{15\sigma+1}\pm O(\delta).
\]
Using this and \Cref{probcare} in the numerator of \Cref{hehehe1}, and $\delta\sigma\ll 1$, we get
\begin{align*}
\Pr_{\bt\sim\calT,\by\sim\calY_{\bt}}\big[\bt=t \hspace{0.04cm}\big|\hspace{0.04cm} \Align'(\by)\ne \nil\big]
&=\frac{\Pr_{\bt\sim \calT}\big[\bt=t\big]\cdot \left((1-\delta)^{15\sigma+1}\pm O(\delta)\right)}
{(1-\delta)^{15\sigma+1}\pm O(\delta)}\\[0.5ex]&=(1\pm O(\delta))\cdot \Pr_{\bt\sim \calT} \big[\bt=t\big].
\end{align*}
This finishes the proof of \Cref{claimhehe2}, and with it the proof of \Cref{lem:Align} when $k=2.$
\end{proof}

\subsubsection{Proof of \Cref{lem:Align} when $k=1$} \label{sec:kisone}

In the case $k=1$ we have that the string $s$ is the single bit $b$ for some $b \in \zo$.
$\Align$ works in a very simple way in this case:

\begin{itemize}
\item
\noindent {\bf Description of $\Align$ for $k =1$:}
Let $J$ denote the interval 
\[
J:=\left[\hat{\beta}-3\sigma: \hat{\beta}+3\sigma\right].
\]
\Align outputs $\nil$ if the string $y_J$ contains no occurrence of $\overline{b}$; if $y_J$ does contain at least one occurrence of $\overline{b}$ then \Align outputs the location in $J$ of the first occurrence of $\overline{b}$.

\end{itemize}
{
\noindent\textbf{Correctness.} The first two properties of \Cref{lem:Align} follow from arguments that are essentially identical to the corresponding arguments for those properties in the $k \geq 2$ case. 
In fact, we get a stronger version of the second property, namely that for $\by \sim \Del_\delta(x)$, \Align returns exactly $\pos(\by)$ with probability at least $1-O(\delta)$. This is because (by a Chernoff bound)  the probability of $x_{\dend}$ (which is the first $\overline{b}$ character in $x$ after a long run of $b$'s) falling outside of interval $J$ in $\by$ is $n^{-\omega(1)}$, and the probability that $x_\dend$ is deleted in $\by$ is exactly $\delta$; as long as neither of these things happens, \Align will return exactly $\pos(\by).$ For the third property, let ${\cal D}$ denote the distribution of positions that \Align returns conditioned on its not returning $\nil$. By the discussion above, the total variation distance between distribution ${\cal D}$ and the distribution of $\pos(\by)$ for $\by \sim \Del_\delta(x)$ is at most $O(\delta)$. Since the support of ${\cal D}$ is contained in an interval of width $O(\sigma)$, it follows that the expectation of what \Align returns conditioned on its not returning $\nil$ is within $\pm O(\sigma) \cdot O(\delta) = o(1)$ of 
$\bE_{\by\sim \Del_\delta(x)}[\pos(\by)]$. Since 
$\bE_{\by\sim \Del_\delta(x)}[\pos(\by)] = (1-\delta)\dend \pm o(1)$, the third property holds by the triangle inequality.
}

\ignore{
THE REST OF THIS FILE IS ALL COMMENTED OUT
}

\subsection{Proof of \Cref{thm:DET}}
\label{sec:proof-of-det}

\begin{proofof}{\Cref{thm:DET}}
The proof follows from the guarantees in \Cref{lem:Coarse-Estimate} and \Cref{lem:Align}, using standard concentration bounds. First, we have from \Cref{lem:Coarse-Estimate} that the output $(\hat{\beta}, t)$ of $\CE$ satisfies $|\hat{\beta}-(1-\delta)\dend| \leq 2\sigma$ and $t=\tail$ with probability $1-O(1)/n^3$. Assume that this holds for the rest of the proof.

By \Cref{lem:Align}, with probability at least $1-\tilde{O}(n^{-3\eps/2})$ $\Align$ returns an integer $\ell_i$ (and not \nil), and $\ell_i=\pos(\by^i)$,  for each $i \in [N]$. Now, the additive form of the Chernoff bound 
implies that $\ell_i = \pos(\by^i)$ for at least $0.9$ fraction of $i \in [N]$ with probability $1-\exp(-\Omega(N)) \geq 1-1/n^3$, where we choose the hidden constant in $N = O(\log n)$ to be sufficiently large.

It remains to show that $b = \dend$ with probability at least $1-1/n^3$. Recalling step~4 of \DET, let $G \subset [\gamma]$ be the set of indices $i$ for which $h_i = \Align(\hat{\beta}, t, \bz^i) \neq \nil$. Using the same argument as above, we have that $|G| \geq 0.9\gamma$ with probability $1-\exp(-\Omega(\gamma)) = 1-\exp(-\tilde{\Omega}(n^{2/3-\eps}))$. The guarantees in \Cref{lem:Align} imply that $|\E[h_i\,|\,h_i\neq \nil] - (1-\delta)\dend| \leq o(1)$ and that the random variable $h_i$ (conditioned on its not being $\nil$) always lies in an interval of width $O(\sigma)$ for all $i \in [G]$. Moreover, $\{h_i\}_{i \in G}$ are independent random variables as they are functions of independent traces.

Let $\beta = (1/|G|) \sum_{i \in G} h_i$ be the average of $h_i$ over $i \in G$. By Hoeffding's inequality and our choice of $\gamma = O(n^{2/3-\eps} \log^3 n) = O(\sigma^2 \log n)$ (with a sufficiently large hidden constant), we have
\[
\Pr[|\beta-\E[h_i\,|\,h_i\neq \nil]|\geq 0.1] \leq \exp\left(-\Omega\left(\frac{\gamma}{\sigma^2}\right)\right) \leq \exp(-\Omega(\log n)) \leq 1/n^3.
\]
By triangle inequality, $|\beta - (1-\delta)\dend| \leq 0.1$, and so $|\beta/(1-\delta) - \dend| \leq 0.2$, with probability at least $1-1/n^3$.\footnote{Here, we use the fact that $1/(1-\delta) \leq 1.01$.} Hence, the integer $b$ closest to $\beta/(1-\delta)$ is $\dend$, which implies \DET returns $\dend$ with probability at least $1-1/n^2$ (by union bound over all the failure probabilities). Finally, the runtime of \DET is dominated by the final procedure to compute $b$. Since each run of \Align on a trace takes $O(n)$ and \Align is run on $\gamma \leq n^{2/3}$ traces, \DET runs in time $O(n^{5/3})$. This concludes the proof of \Cref{thm:DET}.
\end{proofof}

\begin{flushleft}
\bibliography{allrefs}{}
\bibliographystyle{alpha}
\end{flushleft}

\appendix


\section{Preprocessing the string} \label{sec:ap-preprocess}

In this section, we prove the following simple lemma (recall \Cref{sec:preprocess}) which shows that given traces from $\Del_{\delta}(x)$, we can simulate traces from $\Del_{\delta}(z)$ where 
$z$ is of the form  $x \circ v$ (where $v$ is a known string). The additional property that $z$ has is that any desert ends ``well before the right end of $z$". More precisely, we have the following lemma. 
\begin{lemma}~\label{lem:preprocess}
There is a randomized algorithm {\tt Preprocess} which satisfies the following with probability $1- n^{-\omega(1)}$ (over its internal randomness): 
\begin{enumerate}
\item It outputs a string $v \in \{0,1\}^{n/2}$. 
\item For any unknown string $x \in \{0,1\}^n$, given access to a sample from $\Del_{\delta}(x)$, it can output a sample from $\Del_{\delta}(z)$, where $z= x \circ v$, in linear time. 
\item For any $s \in \{0,1\}^{\leq \deslen}$, the string $v$ does not have a $s$-desert. Consequently, any desert in the string $z=x \circ v$ ends at least $n/2 - (2m+1)$ bits before the end of $z$.
\end{enumerate}
\end{lemma}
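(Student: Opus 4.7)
The plan is to take $\mathtt{Preprocess}$ to be the trivial algorithm that samples $v \in \{0,1\}^{n/2}$ uniformly at random and returns it; all the work is in the analysis. Property~1 is then immediate. For property~2, the simulation is standard: given oracle access to $\Del_\delta(x)$, we draw a trace $\by \sim \Del_\delta(x)$, then independently mark each bit of $v$ as deleted with probability $\delta$, concatenate the surviving bits of $v$ to the end of $\by$, and output the result. Since the deletion channel acts independently across positions, the output has the correct distribution $\Del_\delta(x \circ v)$, and the simulation runs in linear time given one call to the $\Del_\delta(x)$ oracle.

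For the first half of property~3, the plan is to show that the random $v$ has no $s$-desert for any $s \in \{0,1\}^{\le C}$ with probability $1 - n^{-\omega(1)}$ by a direct union bound. For a fixed pattern $s$ of length $k \le C$ and fixed starting position $i \in [0 : n/2 - M]$, the event that $v_{[i:i+M-1]}$ is a prefix of $s^\infty$ determines all $M$ of these bits, so it occurs with probability exactly $2^{-M}$ under the uniform distribution on $v$. Union-bounding over at most $n/2$ starting positions and at most $2^{C+1}$ patterns $s \in \{0,1\}^{\le C}$ gives a failure probability of at most $(n/2) \cdot 2^{C+1} \cdot 2^{-M} = 2^{-\Omega(n^{1/3})} = n^{-\omega(1)}$, using $M = 2m+1 = 2n^{1/3}+1$ and that $C$ is a constant.

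For the consequence, I would suppose for contradiction that $z = x \circ v$ has an $s$-desert $z_{[a:b]}$ (of some length $\ge M$, prefix of $s^\infty$ for some $s \in \{0,1\}^k$ with $k \le C$) with $b \ge n + M - 1$. If $a \ge n$, the desert is entirely a subword of $v$, contradicting the previous paragraph. Otherwise $a \le n-1$, and I consider the suffix $z_{[n:b]}$, which has length $b - n + 1 \ge M$. Setting $r = (n - a) \bmod k$ and $s' = (s_r, s_{r+1 \bmod k}, \ldots, s_{r+k-1 \bmod k})$, the suffix satisfies $z_{n+j} = s_{(n-a+j) \bmod k} = s'_{j \bmod k}$, so $z_{[n:b]}$ is a prefix of $(s')^\infty$ of length $\ge M$; i.e., it is an $s'$-desert with $|s'| = k \le C$. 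But $z_{[n:b]} = v_{[0:b-n]}$, so $v$ would contain a desert, a contradiction. Hence every desert in $z$ must end at position $\le n + M - 2 = n + 2m - 1$, which is at least $(|z| - 1) - (n + 2m - 1) = n/2 - 2m \ge n/2 - (2m+1)$ bits before the right end of $z$. The only mildly delicate step is the cyclic-shift observation that any length-$\ge M$ suffix of an $s$-desert is itself an $s'$-desert for some cyclic shift $s'$ of $s$ with $|s'| = |s| \le C$; once that is in hand, the rest is routine bookkeeping and I do not foresee a real obstacle.
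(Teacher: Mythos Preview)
Your proposal is correct and follows essentially the same approach as the paper: sample $v$ uniformly at random, simulate $\Del_\delta(z)$ by concatenating a trace of $x$ with a trace of $v$, and union-bound over patterns and starting positions to show $v$ is desert-free with probability $1-n^{-\omega(1)}$. The one place you add value is the ``consequence'' clause: the paper simply asserts it, whereas you spell out the cyclic-shift observation (a length-$\ge M$ suffix of an $s$-desert is an $s'$-desert for a cyclic shift $s'$ of $s$), which is exactly the missing step needed to handle deserts in $z$ that straddle the boundary between $x$ and $v$.
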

\begin{proof}
The algorithm chooses the string $v$ to be a random string of length $n/2$. Items~1 and 2  easily follow from the fact that a sample from $\Del_{\delta}(z)$ can be generated by sampling $\by \sim \Del_{\delta}(x)$ and $\by' \sim \Del_{\delta}(v)$
and producing $\by \circ \by'$. 

It remains to show Item~3. Note that it suffices to show that with probability $1- n^{-\omega(1)}$, $v$ does not have a $s$-desert. It then follows that any $s$-desert in $z$ ends at least $n/2 - (2m+1)$ bits before the end of $z$.

To prove this, observe that the number of strings  of length $\leq \deslen$ is at most $2^{\deslen+1}$. Now, let $s$ be any fixed string of length $\leq \deslen$. For a randomly chosen string $v$, the probability that there is an $s$-desert is easily seen to be at most $2^{-(2m+1)} \cdot n$. This is because (i) there are at most $n$ potential starting points for the $s$-desert and (ii) with a fixed starting point, the probability of a $s$-desert is at most $2^{-(2m+1)}$. 

Thus, by a union bound, the total probability of any $s$-desert in $v$ is bounded by $2^{\deslen+1} \cdot 2^{-(2m+1)} \cdot n = n^{-\omega(1)}$. This finishes the proof. 
\end{proof}

\end{document}